\newcommand{\chong}[1]{{\color{blue} {\small [Chong: #1]}}}
\newtheorem{theorem}{Theorem}[section]
\newtheorem{lemma}[theorem]{Lemma}
\newtheorem{proposition}[theorem]{Proposition}
\newtheorem{corollary}[theorem]{Corollary}
\newtheorem{definition}[theorem]{Definition}
\newtheorem{remark}[theorem]{Remark}
\newtheorem{claim}[theorem]{Claim}
\newtheorem{question}[theorem]{Question}
\DeclareMathOperator{\capa}{cap}
\begin{document}
\title{Singleton-type bounds for list-decoding and list-recovery, and related results}
\author[$*$]{Eitan Goldberg}
\author[$\dagger$]{Chong Shangguan}
\author[$*$]{Itzhak Tamo}
\affil[$*$]{\footnotesize Department of Electrical Engineering–Systems, Tel-Aviv University, Tel-Aviv 39040, Israel}
\affil[$\dagger$]{Research Center for Mathematics and Interdisciplinary Sciences, Shandong University, Qingdao 266237, China}
\affil[$*\dagger$]{\footnotesize Emails: eitang1@mail.tau.ac.il,  theoreming@163.com, zactamo@gmail.com}
\date{}
\maketitle

\begin{abstract}
List-decoding and list-recovery are important generalizations of unique decoding that received considerable attention over the years. However, the optimal trade-off among list-decoding (resp. list-recovery) radius, list size, and the code rate are not fully understood in both problems. This paper takes a step towards this direction when the list size is a given constant and the alphabet size is large (as a function of the code length). We prove a new Singleton-type upper bound for list-decodable codes, which improves upon the previously known bound by roughly a factor of $1/L$, where $L$ is the list size. We also prove a Singleton-type upper bound for list-recoverable codes, which is to the best of our knowledge, the first such bound for list-recovery. We apply these results to obtain new lower bounds that are optimal up to a multiplicative constant on the list size for list-decodable and list-recoverable codes with rates approaching  capacity.
%Our lower bounds are tight up to a constant factor, as shown by uniformly chosen random codes.

Moreover, we show that list-decodable \emph{nonlinear} codes can strictly outperform list-decodable linear codes. More precisely, we show that there is a gap for a wide range of parameters, which grows fast with  the alphabet size, between the size of the largest list-decodable nonlinear code and the size of the largest  list-decodable linear codes. This is achieved  by a novel connection between  list-decoding and the notion of  sparse hypergraphs in extremal combinatorics. We remark that such a gap is not known to  exist in the problem of unique decoding.

Lastly, we show that list-decodability or recoverability of codes implies in some sense good unique decodability.
%in different senses both for linear and general codes.
%although for some parameters better lower bounds for nonlinear codes are known. 

%thelinearor of codes in general and of Reed-Solomon (RS) codes in particular is not fully understood. We prove new relevant combinatorial bounds that generalize the Singelton bound for these problems and prove a difference in behavior for linear and general codes in achieving these bounds for list size of $2$. We also use the new size bounds to place an $\Omega_r(1/\epsilon)$ lower bound for the list size of codes of rate $1-r-\epsilon$ that are list-decodable or list-recoverable, finally we prove a connection between list-decoding or list-recovery and unique decoding of codes.
\end{abstract}

\section{Introduction}
As a generalization of unique decoding, the notion of list-decoding was introduced independently by Elias \cite{elias} and Wozencraft \cite{wozencraft} in the 1950s. In list-decoding, given a corrupted codeword, one can output a list of possible codewords, in contrast to unique decoding, where the output is at most one codeword. The advantage of list-decoding is that it can handle more adversarial errors than unique decoding. %Loosely speaking, a code of rate $R$ and length $n$ can correct at most $\frac{1-R}{2}n$ adversarial errors in unique decoding, while there exist codes that can correct at most $(1-R-o(1))n$ adversarial errors in list-decoding, where $o(1)$ tends to zero as $n$ tends to infinity.
List-recovery is a further generalization of list-decoding and was initially used as an intermediate step in the study of list-decoding and unique decoding (see \cite{guruswami2001expander,guruswami2002near,guruswami2003linear,guruswami2004efficiently} for example). Over the years, list-decoding and list-recovery have found many applications in information theory (see \cite{rudolf,Blinovski86,Blinovsky1997,Elias-survey91} for example) and theoretical computer science (see \cite{cai1999hardness,guruswami2009unbalanced,LP20,SIVAKUMAR1999270,STV01} for example).
%where a special interest is put on the list-decoding of Reed-Solomon codes (see \cite{cheng2007list,ferber2020list,guo2020improved,guruswami1998improved,guruswami2013linear,goldberg2021list,kopparty2018improved,shangguan2019combinatorial} for example).

Although extensively studied, many combinatorial properties of list-decoding and list-recovery are still far from being well-understood. In particular, the optimal trade-off between the radius of list-decoding/recovery, the list size, and the code rate are not known in both problems. In this paper, we take a step in this direction and consider this trade-off when the list size is constant and the alphabet size is large (as a function of the code length). 

To move forward let us introduce some notations and definitions. For a positive integer $q$, let $[q] =\{1,\ldots,q\}$. The {\it Hamming distance} $d(x,y)$ between two vectors $x,y\in [q]^n$ is the number of coordinates where they differ, namely, for $x,y\in[q]^n$, let  $d(x,y)=|\{i\in[n]:x_i\neq y_i\}|$. For an integer $1\le t\le n$ and a vector $v\in[q]^n$, let $B_t(v)$ denote the Hamming ball of radius $t$ centered at $v\in [q]^n$, which consists of all vectors in $[q]^n$ with Hamming distance at most $t$ from $v$.
A {\it code} $C$ of block length $n$ over an alphabet of size $q$ is a subset $C\subseteq [q]^n$, whose vectors are called codewords. The {\it rate} of $C$ is defined to be $R(C):=\log_q(|C|)/n$, and the {\it minimum distance} $d(C)$ of $C$ is defined to be the minimal Hamming distance between all pairs of distinct codewords in $C$, that is, $d(C)=\min\{d(x,y):x,y\in C,~x\neq y\}$. For simplicity, when $C$ is understood from the context, we will drop the dependencies of $C$ in $R(C)$ and $d(C)$ and just write $R$ and $d$. %We will also use $\delta:=d(C)/n$ to denote the {\it relative distance} of $C$. 

\paragraph{Unique decoding.} It is well-known and easy to see that for any code with minimum distance $d$, a Hamming ball of radius $\lfloor\frac{d-1}{2}\rfloor$ centered at any vector in $[q]^n$ can contain at most one codeword of the code. This implies that given a received codeword with at most $\lfloor\frac{d-1}{2}\rfloor$ corrupted coordinates, it is possible to decode it and output the correct (transmitted) codeword, simply by outputting the unique codeword in the Hamming ball of radius $\lfloor\frac{d-1}{2}\rfloor$ around it.
%if one is required to output a single codeword for unique decoding, then it suffices to find the codeword that is closest to this received vector. 
In the other direction, if one is requested to  correct any fraction $r$ of corrupted coordinates and output a unique codeword, the code must have a minimum distance of at least $2rn+1$.
%For codes with minimum distance $d$ the hamming balls of radius $\lfloor\frac{d-1}{2}\rfloor$ centered on any vector $v\in [q]^n$ have at most one codeword in them, thus $\lfloor\frac{d-1}{2}\rfloor$ errors in transmission can be corrected by a decoder.
The classical Singleton bound provides a bound on  the parameters of a code and shows that they must satisfy $d+Rn\leq n+1$. 
Then, it follows that for unique decoding, one must have $r\le (1-R)/2$ as $n$ tends to infinity.
Codes such as Reed-Solomon (RS) codes that attain this bound with equality are called MDS codes. 
%, and the equality can be achieved by Reed-Solomon codes.
%In other words, it indicates that in unique decoding one can successfully find the correct transmitted codeword if the fraction $r$ of corrupted coordinates satisfies $R\le 1-2r$. 
%Looking at this from a point of view focused on the fraction of errors that can be corrected, it states that if a fraction of $r$ errors can be fixed then 
%$r\le\frac{1-R}{2}$.

\paragraph{List-decoding and list-recovery.} Next, we will give the formal definitions of list-decoding and list-recovery, which are both a generalization of unique decoding.

\begin{definition}
A code $C\subseteq [q]^n$ is an {\it $(r,L)$ list-decodable} for $r\in (0,1),~L\in \mathbb{N}$ if $|B_{rn}(v)\cap C|\leq L$ for all $v\in [q]^n$, where $r$ and $L$ are called the {\it list-decoding radius} and the {\it list size} of the code, respectively.
\end{definition}

For positive integers $\ell$ ad $q$ let $\binom{[q]}{\leq \ell}$ be the family of subsets of $[q]$ of size at most $\ell$, i.e., $\binom{[q]}{\leq \ell}=\{S\subseteq [q]: |S|\leq \ell\}$. 
\begin{definition}
A code $C\subseteq [q]^n$ is an $(r,\ell,L)$ list-recoverable for $r\in (0,1)$, and $\ell,L\in \mathbb{N}$ if for any sequence of lists $S_1,\ldots,S_n\in \binom{[q]}{\leq \ell}$ the size of the set $\{c\in C:c_i\notin S_i\text{ for at most } rn \text{ coordinates}\}$ is at most $L$. Similarly, $r$ and $L$ are called the {\it list-recovery radius} and the {\it list size}, respectively.
\end{definition}

Clearly, one can easily see that unique decoding is a special case of list-decoding for $L=1$, and list-decoding is a special case of list-recovery for $\ell=1$. 

Given parameters $q$ and $r$, the list-decoding capacity $\capa_{LD}$ is the supremum over all rates  of  $q-$ary $(r,L)$ list-decodable codes  with $L$ that is at most polynomial in $n$. 
%all codes' rate so that for all $\epsilon>0$ and all sufficiently large $n>0$, there are $q$-ary codes of length $n$ and rate at least $\capa_{LD}-\epsilon$ that are $(r,L)$ list-decodable with $L\le poly(n)$. 
Similarly, given $q,r$ and $\ell$, the list-recovery capacity $\capa_{LR}$ is the 
supremum over all rates  of  $q-$ary $(r,\ell,L)$ list-recoverable codes with $L$ that is at most polynomial in $n$.
%largest rate so that for all $\epsilon>0$ and all sufficiently large $n>0$, there are $q$-ary codes of length $n$ and rate at least $\capa_{LR}-\epsilon$ that are $(r,\ell,L)$ list-recoverable with $L\le poly(n)$. 
By the list-decoding and list-recovery capacity theorems it is known  that $\capa_{LD}=1-h_q(r)$ for  $r\in[0,1-\frac{1}{q}]$ (see, e.g.,  \cite[Theorem 7.4.1]{guruswami2019essential}) and $\capa_{LR}=1-h_{q/\ell}(r)-\log_q(\ell)$ for  $r\in[0,1-\frac{\ell}{q}]$ (see, e.g.,  \cite{RudraW17}), where $h_q(x)$ is the $q$-ary entropy function defined in  \eqref{entropy}. 
Moreover, with high probability, a random code with rate $\capa_{LR}-\epsilon$ for $\epsilon>0$  is list-recoverable with list size $L=O(\ell/\epsilon)$, and  a similar result holds for list-decoding, simply by setting $\ell=1$.

Next, we describe our main results and compare them to the previously known results. 
%
%Our main results, as well as their comparison with previously 
%some known results, are summarized in the next subsection.
%\textcolor{green}{The very new paper \cite{roth2021higherorder} contains results overlapping some of our results, exactly which once will be specified.}
\subsection{Summary of main results}

%In this subsection we will summarize our main results. 
In the results reported below, we primarily  consider the  case of constant  list size $L$, independent of $n$. Moreover, similar to the classical Singleton bound, our bounds (results) behave well,  only when $q$ is sufficiently large as  a function of $n$. We begin with the first set of results that improve the Singleton-type bounds for list-decoding and recovery. 

\paragraph{Singleton-type upper bounds for list-decoding and list-recovery.} Shangguan and Tamo \cite{shangguan2019combinatorial} proved the following generalization of the Singleton bound to list-decoding. 

\begin{theorem}[Theorem 1.2 in \cite{shangguan2019combinatorial}] \label{thm:old-singelton}
For integers $q\ge 2,~L\ge 1$ and  $r\in [0,\frac{L}{L+1}]$ with $rn\in\mathbb{N}$, 
every $(r,L)$ list-decodable code $C\subseteq [q]^n$ has size at most $Lq^{n-\lfloor \frac{L+1}{L}rn\rfloor}$.
\end{theorem}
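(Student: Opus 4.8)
The plan is to prove the contrapositive. I would set $s := \lfloor \frac{L+1}{L}rn\rfloor$; because $r\le \frac{L}{L+1}$ we have $\frac{L+1}{L}rn\le n$, so $0\le s\le n$ and $q^{n-s}$ is well-defined. The goal then becomes: show that any code $C\subseteq[q]^n$ with $|C|\ge Lq^{n-s}+1$ is \emph{not} $(r,L)$ list-decodable, i.e.\ produce a vector $v\in[q]^n$ with $|B_{rn}(v)\cap C|\ge L+1$.

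The first step is a pigeonhole argument on a coordinate projection. Projecting $C$ onto the first $n-s$ coordinates yields an image of size at most $q^{n-s}$, and since $|C|/q^{n-s}>L$ some fibre contains $L+1$ codewords $c_0,\dots,c_L$ that agree on each of the first $n-s$ coordinates; they can therefore differ only within the set $T$ of the remaining $s$ coordinates.

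The second step is to assemble a single centre that is $rn$-close to all of $c_0,\dots,c_L$. I would let $v$ copy their common value on the first $n-s$ coordinates, and partition $T$ into $L+1$ blocks $G_0,\dots,G_L$ each of size at least $s-rn$, setting $v$ equal to $c_i$ on $G_i$. Such a partition exists exactly when $(L+1)(s-rn)\le s$, which rearranges to $s\le \frac{L+1}{L}rn$ and hence holds by the choice of $s$ (note also $s\ge rn$, so $s-rn\ge 0$). Then $v$ agrees with $c_i$ on the first $n-s$ coordinates and on $G_i$, so $d(v,c_i)\le s-|G_i|\le rn$ for every $i$; thus $c_0,\dots,c_L\in B_{rn}(v)$, contradicting $(r,L)$ list-decodability.

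Since the argument is short, there is no single deep obstacle; the one place that needs care is the parameter bookkeeping in the second step — $s$ must be large enough for the projection in step one to still force $L+1$ collisions, yet small enough that $T$ splits into $L+1$ blocks of size $\ge s-rn$ — and it is precisely the hypotheses ($r\le\frac{L}{L+1}$ giving $s\le n$, and the floor definition of $s$ giving $(L+1)(s-rn)\le s$) that make both constraints hold. I would also verify the degenerate regime $rn<L$, where $s=rn$ and some $G_i$ may be empty (the bound $d(v,c_i)\le s=rn$ still goes through), and double-check the ceiling manipulation in the pigeonhole count.
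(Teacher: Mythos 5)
Your proof is correct, and it is essentially the standard projection-plus-pigeonhole argument for this bound; the paper itself cites Theorem~\ref{thm:old-singelton} from \cite{shangguan2019combinatorial} without reproving it. The parameter bookkeeping is right: $s:=\lfloor\frac{L+1}{L}rn\rfloor$ satisfies $rn\le s\le n$, and the inequality $(L+1)(s-rn)\le s$ is exactly $s\le\frac{L+1}{L}rn$, which the floor guarantees, so the partition of the last $s$ coordinates into blocks $G_i$ of size at least $s-rn$ exists and yields $d(v,c_i)\le s-|G_i|\le rn$. For reference, the paper's own Theorem~\ref{thm:new-o(1)-singelton} uses the same skeleton but projects onto one fewer coordinate ($m=s+1$): there the naive block-partition no longer fits, so a multigraph degree argument is inserted to locate $L+1$ colliding codewords whose coordinate agreements are concentrated enough to still build a covering center, which is what improves the leading constant from $L$ to $1+o(1)$.
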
 
It was also observed that one could prove a tighter upper bound for linear codes, as follows. %, as stated below.
\begin{proposition}[\cite{shangguan2019combinatorial}]\label{prop:old-singelton-linear} 
For integers $q>L\geq 1$, if $q$ is a prime power and $C\subseteq\mathbb{F}_q^n$ is a linear $(r,L)$  list-decodable code, then $|C|\le q^{n-\lfloor \frac{L+1}{L}rn\rfloor}$.
\end{proposition}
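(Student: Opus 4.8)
The plan is to improve on the general bound of \Cref{thm:old-singelton} by exploiting linearity, specifically the fact that a linear code is a vector space, so its intersection with certain subspaces behaves well under dimension counting. The key observation is that in a linear code, if two codewords agree on a set of coordinates, their difference vanishes there, and the set of codewords vanishing on a prescribed set of coordinates is itself a subcode (a linear subspace). So instead of the counting/packing argument that underlies the nonlinear bound, one can set up a clean dimension inequality.

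Concretely, write $t = \lfloor \frac{L+1}{L} rn \rfloor$, so we want to show $\dim_{\mathbb{F}_q} C \le n - t$. Suppose toward a contradiction that $\dim C \ge n - t + 1$. First I would pass to the projection (or equivalently intersect with a coordinate subspace): consider the set $T$ consisting of the first, say, $t$ coordinates; since $\dim C \ge n-t+1 > n - |T|$, the projection of $C$ onto the complement of $T$ is not injective, hence there is a nonzero codeword $c_0 \in C$ supported outside $T$, i.e. $c_0$ is zero on at least $t$ coordinates, so $\wt(c_0) \le n - t$. But the crux is not merely to produce one low-weight codeword — the standard Singleton argument already gives that $d(C) \le n - \dim C + 1$ — rather, we need to produce many codewords that cluster inside a single Hamming ball of radius $rn$, contradicting $L$-list-decodability. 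The natural candidate ball is $B_{rn}(\mathbf{0})$ (or a coset thereof), and the natural candidates are low-weight codewords.

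So the main step is: if $\dim C = n - t + 1$ with $t = \lfloor \frac{L+1}{L} rn\rfloor$, then $B_{rn}(\mathbf{0})$ contains at least $L+1$ codewords of $C$. I would argue this iteratively. Having found codewords $c_0, c_1, \dots, c_{j-1}$ each of weight at most $rn$ (so all lying in $B_{rn}(\mathbf 0)$ — wait, weight $\le rn$ means distance $\le rn$ from $\mathbf 0$, good), let $U$ be the union of their supports, so $|U| \le j \cdot rn$. As long as $\dim C > |U|$, the projection of $C$ away from $U$ is non-injective, producing a new \emph{nonzero} codeword $c_j$ vanishing on $U$; then $c_j$ together with each previously found codeword have disjoint supports, and a short averaging/pigeonhole argument over this "independent" family forces one of them — or a suitable combination — to have weight at most $rn$. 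Here the fraction $\frac{L}{L+1}$ enters: with $L+1$ pairwise-support-disjoint nonzero vectors of total support $\le n$, the minimum weight among them is at most $\frac{n}{L+1}$, and the constraint $r \le \frac{L}{L+1}$ together with $\dim C = n - t+1 \ge n - \frac{L+1}{L}rn + 1 > (L+1)rn$ (using $r \le \frac{L}{L+1}$ to ensure the chain of projections can be carried out $L+1$ times) is exactly what makes the iteration close.

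I expect the main obstacle to be the bookkeeping in the iteration: verifying that the dimension stays strictly larger than $|U| \le j\cdot rn$ through all $j = 0, 1, \dots, L$ steps, which requires $n - t + 1 > L \cdot rn$, i.e. $\lfloor \frac{L+1}{L}rn\rfloor \le n - Lrn$ — this should follow from $r \le \frac{L}{L+1}$ but the floor function and the integrality of $rn$ need to be handled carefully. A cleaner alternative, which I would try first, is the direct subspace-intersection form: let $V$ be the coordinate subspace of vectors supported on the first $t$ coordinates; if $\dim C \ge n-t+1$ then $\dim(C \cap V^{\perp\text{-complement}})$... — more simply, $C$ meets the subspace of vectors vanishing on any chosen $n-t$ coordinates, and one chooses these coordinate sets greedily to peel off $L+1$ disjoint low-weight codewords as above. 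Either way the heart of the matter is the same pigeonhole-on-disjoint-supports step, and the linear structure is what lets the "there exists a nonzero codeword vanishing here" claim be essentially free, whereas the nonlinear bound of \Cref{thm:old-singelton} pays an extra factor of $L$ precisely because it cannot invoke this.
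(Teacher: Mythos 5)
Your plan has a genuine gap that I don't think can be closed. You commit to showing that the origin-centered ball $B_{rn}(\mathbf 0)$ contains $L+1$ codewords, i.e.\ that the code has at least $L$ nonzero codewords of weight at most $rn$. That is strictly stronger than what is needed to contradict $(r,L)$ list-decodability, and it simply fails for some codes of the critical dimension $n-t+1$. For $L=1$ take an $[n,\,n-2rn+1]$ MDS (e.g.\ Reed--Solomon) code: its minimum distance is exactly $2rn$, so $B_{rn}(\mathbf 0)\cap C=\{\mathbf 0\}$, yet the code is \emph{not} $(r,1)$ list-decodable --- two codewords $c\ne c'$ at distance $2rn$ both lie in $B_{rn}(y)$ where $y$ agrees with $c$ on $rn$ of the positions where $c,c'$ disagree and with $c'$ on the remaining $rn$. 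So the center $y$ must in general be a carefully constructed hybrid vector, not $\mathbf 0$ and not a codeword; an iteration aimed solely at producing low-weight codewords cannot recover this. The bookkeeping does not close either: you need $\dim C>|U|$ for $|U|$ up to roughly $L\cdot rn$, but $\dim C = n-\lfloor\tfrac{L+1}{L}rn\rfloor+1 > L\cdot rn$ does not follow from $r<\tfrac{L}{L+1}$, and your stated inequality ``$\dim C>(L+1)rn$'' is likewise unjustified. (There are also two sign slips: a nonzero codeword in the kernel of the projection away from $T$ is supported \emph{on} $T$, not outside it; and projecting away from $U$ is non-injective when $\dim C>n-|U|$, not $\dim C>|U|$ --- but those are secondary.)

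The way linearity is actually used is simpler and does not require low-weight codewords at all. Set $m=\lfloor\tfrac{L+1}{L}rn\rfloor$. If $\dim C\ge n-m+1$, the kernel of the projection of $C$ onto the coordinates $[m+1,n]$ is a nonzero subspace of $C$, hence has at least $q>L$ elements; pick $L+1$ of them, $c^1,\dots,c^{L+1}$ (one of which may be $\mathbf 0$), all agreeing on $[m+1,n]$. Partition $[m]$ into sets $P_1,\dots,P_{L+1}$ each of size at least $m-rn=\lfloor rn/L\rfloor$, which is possible since $(L+1)\lfloor rn/L\rfloor\le m$, and define $y$ by $y_{P_i}=c^i_{P_i}$ and $y_{[m+1,n]}=c^1_{[m+1,n]}$. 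Then $d(y,c^i)\le m-|P_i|\le rn$ for every $i$, so $B_{rn}(y)$ traps $L+1$ codewords, a contradiction. This hybrid-center construction is the same one used to prove Theorem~\ref{thm:old-singelton}; linearity only replaces the pigeonhole count $|C|>Lq^{n-m}$ (the source of the extra factor $L$) with the observation that a nontrivial subspace already has $q>L$ elements.
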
 
\cite{shangguan2019combinatorial} also showed that the bound of Proposition \ref{prop:old-singelton-linear} is tight by showing that certain RS codes attain it with equality,  when $L=2,3$, $L\mid rn$, and $q$ is sufficiently large as a function of $n$. Hence, for this set of parameters 
%It follows that for general $r,L$ and $n$ 
the largest size of an  $(r,L)$ list-decodable code $C$ (not necessarily linear) satisfies $$q^{n-\lfloor\frac{L+1}{L}rn\rfloor}\le |C|\le Lq^{n-\lfloor\frac{L+1}{L}rn\rfloor},$$
and  the construction in \cite{shangguan2019combinatorial} is optimal up to a constant factor for $L=2,3$. We conjecture that the lower bound also holds for any $L, r$ and $n$ such  that $L\mid rn$, and that   it can be achieved by RS codes, see \cite{shangguan2019combinatorial}. 
%where we omitted the $\lfloor\cdot\rfloor$ symbols as we assumed that $L\mid rn$. 
Narrowing and even closing the gap between  the conjectured lower bound and the upper bound  is an interesting open question.
%, and also whether nonlinear codes improve on the current lower bound. 
%to know whether a much better construction exists if we allow the code to be nonlinear.

We improve the upper bounds in both \cref{thm:old-singelton} and \cref{prop:old-singelton-linear}, as detailed below.  

\begin{itemize}
\item %In \cref{subsec:gen-upp-bd} we show that it is unlikely to be true. 
In Section \ref{subsec:gen-upp-bd} we show that the factor $L$ in the upper bound of \cref{thm:old-singelton} can be replaced by $(1+o(1))$, where $o(1)$ tends to zero as $n$ tends to infinity. Thus, the new bound has the form  $(1+o(1))q^{n-\lfloor\frac{L+1}{L}rn\rfloor}$, which implies that 
%the is not tight, and can be replaced by $1+o(1)$ when $n$ is sufficiently large compared with $r$ and $L$ . This implies 
that the constructions given by \cite{shangguan2019combinatorial} is asymptotically optimal for $L=2,3$ and $L\mid rn$, even among nonlinear codes. See \cref{cor:singelton-lim-o(1)} for the formal statement.

\item  As already mentioned, Proposition \ref{prop:old-singelton-linear} is tight if $L$ divides $rn$ and $L=2,3$, and it is believed to be tight for any $L$ as long as  $L$ divides $rn$. In Section \ref{subsec:stronger-linear-bound}, we improve the upper bound exactly when this does not hold. More precisely,  we show that for sufficiently large $n$ compared to $r$ and $L$, every linear $(r,L)$ list-decodable code has dimension at most $n-\lceil\frac{L+1}{L}rn\rceil$ (see \cref{prop:linear-bound} for the formal statement). Hence, if $L\nmid rn$, the bound on the dimension is improved by one, compared to \cref{prop:old-singelton-linear}. This %relatively minor
improvement later enables us to separate linear and nonlinear codes by showing that there are nonlinear  $(r, L)$ list-decodable  codes whose size exceeds the size of any linear $(r, L)$ list-decodable  code.
%can have much more codewords than the largest linear codes can have in list-decoding.

\end{itemize}
It is known that $(r,L) $ list-decodable codes with alphabet size $q$ and rates approaching $1-h_q(r)(1+\frac{1}{L})$ exist (see Theorem 5.5 in \cite{guruswami2004list} and \cite{elias1991error}). This implies that for   alphabet of size at least  $q\ge 2^{\Omega(1/\epsilon)}$, there exist  $(r,L)$ list-decodable codes with rates approaching $1-\frac{L+1}{L}r$, as $h_q(r)$ approaches $r$ when $q$ is sufficiently large. This in turn implies that the Singleton-type bound for list decoding (Theorem \ref{thm:old-singelton}) and its improvements obtained in this paper provide an  asymptotically tight bound on the rate as $n$ tends to infinity. However, for a fixed $n$, it was unclear whether these Singleton-type bounds are tight. In Section \ref{sec:construcion} we show that this is indeed true,  by showing the existence of  \emph{nonlinear} codes with ``dimension'' $n-\frac{L+1}{L}rn-o(1)$, where the $o(1)$ term tends to zero for a fixed $n$ and $q$ tends to infinity % the above mentioned code of rate approaching $1-\frac{L+1}{L}r$ 
(see Section \ref{sec:construcion} for more details).

%on the rate does not imply by itself that the maximal dimension in Theorem \ref{thm:old-singelton} can  be achieved for any fixed $n$. In Section \ref{sec:construcion} we are able to prove the existence of codes with a dimension that is very close to the maxium given by Theorem \ref{thm:old-singelton} for any given $n$ and large enough $q$.

%\eitan{maybe this paragraph should be in a different part of the introduction (I comment here in that part). Originally I wrote it here as an argument that the bounds we introduced in this part of the introduction are actually good, maybe it fits more the part where we introduce our construction of bound achieving codes to show the difference between linear and nonlinear codes }

\begin{remark}
Recently, Roth \cite[Theorems 4,5]{roth2021higherorder} independently  proved the same  bound as in \cref{prop:linear-bound}, but under different assumptions. Furthermore, the bounds in \cite{roth2021higherorder} are stated as  bounds on the list-decoding radius; however, the bounds are equivalent.

\end{remark}

The last result in this set of results is  a new Singleton-type bound for $(r,\ell,L)$ list-recoverable codes, which  reduces to \cref{thm:old-singelton} for $\ell=1$. The reader is referred to \cref{sec:singleton-list-recovery} for details.

%\paragraph{Singleton-type upper bounds for list-recovery.}

\paragraph{Lower bounds on the list size.} The following is a typical question in the study of  list-decoding and list-recovery \cite{Blinovski86,blinovsky2005code,Guruswami-Average-Radius,Guruswami-random-linear,5605366} for examples). It is stated for list-recovery, and the corresponding question for list-decoding is obtained by setting $\ell=1$. . 

\begin{question}\label{que:list-size}
Given $q,r,\ell$ and  $\epsilon>0$ that measures the gap between the  code rate and the list-recovery capacity, what is the growth rate of the list size $L$?
%Prove that for sufficiently large $n$, any $(r,L)$ list-decodable (resp, $(r,\ell,L)$ list-recoverable) $q$-ary code of length $n$ and rate at least $\capa_{LD}-\epsilon$ (resp. $\capa_{LR}-\epsilon$) must have list size $L=\Omega(\frac{\ell}{\epsilon})$.
\end{question}
Previously, \cref{que:list-size} has been studied by several works (mainly for list-decoding), as discussed below.
%There are in fact quite a few works in the literature that are devoted to .
Blinovsky \cite{Blinovski86,blinovsky2005code} showed that any $(r,L)$ list-decodable code with rate $\capa_{LD}-\epsilon$ must have $L=\Omega(\log(\frac{1}{\epsilon}))$. Guruswami and Narayanan \cite{Guruswami-Average-Radius} studied that problem for average-radius list-decoding, which is a strengthening of list-decoding, and showed that the list size must be $\Omega(\frac{1}{\sqrt{\epsilon}})$.
Guruswami and Vadhan \cite{5605366} studied the regime of codes with list decoding radius of $r=(1-1/q)(1-\epsilon)$, approaching the upper limit of $1-1/q$, and proved that in  this regime  the list size must be $L=\Omega(1/\epsilon^2)$.
Lower bounds on the list size for list-decoding and list-recovery of \emph{random} codes were also studied in \cite{Guruswami-Average-Radius}  and \cite{Guruswami-random-linear}.  Guruswami and Narayanan \cite{Guruswami-Average-Radius}  proved that  both for random codes  and random linear codes of rate $cap_{LD}-\epsilon$ has list size $L=\Omega(1/\epsilon)$, where the hidden leading constant tends to zero as $r$ tends to $1-1/q$. Recently, Guruswami et al. \cite{Guruswami-random-linear} improved the leading constant  for  random \emph{linear} codes and showed that  $L\ge \lfloor h_q(r)/\epsilon+0.99\rfloor$. For the binary case they proved this lower bound is tight up to an additive constant, pinning down the list size for random binary linear codes to a range of three values. Additionally for list recovery  \cite{Guruswami-random-linear} showed that for a random linear $(0,\ell,L)$ list-recoverable code of rate $1-log_q(\ell)-\epsilon$, it holds that $L=\ell^{\Omega(1/\epsilon)}$.

It is known that if one allows the alphabet size $q$ to grow, 
then there exist random codes with rate $1-r-\epsilon$ that are list-decodable  (resp. list-recoverable) from radius  $r$  and list size $L=O(1/\epsilon)$ (resp. $L=O(\ell/\epsilon)$). In fact,  it is sufficient that  $q\ge 2^{\Omega(1/\epsilon)}$. Hence, in this case the two capacities coincides, and we have $\capa_{LD}=\capa_{LR}=1-r$. In Proposition \ref{prop:list-size-LR} we partially answer \cref{que:list-size} by showing that  
an $(r,\ell,L)$ list-recoverable code  of length $n$ and rate at least $1-r-\epsilon$ must satisfy $L\ge \frac{\ell r}{\epsilon}+\ell-1+o(1)=\Omega_r(\ell/\epsilon)$, where $o(1)$ tends to zero as $n$ tends to infinity. The special case of list decoding is proved  in Proposition \ref{prop:list-size-LD}.

%, given constants $r$ and $\ell$, for all $\epsilon>0$ and all sufficiently large $n$, there exist codes, e.g., completely random codes, 
%Therefore one has $\capa_{LD}=\capa_{LR}=1-r$ if $q$ is allowed to grow exponentially with $\epsilon$. In this case, we answer \cref{que:list-size} in the affirmative
%, and show that
%\begin{itemize}
%    \item 
%\end{itemize}
%by showing that any $(r,L)$ list-decodable code (resp. $(r,\ell,L)$ list-recoverable) of length $n$ and rate at least $1-r-\epsilon$ must satisfy $L\ge \frac{r}{\epsilon}+o(1)$ (resp. $L\ge \frac{\ell r}{\epsilon}+\ell-1+o(1)$), where $o(1)$ tends to zero as $n$ tends to infinity (see Propositions \ref{prop:list-size-LD} and  for formal statements). 

%\begin{restatable}[Goldbach's conjecture]{theorem}{goldbach}
%\label{thm:goldbach}
%Every even integer greater than 2 can be expressed as the sum of two primes.
%\end{restatable}

% \cref{thm:goldbach}:

%\goldbach*

%The results listed above are in fact direct consequences of the Singleton-type upper bounds for list-decoding and list-recovery, respectively.  

%Theorems \ref{thm:old-singelton} and \ref{thm:recovery-singelton}, respectively.

\paragraph{Nonlinear codes outperform linear codes in list-decoding.} A fundamental problem in combinatorial coding theory is to obtain an optimal trade-off between the rate of a code and its relative distance $\delta$. However, this problem is far  from being solved despite decades of research. Moreover, even the more modest problem of understanding the power of nonlinear codes is unknown. More precisely,  it is unknown whether nonlinear codes perform better than linear codes under unique decoding or linear codes perform as well as their nonlinear counterpart. Indeed, 
%under the that whether linear codes have best performance in this problem. 
for binary codes with relative distance  $0<\delta<1/2$, the currently best known lower and upper bounds are the GV  \cite{gilbert1952comparison,varshamov1957estimate} and MRRW \cite{mceliece1977new} bounds, respectively. It is well known that linear code can achieve the GV  bound, and there is no better upper bound for linear codes that is tighter than the MRRW bound. Although some stronger lower bounds are known for nonlinear codes (see \cite{jiang2004asymptotic,vu2005improving}), it is not known whether there is a gap between the size of the largest linear code and nonlinear code for a given distance.
 
Surprisingly, considering the current  state of the art for unique decoding, in Section \ref{sec:construcion} we show that such a separation between linear and nonlinear codes exists for  list-decoding. In particular,  we  show  that nonlinear codes can considerably outperform linear codes for sufficiently large $q$ as a function of $n, L$.
%the cardinality of the largest generic list-decodable code is considerably larger than the cardinality of the largest linear list-decodable code. 
Roughly speaking, we show that for given $r,L$ with $rn\in\mathbb{N}$ and $L\nmid rn$, there is a constant $\theta\in[\frac{1}{L},1)$ so that the size of the largest linear  $(r,L)$ list-decodable code is at most a $q^{-\theta}$-fraction of the size of the largest nonlinear  $(r,L)$ list-decodable code. In particular, for  $L=2$, one can take $\theta=1-\epsilon$ with any  $\epsilon>0$ arbitrarily close to zero, provided that $q$ is sufficiently large as a function of $n,\epsilon$. The precise statement of this result can be found in \cref{prop:hypergraphstocodes} and \cref{rem:comparison}. 
We provide new constructions of list-decodable codes via a correspondence between codes and multi-partite hypergraphs to derive this result. In particular, the constructions are based on a notion of sparse-hypergraphs from extremal combinatorics. We use several known constructions of such sparse hypergraphs in the literature to construct the codes.
%The details can be found in \cref{sec:construcion}. 

We note that results of similar flavor, i.e., that nonlinear codes perform better than linear codes, are known to exist; however, they are scarce. In particular, for the problem of \emph{erasure list-decoding}, it is known that there exist nonlinear codes whose  list size is exponentially smaller than the list size guaranteed for linear code (see  \cite[Theorem 10.17]{guruswami2004list}).
Another example of this phenomenon is the recent result by \cite{Guruswami-random-linear} that showed that  in the problem of zero-error list-recovery,  random codes also  have significantly smaller list sizes than  random linear codes.

%decoding Also a similar difference about the list size of linear and general codes was proven for the related problem of \emph{Erasure list-decoding} 

\paragraph{Large list-decoding radius implies large minimum Hamming distance.}
We study the relation between  list-decodability and unique decodability of a code. In particular, whether a code with good list-decoding properties necessarily imply unique decoding properties. We divide the analysis into two cases, depending on whether the code is linear or nonlinear (see Section \ref{sec:distances-mds}).
%of the code influences its unique decodability, we part the discussion to general and to linear codes, both disscused in .
\begin{itemize}
    \item 
For a general list-decodable code, i.e., not necessarily linear,  we prove that it  must contain a large subcode with a large minimum Hamming distance (see Theorem \ref{thm:generl-lower-bound-for-distance}). As a corollary of this theorem, we obtain that a list-decodable code with a rate approaching the maximal rate given by Theorem \ref{thm:old-singelton} (and whose existence is guaranteed by  \cref{prop:hypergraphstocodes}) must contain a very large near MDS subcode (see Corollary\ref{cor:general-bound-achivers-are-MDS}).
\item  For a linear code that is list-decodable or  recoverable, we show that it must have a large Hamming distance. This result can be viewed as a  generalization of  the fact that an $(r,1)$ list-decodable code (uniquely  decodable) has Hamming distance  of at least $2rn+1$,   to list-decoding and recovery. For details, see Theorem \ref{thm:linear-distance-recovery} and its derivatives Proposition \ref{prop:linear-distance-decoding} and Corollary \ref{cor:linear-bound-achiver-is-MDS}.
\end{itemize}

As a final remark, we note that \cite{roth2021higherorder} independently proved a result regarding the unique decodability of list-decodable linear codes (see  \cite[Theorem 3]{roth2021higherorder}), which  is equivalent to Corollary \ref{cor:linear-bound-achiver-is-MDS} when $L$ divides $rn$. Moreover, it can be verified  from the proof of \cite[Theorem 3]{roth2021higherorder} that Roth also proved  Proposition \ref{prop:linear-distance-decoding}. The proof argument of Roth is  very similar to ours,  though it is  formulated differently.

\subsection{Organization} The rest of this paper is organized as follows. In \cref{sec:prelim} we introduce some necessary notations and definitions. In Sections \ref{sec:combi-bounds} and \ref{sec:singleton-list-recovery} we present the Singleton-type upper bounds for list-decoding and list-recovery, respectively. In \cref{sec:construcion} we introduce the notion of sparse hypergraphs and use them to show that for a wide range of parameters, the largest generic list-decodable codes must have much more codewords than the largest linear list-decodable codes. In \cref{sec:distances-mds} we show that if a linear code has a very large list-decoding or list-recovery radius, then it must also have a very large minimum Hamming distance. 
 
\section{Preliminaries and notations}
\label{sec:prelim}
We will use of the following notations. For positive integers $m\leq n$, we write $[n]=\{1,\ldots,n\}$, $[m,n]=\{m,\ldots,n\}$, $\binom{[n]}{m}=\{A\subseteq [n]:|A|=m\}$, and $\binom{[n]}{\le m}=\{A\subseteq [n]:|A|\le m\}$. We number vectors by  superscripts, i.e., $x^1,x^2,\ldots$, and  use  subscripts to refer to their  coordinates, e.g.,  $x^j_i$ is the $i$th coordinate of $x^j$. 
For a subset $I\subseteq [n]$ and a vector $x$ of length $n$, let $x_I$ be the restriction of $x$ to its coordinates with indices in $I$. For $x,y\in [q]^n$ let $I(x,y)=\{i:x_i=y_i\}$ be the set of indices for which $x$ and $y$ are equal, then it is clear that $d(x,y)+|I(x,y)|=n$, where  $d(x,y)$ is the hamming distance between $x$ and $y$. We will use $h_q(x)$ to denote the $q$-ary entropy,
\begin{equation}
    \label{entropy}
h_q(x) := x\log_q(q - 1) - x\log_q(x) - (1 - x)\log_q(1 - x). 
\end{equation}

For $n$ subsets $S_1,\ldots,S_n\subseteq[q]$, let $S_1\times\cdots\times S_n$ be the set of vectors $v\in[q]^n$ with $v_i\in S_i$ for all $i\in[n]$. For a set of vectors $D\subseteq [q]^n$ and a vector $v\in [q]^n$, let $d(v,D):= \min\{d(v,u):u\in D\}$. Using the above notation, it is not hard to check by definition that a code $C$ is $(r,\ell,L)$ list-recovery if and only if for every $D\in \binom {[q]}{\leq \ell}^n$, $$|\{c\in C:d(c,D)\leq rn \}|\leq L$$ where we define $\binom {[q]}{\leq \ell}^n=\{S_1\times\cdots\times S_n:S_i\in\binom{[q]}{\le \ell} \text{~for all $i\in[n]$}\}$. 

%\chong{Add several sentences here.}

%\chong{I changed the notation to $\binom {[q]^n}{\leq l}$, but I am not sure if I understood you correctly. Did you use the notion $d(c,D)$ in the proof?}

%When $q$ is a prime power, $[q]$ can be viewed as the final field $\mathbb{F}_q$ of $q$ elements. In this case 
For a prime power $q$, let $\mathbb{F}_q$ be the finite field of $q$ elements. A code $C\subseteq \mathbb{F}_q^n$ is linear if and only if it is a subspace of $\mathbb{F}_q^n$. RS codes \cite{RS-codes} is an important family of linear codes, which is  %is the family of Reed-Solomon codes \cite{RS-codes} (RS-codes for short), that is 
defined as follows: for integers $k\leq n\leq q$ and a vector $\alpha \in \mathbb{F}_q^n$ with distinct entries, the RS code with evaluation vector $\alpha$ is the $k$-dimensional subspace $$\{(f(\alpha_1),\ldots,f(\alpha_n):f\in \mathbb{F}_q[x],~deg(f)\leq k-1\}.$$

Since for every $r'\leq r$ an $(r,L)$ list-decodable code is also $(r',L)$ list-decodable, by Theorem \ref{thm:old-singelton} we see that an $(r,L)$ list-decodable code with $r\ge \frac{L}{L+1}$ has size at most $L$, which is obviously tight. So in order to avoid trivial cases, throughout the paper we assume that $r\in[0,\frac{L}{L+1})$.

%\chong{I think it is better to organize the paper as follows. What do you think?}

\section{Singleton-type bounds for list-decoding}\label{sec:combi-bounds}

\subsection{Upper bounds for arbitrary codes}\label{subsec:gen-upp-bd}

%Given list-decoding radius $r$ and list size $L$, in this subsection we will present several new upper bounds on the size of a generic $(r,L)$ list-decodable code.

\iffalse
Before stating our result, let us first recall the following Singleton-type bound for list-decoding, proved by Shangguan and Tamo \cite{shangguan2019combinatorial}.

\begin{theorem}[Theorem 1.2 in \cite{shangguan2019combinatorial}] \label{thm:old-singelton}
For integers $q\ge 2,~L\ge 1$ and a real $r\in [0,\frac{L}{L+1}]$ with $rn\in\mathbb{N}$, 
every $(r,L)$ list-decodable code $C\subseteq [q]^n$ has size at most $Lq^{n-\lfloor \frac{L+1}{L}rn\rfloor}$.
\end{theorem}
\fi

%Since for every $r'\leq r$ an $(r,L)$ list-decodable code is also $(r',L)$ list-decodable, by Theorem \ref{thm:old-singelton} we see that an $(r,L)$ list-decodable code with $r\ge \frac{L}{L+1}$ has size at most $L$, which is obviously tight. So in order to avoid trivial cases, throughout the paper we assume that $r\in[0,\frac{L}{L+1})$.

Below we will present several upper bounds that improve upon \cref{thm:old-singelton}. We begin with the following theorem, which provides an  improved upper bound on the cardinality of any list-decodable codes.

%Our first bound shows that for sufficiently large $n$ (compared with $r,~L$), the factor $L$ in the upper bound in Theorem \ref{thm:old-singelton} can be in fact replaced by $1+o(1)$, where $o(1)$ tends to zero as $n$ tends to infinity (see \cref{thm:new-o(1)-singelton} and \cref{cor:singelton-lim-o(1)}). 

%We can even get rid of the $o(1)$ term when $n$ is considerably larger (compared with $r, L$ and $q$, see \cref{cor:singelton-ngreaterq} for the details).
%\textcolor{red}{
%\begin{theorem}\label{thm:new-o(1)-singelton2}
%For integers $q\ge 2,~L\ge 1$ and a real $r\in [0,\frac{L}{L+1})$ with $rn\in\mathbb{N}$, there exists an integer $n(r,L)$ such that for all $n\geq n(r,L)$, every $(r,L)$ list-decodable code $C\subseteq [q]^n$ satisfies
%$$|C|\leq \max\{ q+\lfloor f(n)q\rfloor,L\}\cdot q^{n-(\lfloor\frac{L+1}{L}rn \rfloor+1)},$$ 
%where $f(n)=\frac{(L-b-1)}{2(\lfloor\frac{L+1}{L} rn \rfloor+1)-(L-b-1)}$ and $b\in[0,L-1]$ satisfies $b\equiv rn \pmod{L}$.
%\end{theorem}}

\begin{theorem}\label{thm:new-o(1)-singelton}
Let $C\subseteq [q]^n$ be an $(r,L)$ list-decodable code with  $r\in [0,\frac{L}{L+1})$ and $rn\in\mathbb{N}$, then for large enough $n$ (as a function of  $r$ and $L$) the size of $C$ satisfies
$$|C|\leq \max\{ q(1+ o_{r,L}(1)),L\}\cdot q^{n-(\lfloor\frac{L+1}{L}rn \rfloor+1)},$$ 
where $o_{r,L}(1)$ is a function that tends to zero for fixed $r,L$ and  $n\rightarrow\infty$. % tends to infinity.
\end{theorem}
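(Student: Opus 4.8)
The plan is to refine the counting argument underlying \cref{thm:old-singelton}, squeezing out the factor $L$ down to $(1+o(1))q$. First I would recall the structure of the original proof: one fixes a putative list-decodable code $C$ that is too large, and derives a contradiction by exhibiting a Hamming ball of radius $rn$ containing more than $L$ codewords. The key quantity is the number of codewords agreeing with each other on many coordinates. Concretely, set $t=\lfloor\frac{L+1}{L}rn\rfloor+1$ and suppose toward a contradiction that $|C|>\max\{q(1+o_{r,L}(1)),L\}\cdot q^{n-t}$. The idea is to find a set of $L+1$ codewords together with a ``center'' vector $v$ such that each of the $L+1$ codewords lies within distance $rn$ of $v$; this contradicts $(r,L)$ list-decodability.

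The main step is a careful pigeonhole/averaging argument on \emph{projections}. For a subset $I\in\binom{[n]}{n-t}$, project $C$ onto the coordinates in $I$; since there are only $q^{n-t}$ possible projected vectors, if $|C|$ exceeds $m\cdot q^{n-t}$ then some fiber has more than $m$ codewords that all agree on $I$, i.e., a set $C'\subseteq C$ of size $\ge m+1$ pairwise agreeing on all of $I$ (equivalently, pairwise differing only within the $t$ coordinates of $[n]\setminus I$). With $m = \lfloor q(1+o_{r,L}(1))\rfloor$ one gets $|C'|\ge q+1$ (for large $n$), or if the $L$-term dominates, $|C'|\ge L+1$ directly, in which case one can build the center $v$ by choosing on $[n]\setminus I$ (of size $t\le \frac{L+1}{L}rn+1$) a symbol that is "popular" among the $L+1$ codewords — a plurality vote coordinate-by-coordinate on the $t$ disagreement coordinates shows each codeword differs from $v$ in at most $\frac{L}{L+1}t\le rn+o(n)$ of those coordinates, hence $d(c,v)\le rn$ after absorbing the additive slack (this is where $rn\in\mathbb N$ and the choice of $o_{r,L}(1)$ and largeness of $n$ are used). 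The harder sub-case is when $|C'|\ge q+1$ but the plurality among $q+1$ symbols on a coordinate could be as small as $\lceil (q+1)/q\rceil = 2$; here one must instead average over the choice of $I$, or iterate: among $q+1$ codewords agreeing on $n-t$ coordinates, a second pigeonhole on a well-chosen further coordinate block, combined with a convexity estimate on how "spread out" the remaining disagreements can be, yields $L+1$ of them clustered tightly enough. This trade-off — increasing the number of surviving codewords from $L+1$ to $\approx q$ in exchange for only an additive loss absorbed into the $o_{r,L}(1)$ — is the crux.

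The bookkeeping to make the $o_{r,L}(1)$ explicit is the technical heart: one needs to track the difference between $\lfloor\frac{L+1}{L}rn\rfloor$ and $\frac{L+1}{L}rn$, the ceilings/floors introduced by plurality votes, and ensure all these additive $O_{r,L}(1)$ errors are of order $o(n)$, hence negligible after dividing by $n$ — but here they must be negligible in a multiplicative sense against $q^{n-t}$, which they are since they only shift the exponent by a bounded amount and get absorbed into the leading constant. I expect the main obstacle to be the averaging argument that upgrades "$L+1$ clustered codewords" to "$q$-ish clustered codewords": naively, forcing $q$ codewords into a radius-$rn$ ball seems to need them to agree on many coordinates, but a ball of radius $rn$ around a fixed $v$ already contains up to $\binom{n}{rn}(q-1)^{rn}$ vectors, which is far more than $q$, so the room exists — the challenge is purely combinatorial selection, and I would handle it by a greedy/probabilistic choice of the center $v$ coordinatewise, bounding the expected distance from each of the $q+1$ candidates and applying Markov's inequality together with a union bound over the (constantly many relevant) codewords. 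Finally, I would separately verify the boundary regime where $L > q(1+o_{r,L}(1))$, for which the bound reduces essentially to \cref{thm:old-singelton} and needs no new idea.
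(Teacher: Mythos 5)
The opening pigeonhole step (project onto the last $n-m$ coordinates with $m=\lfloor\frac{L+1}{L}rn\rfloor+1$, find a fiber with $\approx q$ codewords, and contradict list-decodability by producing a bad center vector) matches the paper's strategy. But the construction of the center is where your proposal goes wrong, and the gap is real.

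Your plurality-vote claim — that a coordinate-wise plurality $v$ over the $t$ disagreement coordinates makes ``each codeword differ from $v$ in at most $\frac{L}{L+1}t$ of those coordinates'' — is false. With $L+1 \le q$ codewords and alphabet size $q$, the plurality at a given coordinate can be a single codeword, so plurality controls only the \emph{total} number of (codeword, coordinate) disagreements (roughly $Lt$), not the per-codeword distance, and a single codeword can absorb almost all of them. This is exactly why even the original proof of \cref{thm:old-singelton} does not use plurality: it uses an explicit partition of the $m$ ``free'' coordinates into $L+1$ blocks $P_1,\dots,P_{L+1}$ and sets $y_{P_i}=c^i_{P_i}$, so that $d(y,c^i)\le m-|P_i|\le rn$ for each $i$ by a per-codeword accounting, not an average one. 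Your later fixes (``average over $I$,'' ``iterate,'' ``greedy/probabilistic center plus Markov and a union bound over the $q+1$ candidates'') are not worked out, and the union bound over $q+1$ events is particularly suspicious: you would need each failure probability to be $o(1/q)$, which a straightforward Markov bound will not deliver.

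The idea your proposal is missing is the one that actually squeezes the factor $L$ down to $\approx q$: once you have $M\approx q$ codewords agreeing on the last $n-m$ coordinates, the paper builds a multigraph on their $m$-coordinate projections, with one edge per coordinate on which a pair agrees. Each coordinate contributes at least $M-q\ge 0$ edges (a convexity/pigeonhole count of within-class pairs among $M$ items in $q$ classes), so the average degree is $\ge \frac{2(M-q)m}{M}$, which is engineered to exceed $L-b-1$ where $b\equiv rn\pmod L$. Hence some projected codeword $v$ has total agreement at least $L-b$ with the others. This surplus agreement is then spent in the partition: one reserves a block $A$ of size $\lfloor rn/L\rfloor+1$ for $v$ (so $d(y,v)\le m-|A|=rn$), and the agreements of $v^i$ with $v$ inside $A$ reduce the size of block $P^i$ that $v^i$ needs, which is precisely what allows $L$ blocks $P^1,\dots,P^L$ to fit into $[m]\setminus A$ with the exact budget $rn=m-|A|$. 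Without this degree-counting/surplus-agreement mechanism (or an equivalent), your approach does not close; the choice between ``$L+1$ codewords by partition'' and ``$q+1$ codewords by plurality'' is a false dichotomy, and neither branch as you describe it yields the stated bound.
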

Note that by inspecting the precise function $o_{r, L}(1)$ to be given in the proof of the theorem, one can verify  that  \cref{thm:new-o(1)-singelton} reduces to the classical Singleton bound for $L=1$. Moreover, the theorem holds for any $n\geq L^2/r$. 

%Note that the $f(n)$ is not a function of $q$ thus the asymptotic behavior does not change also for cases where $q$ changes with $n$, and that the bound from Theorem \ref{thm:new-o(1)-singelton} can be viewed as a generalization of the well-known Singelton bound, as it boils down to it since $f(n)=0$ for $L=1$. Moreover, for our proof it suffices to take $N(r,L)=\frac{L^2}{r}$.

\begin{proof}
We will need the following observation. Let $C\subseteq [q]^n$ be an $(r,L)$ list-decodable code, then  for  $I\subseteq [n]$ and $w\in [q]^{|I|}$ the set of  vectors $\{c_{\overline{I}}:c\in C,~c_I=w\}\subseteq [q]^{|\overline{I}|}$ is  $(\frac{rn}{n-|I|},L)$ list-decodable.

 Let $m:=\Big\lfloor\frac{L+1}{L}rn \Big\rfloor+1$ and note that  $m\leq n$, since  $r<\frac{L}{L+1}$,  and let $b\in[0,L-1]$   such that  $b\equiv rn \pmod{L}.$
%If $C\subseteq [q]^n$ is an $(r,L)$-list-decodable code, then for any subset of coordinates $I\subseteq [n]$ and any vector $w\in [q]^{|I|}$ the set of vectors $\{c_{\overline{I}}:c\in C,~c_I=w\}\subseteq [q]^{|\overline{I}|}$ is an $(\frac{rn}{n-|I|},L)$-list-decodable. Indeed, otherwise the code $C$ would not be $(r,L)$-list-decodable. 
%Let $m:=\lfloor\frac{L+1}{L}rn \rfloor+1$ and note that $m\leq n$ since  $r<\frac{L}{L+1}$. Let also  $R$ to be the unique integer in $[0,L-1]$ such that  $R\equiv rn \pmod{L}.$
We claim that for any vector $w\in [q]^{n-m}$ there are less than   $M:=1+\max\{ q+\lfloor f(n)q\rfloor,L\}$, 
where $f(n)=\frac{(L-b-1)}{2(\lfloor\frac{L+1}{L} rn \rfloor+1)-(L-b-1)}$ codewords $c\in C$ with $c_{[m+1,n]}=w$, then the result will follow since    %we have 
\begin{align*}
|C|=\sum_{w\in [q]^{n-m}}|\{c\in C:c_{[m+1,n]}=w\}|&\leq \max\{ q+\lfloor f(n)q\rfloor,L\}\cdot q^{n-m}\\
&=\max\{q(1+o_{r,L}(1)),L\}\cdot q^{n-(\lfloor\frac{L+1}{L}rn \rfloor+1)}.
\end{align*}
 
%\begin{align*}
%  |C|&=\sum_{w\in [q]^{n-m}}|\{c\in %C:c_{[m+1,n]}=w\}|\leq Mq^{n-m}\\
%  &=\max\{q+\lfloor f(n)q\rfloor,L\}\cdot q^{n-(\lfloor\frac{L+1}{L}rn \rfloor+1)}.
%\end{align*}
 %
%And the result follows since   $f(n)$ tends to zero for fixed $L,r$ and $n\rightarrow\infty$. therefore, it remains to prove the claim. 
Assume towards a contradiction that the claim is false, then $C$ contains   $M$ codewords whose last $n-m$ coordinates are all identical. Let  $v^1,\ldots, v^{M}\in [q]^m$ be the restriction of these codewords to their first $m$ coordinates. The contradiction will follow by showing that  the set of vectors $\{v^i:~i\in[M]\}\subseteq [q]^m$ is \emph{not} $(\frac{rn}{m},L)$ list-decodable, together with  the observation above.

Towards this end, let us construct a multi-graph whose vertices are the vectors $v^i,~i\in[M]$, and draw an edge between distinct $v^i$ and $v^j$ for every coordinate they agree on. 
%It follows easily from the pigeonhole principle \chong{(Do we have to mention pigeonhole principle here? Can we just say that ``It is not hard to see...'')} 
It is not hard to verify that each coordinate $i\in[m]$ contributes at least $M-q\geq 0$ edges to the multi-graph. 
%\chong{The proof here is correct. But can we do better than $M-q$? Say for $j=1,\ldots,q$ let $x_j$ be the number of $j$'s in the $i$-th coordinate, then we have $\sum_{j=1}^q x_j=M$. The number of edges given by the $i$-th coordinate is  $\sum_{j=1}^q\binom{x_j}{2}$, which by the convexity of $\binom{x}{2}$ is at least $\frac{M(M-q)}{2q}$. Does this help? (I think maybe not)}
Therefore, the multi-graph has average degree at least $\frac{2(M-q)m}{M}>L-b-1$, where the inequality follows since $M>q+f(n)q$. As the degree of a vertex must be an integer, there exists a vertex (vector) $v\in\{v^i:~i\in[M]\}$ of degree at least $L-b$. Equivalently, % the following inequality holds 
\begin{equation}
\label{stam}
    \sum_{u\in U}|I(u,v)|\ge L-b,
\end{equation}  
 where  $U=\{v^i:~i\in[M]\}\setminus\{v\}$.  By the value of $M$, $|U|\geq L$, then it is possible to remove vertices from $U$ to make it the set of vertices $\{v^i:i\in [L]\}$ (possibly by changing the indices of the vertices), while still maintaining \eqref{stam}. Suppose next, that $n$ is sufficiently large so that $\lfloor\frac{rn}{L}\rfloor+1\ge L$, then there exists a subset of coordinates  $A\subseteq [m]$ be  of size $\lfloor\frac{rn}{L}\rfloor+1$ with $\sum_{u\in U} |A\cap I(v,u)|\ge L-b$.

Partition the set  $[m]\backslash A$ arbitrarily to $L$ sets $P^i,i=1,\ldots,L$, each of size  at least $\lfloor \frac{rn}{L}\rfloor+1-|A\cap I(v^i,v)|$. This is  possible since 
\begin{align*}
\sum_{i=1}^L\Big\lfloor \frac{rn}{L}\Big\rfloor+1-|A\cap I(v^i,v)|\leq
L\Big(\Big\lfloor \frac{rn}{L}\Big\rfloor+1\Big)- (L-b)=rn=m-|A|.
\end{align*}
To complete the proof it suffices to construct a vector $y\in [q]^m$ such that the ball  $B_{rn}(y)$ contains the vectors $v,v^1,\ldots,v^L$.  
 %at least $L+1$ distinct vectors  $v^i$. 
 %Let $w^j$  for $j=1,\ldots, L-|U'|$ be distinct vectors in the set $\{v^i:i\in [M]\}\backslash (\{v\} \cup U')$ (this is well-defined as $M\ge L+1$).
 %, this set is of size at least $L-|U'|$ and therefore such vectors $w^j$ exist.  
 Define the vector $y$ by
$$
y_A=v_A, \text{ and }~y_{P^i}={v}^i_{P^i} \text{ for } i\in [L].
$$
It is clear that $d(y,v)\le m-|A|=rn$.
Furthermore,  since $y$ and $v$ agree on the coordinates in $A$, then $v^i$ agrees with $y$ on $|I(v,v^i)\cap A|$ coordinates in $A$ (which is possibly zero), and by construction on at least $\lfloor \frac{rn}{L}\rfloor+1-|I(v,v^i)\cap A|$ coordinates in $[m] \backslash A$. Therefore 
$$d(y,v^i)\le m-|I(v,v^i)\cap A|-\Big(\Big\lfloor \frac{rn}{L}\Big\rfloor+1-|I(v,v^i)\cap A|\Big)=rn,$$
and we  have arrived at the desired contradiction which completes the proof of the theorem.
\end{proof}

%Below we list some consequences of Theorem \ref{thm:new-o(1)-singelton}. 
%has some interesting consequences, as listed below.

The following corollary, which is an easy consequence of \cref{thm:new-o(1)-singelton}, shows that for sufficiently large $n$ (as a function of $r,~L$), the factor $L$ in the upper bound in  \cref{thm:old-singelton} can be replaced by $1+o(1)$, where $o(1)$ tends to zero as $n$ tends to infinity.

\begin{corollary}
\label{cor:singelton-lim-o(1)}
For integers $q\ge 2,~1\le L\le q$ and  $r\in [0,\frac{L}{L+1})$ with $rn\in\mathbb{N}$, if $n$ is sufficiently large with respect to $r,L$,
%there exists an  $N(r,L)$ such that for all $n\geq N(r,L)$, 
then every $(r,L)$ list-decodable code in $ [q]^n$ has size at most 
$$(1+o(1))q^{n-\lfloor\frac{L+1}{L}rn \rfloor},$$
where $o(1)$ tends to zero as $n$ tends to infinity.
\end{corollary}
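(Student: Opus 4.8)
The plan is to obtain this as an immediate consequence of \cref{thm:new-o(1)-singelton}; the corollary is essentially a cosmetic rearrangement of that theorem under the mild extra hypothesis $L\le q$, so there is no genuinely new content to establish.

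First I would apply \cref{thm:new-o(1)-singelton} verbatim to the given parameters: for $n$ large enough as a function of $r$ and $L$ (the same threshold as in the theorem, e.g.\ $n\ge L^2/r$), every $(r,L)$ list-decodable code $C\subseteq[q]^n$ satisfies
$$|C|\le \max\{q(1+o_{r,L}(1)),L\}\cdot q^{n-(\lfloor\frac{L+1}{L}rn\rfloor+1)}.$$
Next I would simplify the maximum. Recalling from the proof of \cref{thm:new-o(1)-singelton} that the correction term satisfies $o_{r,L}(1)=\lfloor f(n)q\rfloor/q\ge 0$, since $f(n)=\frac{L-b-1}{2(\lfloor\frac{L+1}{L}rn\rfloor+1)-(L-b-1)}\ge 0$ as $0\le b\le L-1$, we get $q(1+o_{r,L}(1))\ge q$. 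Combining this with the hypothesis $L\le q$ yields $\max\{q(1+o_{r,L}(1)),L\}=q(1+o_{r,L}(1))$.

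Substituting and absorbing one factor of $q$ into the exponent then gives
$$|C|\le q(1+o_{r,L}(1))\cdot q^{n-\lfloor\frac{L+1}{L}rn\rfloor-1}=(1+o_{r,L}(1))\cdot q^{n-\lfloor\frac{L+1}{L}rn\rfloor},$$
and since $r$ and $L$ are fixed constants, $o_{r,L}(1)$ is a genuine $o(1)$ as $n\to\infty$, which is exactly the asserted bound. The "sufficiently large $n$" hypothesis in the corollary is precisely the one inherited from the theorem.

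I do not expect any real obstacle here: the only step requiring even a moment's care is verifying that the maximum in \cref{thm:new-o(1)-singelton} is attained by its first argument, and this is immediate once one notes that the correction term $f(n)$ is nonnegative (so $q(1+o_{r,L}(1))\ge q$) and that we are assuming $L\le q$.
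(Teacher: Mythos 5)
Your proof is correct and follows essentially the same route as the paper: both apply \cref{thm:new-o(1)-singelton} directly and use $L\le q$ together with $f(n)\ge 0$ and $f(n)=o(1)$ to resolve the maximum and absorb the extra factor of $q$. The only cosmetic difference is that you argue the maximum is \emph{attained} by the first term, whereas the paper merely bounds $\max\{q+\lfloor f(n)q\rfloor,L\}\le(1+f(n))q$; both give the stated conclusion.
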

\begin{proof}
Apply Theorem  \ref{thm:new-o(1)-singelton} and note  that for $L\leq q$, one has $\max\{q+\lfloor(f(n)q\rfloor,L\}\leq (1+(f(n))q$ and   $f(n)=o(1)$. 
\end{proof}
Corollary \ref{cor:singelton-lim-o(1)}, which is  an improvement over  Theorem \ref{thm:old-singelton},  is of interest since it provides an asymptotically optimal bound on the size of such codes. Indeed,  the last two authors showed in \cite{shangguan2019combinatorial}  that over sufficiently large finite fields,  $L=2,3$ and $ L\mid rn$, there exist RS codes of size $q^{n-\frac{L+1}{L}rn}$. 
%Therefore, \cref{cor:singelton-lim-o(1)} shows that these RS codes given in \cite{shangguan2019combinatorial} have not only optimal rate (according to \cref{prop:old-singelton-linear}) but also near optimal size. 

With an additional condition, one can remove the $o(1)$ term in the statement of  \cref{cor:singelton-lim-o(1)}, and obtain a cleaner bound, as follows.  
\iffalse
The following two results show that with some additional conditions, we can remove the $o(1)$ term in the statement of  \cref{cor:singelton-lim-o(1)}.

\chong{Our Singleton-type only performs well when $q$ is large compared with $n$. So I want to delete \cref{cor:singelton-ngreaterq} as it only holds for $n\ge n(q,r,L)$.}  

\begin{proposition}\label{cor:singelton-ngreaterq}
For integers $q\ge 2,~1\le L\le q$ and a real $r\in [0,\frac{L}{L+1})$ with $rn\in\mathbb{N}$, there exists an integer $n(q,r,L)$ such that for all
$n\geq n(q,r,L)$, every $(r,L)$ list-decodable code in $ [q]^n$ has size at most   
$$ q^{n-\lfloor\frac{L+1}{L}rn \rfloor}.$$
\end{proposition}

\begin{proof}
Apply Theorem \ref{thm:new-o(1)-singelton} for any sufficiently large $n$ that satisfies $qf(n)<1$. %This is possible since $f(n)$ tends to zero and is constant in $q$.
\end{proof}

Corollary \ref{cor:singelton-ngreaterq}holds only for  large enough $n$ compared with $r, L$ and $q$, which does not necessarily stand for codes whose alphabet sizes are increasing with the block length (e.g., RS codes). \cref{cor:singelton-remainder} below shows that we may also get rid of the dependency on $q$. 
\fi

\begin{corollary}
\label{cor:singelton-remainder}
For integers $q\ge 2,~1\le L\le q$ and  $r\in [0,\frac{L}{L+1})$ with $rn\in\mathbb{N},~rn\equiv L-1 \pmod{L}$, 
if $n$ is sufficiently large with respect to $r,L$,
%there exists an  $N(r,L)$ such that for all $n\geq N(r,L)$, 
then every $(r,L)$ list-decodable code in $ [q]^n$ has size at most
%there exists an  $n(r,L)$ such that for all  $n\geq n(r,L)$, every $(r,L)$ list-decodable code in $ [q]^n$ satisfies
$q^{n-\lfloor\frac{L+1}{L}rn \rfloor}.$
\end{corollary}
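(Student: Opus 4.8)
\textbf{Proof proposal for Corollary \ref{cor:singelton-remainder}.}

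The plan is to derive this directly from \cref{thm:new-o(1)-singelton} by observing that the extra congruence hypothesis $rn \equiv L-1 \pmod L$ forces the ``slack'' term $f(n)$ to vanish identically, so that the bound $\max\{q+\lfloor f(n)q\rfloor, L\}\cdot q^{n-(\lfloor\frac{L+1}{L}rn\rfloor+1)}$ collapses to $q^{n-\lfloor\frac{L+1}{L}rn\rfloor}$. Concretely, recall from the proof of \cref{thm:new-o(1)-singelton} that $b\in[0,L-1]$ is defined by $b\equiv rn\pmod L$, and that
$$
f(n)=\frac{L-b-1}{2\left(\lfloor\frac{L+1}{L}rn\rfloor+1\right)-(L-b-1)}.
$$
First I would plug in $b=L-1$, which is exactly what the hypothesis $rn\equiv L-1\pmod L$ gives. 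Then the numerator $L-b-1 = L-(L-1)-1 = 0$, so $f(n)=0$ for every $n$ (the denominator is positive since $\lfloor\frac{L+1}{L}rn\rfloor+1\geq 1>0$). Hence $\lfloor f(n)q\rfloor = 0$.

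Next I would substitute this into the conclusion of \cref{thm:new-o(1)-singelton}: for $n$ large enough as a function of $r,L$, any $(r,L)$ list-decodable code $C\subseteq[q]^n$ satisfies
$$
|C|\le \max\{q+\lfloor f(n)q\rfloor,\, L\}\cdot q^{n-(\lfloor\frac{L+1}{L}rn\rfloor+1)} = \max\{q,L\}\cdot q^{n-(\lfloor\frac{L+1}{L}rn\rfloor+1)}.
$$
Since we are given $L\le q$, we have $\max\{q,L\}=q$, and therefore
$$
|C|\le q\cdot q^{n-(\lfloor\frac{L+1}{L}rn\rfloor+1)} = q^{n-\lfloor\frac{L+1}{L}rn\rfloor},
$$
which is the desired bound. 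The ``sufficiently large $n$'' threshold is inherited verbatim from \cref{thm:new-o(1)-singelton} (one needs $n\ge L^2/r$ and $\lfloor rn/L\rfloor+1\ge L$, say), so no new quantitative work is required.

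This corollary is essentially a bookkeeping consequence of the main theorem, so I do not expect any genuine obstacle; the only thing to be careful about is confirming that $b=L-1$ is indeed the residue picked out by the congruence hypothesis (matching the convention $b\in[0,L-1]$ used in the proof of \cref{thm:new-o(1)-singelton}) and that the floor identity $\lfloor\frac{L+1}{L}rn\rfloor+1 = \lceil\frac{L+1}{L}rn\rceil + 1 - [L\mid rn]$ behaves as expected — but since here $L\nmid rn$ (as $L\geq 2$ would be needed for the congruence to be nontrivial; for $L=1$ the statement is the classical Singleton bound and trivial), the bound $q^{n-\lfloor\frac{L+1}{L}rn\rfloor}$ is exactly the ``clean'' form advertised. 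One could optionally remark that this matches \cref{prop:old-singelton-linear} with $L\nmid rn$ improved by the floor-vs-ceiling gap, tying it to the linear-vs-nonlinear discussion, but that is not needed for the proof itself.
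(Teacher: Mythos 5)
Your proof is correct and is essentially identical to the paper's own argument: both apply \cref{thm:new-o(1)-singelton}, note that $b=L-1$ forces $f(n)=0$, and use $L\le q$ to collapse the $\max$ to $q$, yielding the clean bound. The extra remarks about the $L=1$ case and the floor/ceiling identity are accurate but not needed; the core two-line deduction matches the paper exactly.
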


\begin{proof}
Apply Theorem \ref{thm:new-o(1)-singelton} and note that $f(n)=0$ if $rn\equiv L-1 \pmod{L}$.
\end{proof}

The next result, which can be deduced from either \cref{thm:old-singelton} or \cref{thm:new-o(1)-singelton}, gives a lower bound on the list size of list-decodable codes. Moreover, it  partially answers  \cref{que:list-size}. 

\begin{proposition}\label{prop:list-size-LD}
   Any $q$-ary $(r,L)$ list-decodable code of length $n$ and rate at least $1-r-\epsilon$ satisfies $L\ge \frac{r}{\epsilon}+o(1)$, where $o(1)$ tends to zero as $n$ tends to infinity.
\end{proposition}

\begin{proof}
Let $C$ be a code  that satisfies the assumption of the proposition, then by Theorem \ref{thm:old-singelton} 
$$q^{(1-r-\epsilon)n}= |C|\leq Lq^{n-\lfloor\frac{L+1}{L}rn\rfloor}.$$
Equivalently, $(1-r-\epsilon)n\leq n-\lfloor\frac{L+1}{L}rn\rfloor +\log_qL$ and  the result follows by rearranging.
%one gets  $\frac{rn}{L}\le\epsilon n+\log_q L+\frac{L+1}{L}+1$. 
%As $\frac{\log_q L+\frac{L+1}{L}+1}{n}$ tends to zero as $n$ tends to infinity, we have that  $L\ge\frac{r}{\epsilon}+o(1)$.
\end{proof}

\subsection{Improved upper bounds for linear codes} \label{subsec:stronger-linear-bound}
This section shows that the upper bounds obtained in Section \ref{subsec:gen-upp-bd} can be further improved when restricted to linear codes. We begin with the following lemma.
\begin{lemma}\label{lemma:strong-linear-bound}
For a prime power $q$, positive integers $n,L\ge 2$, and $r\in[0,\frac{L}{L+1})$ satisfying $rn\in\mathbb{N}$, and $n-\lceil \frac{L+1}{L}rn\rceil+1>(L-1)\frac{q}{q-1}$, any $[n,n-\lceil\frac{L+1}{L}rn\rceil+1]_q$ is \emph{not}   $(r,L)$ list-decodable.
\end{lemma}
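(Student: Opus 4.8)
The plan is to show that any linear code $C$ of dimension $k := n-\lceil\frac{L+1}{L}rn\rceil+1$ fails to be $(r,L)$ list-decodable by exhibiting a center $y\in\mathbb{F}_q^n$ whose radius-$rn$ ball contains at least $L+1$ codewords. Since $C$ is linear, it suffices (after translating) to find $L$ nonzero codewords $c^1,\dots,c^L$ together with the zero codeword all lying in $B_{rn}(y)$ for some $y$; equivalently, $d(y,0)\le rn$ and $d(y,c^i)\le rn$ for each $i$. The first step is to locate, inside the linear code, $L$ codewords that are "spread out" in a controlled way on a common set of coordinates — this is where the dimension hypothesis and the condition $k = n-\lceil\frac{L+1}{L}rn\rceil+1 > (L-1)\frac{q}{q-1}$ enter.

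Concretely, I would argue as follows. Consider the projection of $C$ onto its first $k$ coordinates; since $\dim C = k$, by a rank argument there is a set $I$ of $k$ coordinates on which $C$ projects surjectively onto $\mathbb{F}_q^k$ (take $I$ to be the complement of an information-set-free window, or directly a set of $k$ coordinates indexing a full-rank $k\times k$ submatrix of a generator matrix). Restricting attention to these $k$ coordinates, I want $L$ codewords $c^1,\dots,c^L$ such that on $I$ the "agreement structure" is rich: by a counting/pigeonhole argument on $\mathbb{F}_q^k$ — reminiscent of the multigraph-degree argument in the proof of \cref{thm:new-o(1)-singelton} — among sufficiently many codewords some coordinate values repeat, and the hypothesis $k > (L-1)\frac{q}{q-1}$ is exactly what guarantees enough repeated agreements on $I$ to proceed. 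More precisely, I expect to pick $c^1,\dots,c^L$ so that the coordinates of $I$ can be partitioned into a "core" set $A$ (where $0$ and all the $c^i$ can be made to agree with the eventual center $y$) plus blocks $P^1,\dots,P^L$, with $|A| = \lfloor rn/L\rfloor + 1$ and each $|P^i|$ large enough, mirroring the partition step in the earlier proof.

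Then I would build the center $y$ coordinatewise just as in \cref{thm:new-o(1)-singelton}: set $y$ equal to $0$ on $A$ and on all of $[n]\setminus I$, and set $y_{P^i} = c^i_{P^i}$. One checks $d(y,0) = |I| - |A| = k - (\lfloor rn/L\rfloor+1) \le rn$ using $k \le n - \frac{L+1}{L}rn + 1$ (this is where $\lceil\cdot\rceil$ versus $\lfloor\cdot\rfloor$ must be tracked carefully, and where the $+1$ saving over \cref{prop:old-singelton-linear} is consumed), and $d(y,c^i)\le rn$ because $c^i$ agrees with $y$ on $P^i$ and on the $c^i$-vs-$0$ agreements inside $A$. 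Hence $0,c^1,\dots,c^L\in B_{rn}(y)$, contradicting $(r,L)$ list-decodability.

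The main obstacle I anticipate is the combinatorial selection of the $L$ codewords on the coordinate set $I$: I need enough pairwise (or vertex-to-set) agreements to split $I$ into $A,P^1,\dots,P^L$ of the required sizes, and making the bookkeeping work is precisely what forces the hypothesis $k > (L-1)\frac{q}{q-1}$ rather than merely $k>L-1$ — the factor $\frac{q}{q-1}$ should come from the fact that a single coordinate of $I$, ranging over $q$ values, can be "wasted" on at most a $1/q$ fraction of potential agreements (the same phenomenon that produced $M-q$ rather than $M$ in the multigraph argument for \cref{thm:new-o(1)-singelton}). A secondary delicate point is the off-by-one arithmetic linking $\lceil\frac{L+1}{L}rn\rceil$, $\lfloor\frac{rn}{L}\rfloor$, and the size of $A$; I would handle the residue $b \equiv rn \pmod L$ explicitly, exactly as in the proof above, to make sure every inequality has the slack it needs.
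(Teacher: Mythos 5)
Your high-level plan (translate so that the zero codeword is one of the $L+1$ colliding codewords, build the center $y$ coordinatewise via a partition into blocks) is the right shape, but the middle step --- selecting the $L$ nonzero codewords and deciding \emph{where} the partition lives --- is both underspecified and, as currently sketched, backwards. You propose to partition the information set $I$ into a core $A$ and blocks $P^1,\dots,P^L$, set $y=0$ on $A$ and on $[n]\setminus I$, and set $y_{P^i}=c^i_{P^i}$. With that choice $d(y,0)=\sum_i|P^i|=k-|A|$, and since $k=n-\lceil\frac{L+1}{L}rn\rceil+1$ this is of order $n-\frac{L+1}{L}rn-\frac{rn}{L}$, which for small $r$ is close to $n$ and far exceeds $rn$. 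In the paper's argument the roles are the other way around: $y$ agrees with $0$ on \emph{all} of the information set $[k]$, and the balancing partition takes place on the short complement $[k+2,n]$, whose size $\lceil\frac{L+1}{L}rn\rceil-2$ is what one actually has room for.

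The idea you are missing is the one that makes the linear structure pay off: restrict attention to the $k(q-1)$ codewords whose projection onto a fixed information set $[k]$ has Hamming weight exactly one. Each such codeword already agrees with $0$ on $k-1$ of the first $k$ coordinates, so all the ``core agreement'' with $0$ comes for free, with no multigraph needed. A single pigeonhole on the value at coordinate $k+1$ then produces $L$ of these codewords agreeing there; this is precisely where the hypothesis enters, since $k>(L-1)\tfrac{q}{q-1}$ is equivalent to $k(q-1)>(L-1)q$, the threshold for that one-coordinate pigeonhole to yield $L$ codewords with the same $(k+1)$-st symbol. After that, the partition of $[k+2,n]$ into $L+1$ blocks of sizes $\ge a$ (or $\ge a-1$ when $L\mid rn$, writing $rn=La+b$) and the distance check $d(y,0),d(y,c^i)\le rn$ are exactly the bookkeeping you describe at the end. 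So the multigraph-degree machinery from \cref{thm:new-o(1)-singelton} is not what drives the $\tfrac{q}{q-1}$ factor here, and trying to run it inside $I$ (where the projection $C\to\mathbb{F}_q^k$ is a bijection, so there are no repeated projections to pigeonhole on) does not produce the needed agreements; you need the weight-one trick instead.
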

As $(L-1)\frac{q}{q-1}$ is decreasing with $q$, the lemma in fact holds for all large enough $n$ satisfying $n+1-\lceil \frac{L+1}{L}rn\rceil>2(L-1)$, where we set $q=2$.
In  \cite{roth2021higherorder} Roth gives examples of two $[n,n-\lceil\frac{L+1}{L}rn\rceil+1]_q$ codes that  \emph{are}   $(r,L)$ list-decodable, seemingly contradicting Lemma \ref{lemma:strong-linear-bound}. The codes are the $[n,n-1]_q$ parity code, which is  $(1/n,n)$ list-decodable and its dual, the $[n,1]_q$ repetition code for $n=(L+1)u-1$ for some $u,L\in \mathbb{N}$, which is $(\frac{Ln-1}{(L+1)n},L)$ list-decodable. One can easily verify that the parameters of these two codes do not satisfy the assumptions of Lemma \ref{lemma:strong-linear-bound}.
\begin{proof}
Since $rn\in\mathbb{N}$ we can write $rn=La+b$ for integers $a,b$ with $b\in [0,L-1]$. 
The following can be easily verified 
$$\Big\lceil\frac{L+1}{L}rn\Big\rceil-1 =
\left\{
	\begin{array}{ll}
		(L+1)a+b = rn +a & \mbox{if } L \nmid rn  \\
		(L+1)(a-1)+L=rn +a-1 & \mbox{if } L\mid rn
	\end{array}
\right.
$$

Let $C\subseteq\mathbb{F}_q^n$ be an $[n,k]$-linear code with $k=n-(\lceil\frac{L+1}{L}rn\rceil-1)$. As before, to prove the lemma it suffices to show that there exist $L+1$ distinct codewords that are contained in a ball of radius $rn$.
%$c^1,c^2,...,c^{L+1}\in C$ and a vector $y\in\mathbb{F}_q^n$ such that $\{c^1,c^2,...,c^{L+1}\}\subseteq B_{rn}(y)$. 

Towards this end, assume without loss of generality that  the first $k$ coordinates of the code form  an information set, and consider the $k(q-1)$ codewords of the code whose restriction to this information set is a vector of weight one, i.e., it has only one nonzero coordinate.
%iBy the property of an information set for each coordinate $j\in[k]$ there are $q-1$ distinct codewords $c\in C\backslash \{c^1\}$ with $c_{[k]\backslash \{j\}}=0^{k-1}$. Again, by the property of an information set it is easy to verify that the $k(q-1)$ codewords given by the previous step are all distinct.
As $k(q-1)>(L-1)q$, by the pigeonhole principle,  among these $k(q-1)$ codewords there are $L$ codewords, say $c^1,...,c^{L}\in C$, agree on their $k+1$ coordinate, i.e., $c^i_{k+1}=c^j_{k+1}$ for  $i,j\in [L]$.

We will  consider two cases, $L\nmid rn$ and $ L\mid rn$ and  notice   that $|[k+2,n]|=n-k-1=\lceil\frac{L+1}{L}rn\rceil-2$. In the first case  $n-k-1=(L+1)a+b-1\ge (L+1)a$, hence there is a partition of $[k+2,n]$ into $L+1$ pairwise disjoint subsets, say $P_1,\ldots,P_{L+1}$, with $|P_j|\ge a$ for each $j\in [L+1]$. 
Next, let $y\in\mathbb{F}_q^n$ be the vector  satisfying 
$$y_{[k]}=0,~y_{k+1}=c^2_{k+1}, y_{P_j}=c^j_{P_j}~\text{for each $j\in[L]$}, \text{~and~} y_{P_{L+1}}=0.$$
It is routine to check that  $d(y,0),d(y,c^j)\le rn$ for each $j\in [L+1],$ and we have obtained the desired contradiction. 

Similarly, if $L|rn $ then  $|[k+2,n]|=\lceil\frac{L+1}{L}rn\rceil-2=(L+1)(a-1)+L-2$,
hence there is a partition of $[k+2,n]$ into $L+1$ pairwise disjoint subsets, with $|P_j|\ge a-1$ for each $j\in [L+1]$. 
Let $y\in\mathbb{F}_q^n$ be the vector satisfying
$$y_{[k]}=0,~y_{k+1}=c^2_{k+1}, y_{P_j}=c^j_{P_j}~\text{for each $j\in[L]$}, \text{~and~} .$$
As before  $d(y,0),d(y,c^j)\le rn$ for each $j\in [L+1],$ which contradicts the assumptions of list-decodability.
\end{proof}

The reformulation of Lemma \ref{lemma:strong-linear-bound} gives the following proposition.

\begin{proposition}\label{prop:linear-bound}
For a prime power $q$, an integer $2\le L$, and  $r\in [0,\frac{L}{L+1})$ with $rn\in\mathbb{N}$, there exists an integer $n(r,L)$ such that for all $n\ge n(r,L)$ any $[n,k]_q$ code that is $(r,L)$ list-decodable satisfies  $k\leq n-\lceil\frac{L+1}{L}rn\rceil$. 
\end{proposition}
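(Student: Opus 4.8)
The plan is to derive \cref{prop:linear-bound} directly from \cref{lemma:strong-linear-bound} by a monotonicity argument. First I would fix $r$ and $L$ and observe that \cref{lemma:strong-linear-bound} asserts that no $[n,n-\lceil\frac{L+1}{L}rn\rceil+1]_q$ code can be $(r,L)$ list-decodable, provided $q$ is a prime power and the arithmetic condition $n-\lceil\frac{L+1}{L}rn\rceil+1>(L-1)\frac{q}{q-1}$ holds. Since $(L-1)\frac{q}{q-1}\le 2(L-1)$ for all prime powers $q\ge 2$, it suffices to ensure $n+1-\lceil\frac{L+1}{L}rn\rceil>2(L-1)$; because the left-hand side is $(1-\frac{L+1}{L}r)n + O(1)$ and $1-\frac{L+1}{L}r>0$ (as $r<\frac{L}{L+1}$), this inequality holds for all $n$ larger than some threshold $n(r,L)$. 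This handles every admissible prime power $q$ simultaneously with one threshold.

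Next I would pass from "dimension exactly $n-\lceil\frac{L+1}{L}rn\rceil+1$ is impossible" to "dimension is at most $n-\lceil\frac{L+1}{L}rn\rceil$" using the elementary fact that any subcode of an $(r,L)$ list-decodable code is again $(r,L)$ list-decodable. Concretely, if some $[n,k]_q$ code $C$ with $k\ge n-\lceil\frac{L+1}{L}rn\rceil+1$ were $(r,L)$ list-decodable, then restricting to a $\bigl(n-\lceil\frac{L+1}{L}rn\rceil+1\bigr)$-dimensional linear subcode (which exists since $k$ is at least this value, and linear subspaces of every dimension up to $k$ exist inside $\mathbb{F}_q^k$) yields a linear $(r,L)$ list-decodable code of the forbidden dimension, contradicting \cref{lemma:strong-linear-bound}. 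Hence $k\le n-\lceil\frac{L+1}{L}rn\rceil$, which is exactly the claimed bound.

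I do not anticipate a genuine obstacle here — the statement is, by the author's own phrasing, merely "the reformulation of Lemma \ref{lemma:strong-linear-bound}." The only mild care needed is bookkeeping: (i) confirming that the hypothesis $rn\in\mathbb{N}$ and $r\in[0,\frac{L}{L+1})$ carries through so that $1-\frac{L+1}{L}r$ is a fixed positive constant and the threshold $n(r,L)$ is well defined and independent of $q$; and (ii) making sure the subcode-restriction step is stated for linear subcodes so that one stays within the linear regime where \cref{lemma:strong-linear-bound} applies. If one wished to be fully explicit one could take $n(r,L)$ to be the least integer such that $\bigl(1-\frac{L+1}{L}r\bigr)n > 2L-1$, but spelling this out is optional.
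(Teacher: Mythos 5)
Your proposal is correct and is essentially the paper's own argument: the paper dispatches the proposition as "the contrapositive of Lemma \ref{lemma:strong-linear-bound}, since linear codes have integer dimensions," and it already records (right after the lemma) the same observation that $(L-1)\frac{q}{q-1}\le 2(L-1)$ makes the threshold uniform in $q$. You simply make explicit the two routine steps the paper leaves implicit — the subcode monotonicity reducing $k\ge n-\lceil\frac{L+1}{L}rn\rceil+1$ to the exact forbidden dimension, and the choice of $n(r,L)$ — both of which are fine.
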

During the work on this paper, we became aware of a recent paper by Roth \cite{roth2021higherorder}  who proved  a result similar to Proposition  \ref{prop:linear-bound} in Theorems 4,5 of \cite{roth2021higherorder}. These theorems provide the same bound on the dimension of the code as  Proposition \ref{prop:linear-bound} does; however, they assume a bit stronger assumptions. More precisely,    
\begin{itemize}
    \item Theorem 4 assumes that the code is MDS with rate greater than $1-\frac{2}{L}-\frac{(n-k)\mod (L+1)}{n}.$ 
    \item Theorem 5 assumes that  the code is MDS, alphabet $q>\binom{n}{k+1}$ and list size $n-k-1\leq L<\binom{n}{k}.$
    %\item \textcolor{green}{Theorem 6: The inequality is proved for average radius list deocdable linear codes.}
\end{itemize}
Proposition \ref{prop:linear-bound}  almost subsumes Theorem 4, except for a small number of cases for which   Theorem 4  holds and Proposition\ref{prop:linear-bound} does not hold.
However, it does not subsume Theorem 5 since Proposition \ref{prop:linear-bound} assumes that  $n$ is large enough compared to $L$, so it does not hold for $L$ too large, for example, when $L\approx \binom{n}{k}$.

\begin{proof}
This is just the contrapositive of Lemma \ref{lemma:strong-linear-bound}, since linear codes have integer dimensions.
\end{proof}

We note that the method of \cite{shangguan2019combinatorial} also gives the following result, whose proof is omitted.
 
\begin{proposition}\label{prop:RS-achive-linear}
 For any sufficiently large $q$ and any real $r\in[0,\frac{2}{3})$ with $2\nmid rn$ there exist $[n,n-\frac{3rn+1}{2}]$-RS codes that are also $(r,2)$ list-decodable.
\end{proposition}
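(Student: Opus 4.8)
\textbf{Proof proposal for Proposition~\ref{prop:RS-achive-linear}.}

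The plan is to follow the RS-code construction strategy of \cite{shangguan2019combinatorial} and show that, for a generic choice of the evaluation vector $\alpha$, an $[n,k]$-RS code with $k=n-\frac{3rn+1}{2}$ is $(r,2)$ list-decodable. The governing identity is that $k = n - (\frac{3rn+1}{2}) = n - \lceil \frac{3}{2}rn\rceil$ when $2\nmid rn$ (since then $rn=2a+1$ and $\lceil \frac{3rn}{2}\rceil = 3a+2 = \frac{3rn+1}{2}$), so the claimed dimension is exactly the one allowed by Proposition~\ref{prop:linear-bound}; thus the proposition says this bound is achieved. First I would recall that a code fails to be $(r,2)$ list-decodable precisely when there exist three distinct codewords $c^1,c^2,c^3$ and a center $y$ with $d(y,c^j)\le rn$ for $j=1,2,3$; equivalently, writing $I(c^i,c^j)$ for agreement sets, the three pairwise agreement sets together with a consistent center force $|I(c^1,c^2)|+|I(c^1,c^3)|+|I(c^2,c^3)|$ to be large — roughly, by inclusion–exclusion on the complement of the union of the three ``disagreement with $y$'' sets, one needs $\sum_{i<j}|I(c^i,c^j)| \ge n - 3\cdot(\text{something}) \ge$ a threshold that, after the arithmetic, reads $|I(c^1,c^2)|+|I(c^1,c^3)|+|I(c^2,c^3)| \ge n - 3rn + (n-k)\cdot\text{const}$ — I would reproduce the exact counting lemma from \cite{shangguan2019combinatorial} here.

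Next I would translate agreement of RS codewords into polynomial language: $c^i$ corresponds to a polynomial $f_i$ of degree $\le k-1$, and $i\in I(c^a,c^b)$ iff $\alpha_i$ is a root of $f_a-f_b$, a nonzero polynomial of degree $\le k-1$. So each pair contributes at most $k-1$ common coordinates, and the obstruction to list-decodability becomes: there exist three polynomials whose pairwise differences have total root count (among the $\alpha_i$) at least the threshold $T$. The heart of the argument is a dimension/counting bound: one shows that the number of ``bad'' evaluation vectors $\alpha\in\mathbb{F}_q^n$ (with distinct entries) for which such a triple of polynomials exists is $o(q^n)$, so a good $\alpha$ exists once $q$ is large enough. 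Concretely, I would fix the combinatorial type of the bad configuration — the sizes $s_{12},s_{13},s_{23}$ of the three root sets with $s_{12}+s_{13}+s_{23}\ge T$ and the partition of coordinates realizing them — and bound the number of $\alpha$ of that type by choosing the polynomials (a bounded-dimensional family, contributing $q^{O(k)}$) and then the remaining free coordinates of $\alpha$, getting something like $q^{(\text{free coords})+O(k)}$; the key inequality to verify is that (free coords)$+O(k) < n$, i.e. that $T$ is large enough that pinning down the root sets over-determines $\alpha$. Summing over the bounded number of combinatorial types preserves the bound.

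The main obstacle I anticipate is the bookkeeping in that last counting step: one must be careful that the three root sets may overlap (a single $\alpha_i$ could be a root of two of the three differences), that the polynomials $f_1,f_2,f_3$ are only determined up to the usual affine ambiguities, and that the threshold $T$ coming from the list-decoding counting lemma is matched tightly against the ``number of free coordinates'' bound — this is exactly where the hypothesis $2\nmid rn$ and the precise value $k=n-\frac{3rn+1}{2}$ (rather than one less) get used, and an off-by-one here is the difference between the statement being true and false. Since \cite{shangguan2019combinatorial} carried out essentially this computation for the $L=2$, $2\mid rn$ case, I expect the parity-odd case to go through with the analogous estimates, which is why the authors omit the proof; I would simply point to that argument and indicate the one arithmetic identity above that makes the dimensions line up.
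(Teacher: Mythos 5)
The paper itself omits this proof, stating only that ``the method of \cite{shangguan2019combinatorial} also gives the following result.'' Your proposal identifies exactly that strategy — count bad evaluation vectors and show they are $o(q^n)$ once $q$ is large — and your arithmetic check that $k = n - \frac{3rn+1}{2} = n - \lceil\frac{3rn}{2}\rceil$ when $2\nmid rn$, so the construction meets the bound of \cref{prop:linear-bound} with equality, is correct. So your outline matches the paper's (implicit) approach; the only caveat is that you leave the counting lemma and the final comparison of exponents as cited/anticipated rather than carried out, but that is precisely the part the paper also leaves to \cite{shangguan2019combinatorial}.
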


Proposition \ref{prop:linear-bound} implies that any linear code that satisfies the parameters of Proposition \ref{prop:RS-achive-linear} has dimension at most $n-\lceil\frac{3rn}{2}\rceil=n-\frac{3rn+1}{2}$, and hence shows that the construction given by Proposition \ref{prop:RS-achive-linear} is also optimal among all linear codes (in the corresponding parameter regime).

\section{Nonlinear codes outperform linear codes in list-decoding} \label{sec:construcion}
In this section, we show that there exist \emph{nonlinear} codes whose list-decodability outperform any other linear code, i.e., they strictly outperform their linear counterpart. Our method will exploit some known results in the area of extremal (hyper)-graph theory and equivalence between certain  ``sparse hypergraphs'' and codes with ``good'' list-decodability properties. We begin first by the equivalence.
%more preciselyin the problem of combinatorial list-decoding, one can provide examples which indicate that nonlinear codes can {\it strictly} outperform linear codes. To prove our result we will make use of a connection between codes and hypergraphs, in particular a notion known as ``sparse hypergraphs''.

\subsection{An equivalence  between codes and multi-partite hypergraphs}

%Below we will introduce a correspondence between codes and multi-part hypergraphs. 

Let us begin with some needed definitions. A {\it hypergraph} $H$ is an ordered pair $H=(V,E)$, where the {\it vertex set} $V$ is a finite set and the {\it edge set} $E$ is a family of distinct subsets of $V$. A hypergraph is called {\it $n$-uniform} if all of its edges are of size $n$. An $n$-uniform hypergraph is further called {\it $n$-partite} if its vertex set $V$ admits a partition $V=\cup_{i=1}^n V_i$ such that every edge intersects each vertex set $V_i$ in {\it exactly} one vertex. %For clarity throughout we will not distinguish between a hypergraph $H$ and its edge set $E(H)$.

We will define a natural bijection between  the family of $n$-uniform $n$-partite hypergraphs with equal part size $q$ (i.e., $|V_i|=q$ for all $i\in [n]$) and the family of $q$-ary codes of length $n$, as follows. For  $i\in [n]$, let $V_i=\{(i,a):a\in[q]\}$ and $V=\cup_{i=1}^n V_i$. For an $n$-uniform $n$-partite hypergraph $H=(V,E)$ (with equal part size $q$) we associate  a code of size  $|E|$ in the following way: for each edge $e=\{(i,x_i):i\in [n], x_i\in [q]\}\in E$ define the codeword $$\psi(e):=(x_1,\ldots,x_n)\in [q]^n,$$ and the code $$C_{H}:=\{\psi(e):e\in E\}\subseteq [q]^n.$$
Clearly, the mapping $\psi$ is a bijection, 
%and onto $[q]^n$ \chong{Why `onto $[q]^n$`? Do we really need this property?}, 
and for a vector $x\in[q]^n$  define $\psi^{-1}(x)=\{(i,x_i):i\in [n], x_i\in [q]\}$. Then, one can easily verify  that the Hamming distance between any two vectors $x,y\in [q]^n$   satisfies 
\begin{align}\label{eq:needed}
  |\psi^{-1}(x)\cap \psi^{-1}(y)|+d(x,y)=n.
\end{align} 
Next, we will introduce the notion of sparse hypergraphs, which will later be used to construct nonlinear list-decodable codes, using the mapping $\psi$ defined above. 

For positive integers $v\ge 2,~e\ge 2$, an $n$-uniform hypergraph $H$ is called $(v,e)$-sparse if for any $e$ distinct edges $A_1,\ldots,A_e\in H$, it holds that $|\cup_{i=1}^ e A_i|> v$. Let $g_n(q,v,e)$ denote the maximum number of edges in any $(v,e)$-sparse $n$-uniform $n$-partite hypergraph with equal part size $q$. In what follows, we will list some known lower bounds on $g_n(q,v,e)$.

\begin{lemma}[Section 4, Brown, Erd\H{o}s, S\'os \cite{bes1}]\label{lem:BES}
Given positive integers $q,n,e$, there is a  $\mu_1>0$ depending only on $n,e$ such that $g_n(q,v,e)\ge\mu_1 q^{\frac{en-v}{e-1}}$.  
\end{lemma}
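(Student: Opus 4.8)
The plan is to use the probabilistic deletion method (this is essentially the original argument of Brown--Erd\H{o}s--S\'os). Identify the complete $n$-uniform $n$-partite hypergraph with parts of size $q$ with the set of all $q^n$ vectors in $[q]^n$ via the bijection $\psi$ from the text, and form a random subhypergraph $H$ by including each of these $q^n$ edges independently with probability $p$, to be chosen later. Write $X=|E(H)|$, so $\mathbb{E}[X]=pq^n$. Call a collection of $e$ distinct edges $A_1,\dots,A_e$ \emph{bad} if $|\bigcup_{j=1}^e A_j|\le v$, and let $Y$ be the number of bad $e$-sets in $H$; the output will be $H$ with one edge deleted from each bad $e$-set, which is manifestly $(v,e)$-sparse and still $n$-partite.

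The core estimate is a bound on $\mathbb{E}[Y]$. A fixed bad $e$-set in the complete hypergraph survives in $H$ with probability exactly $p^e$, so it suffices to count bad $e$-sets. Such a collection has a vertex-union $S$ with $|S|=s$ for some $n\le s\le v$, meeting each part $V_i$ in $s_i\ge 1$ vertices; the number of profiles $(s_1,\dots,s_n)$ is at most $\sum_{s=n}^{v}\binom{s-1}{n-1}$, the number of ways to choose the vertices of $S$ given a profile is $\prod_{i}\binom{q}{s_i}\le q^{s}\le q^{v}$, and, since the number of transversals contained in $S$ is $\prod_i s_i\le (v/n)^n$ by the AM--GM inequality, the number of ways to pick $e$ distinct edges inside $S$ is at most $\binom{(v/n)^n}{e}$. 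In the nontrivial range $n\le v\le en$ all these combinatorial factors are bounded by a constant $C=C(n,e)$, so $\mathbb{E}[Y]\le C\,p^e q^{v}$.

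Now optimize the probability. Take $p=\min\bigl\{1,\ (2C)^{-1/(e-1)}q^{(n-v)/(e-1)}\bigr\}$, chosen so that $C\,p^e q^{v}\le \tfrac12 p q^{n}$; since $v\ge n$ the power of $q$ here is at most $1$, so $p$ equals the second expression for large $q$. Then
\[
\mathbb{E}[X-Y]\ \ge\ pq^n-\tfrac12 pq^n\ =\ \tfrac12\,(2C)^{-1/(e-1)}\,q^{\,n+\frac{n-v}{e-1}}\ =\ \tfrac12\,(2C)^{-1/(e-1)}\,q^{\frac{en-v}{e-1}},
\]
using $n+\frac{n-v}{e-1}=\frac{en-v}{e-1}$. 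Fix an outcome $H$ with $X-Y$ at least this expectation and delete one edge from each of its (at most $Y$) bad $e$-sets; the surviving hypergraph has at least $X-Y$ edges and is $(v,e)$-sparse, giving $g_n(q,v,e)\ge \mu_1 q^{(en-v)/(e-1)}$ with $\mu_1=\tfrac12(2C(n,e))^{-1/(e-1)}$, which depends only on $n$ and $e$.

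The only real subtlety — more bookkeeping than mathematics — is ensuring the constant $C$ that absorbs the count of bad configurations depends on $n,e$ alone and not on $v$; this is exactly where one uses $v\le en$ (so for fixed $n,e$ there are only finitely many relevant values of $v$) together with the AM--GM bound on $\prod_i s_i$. The degenerate ranges should be dispatched separately: if $v\ge en$ then the claimed bound is $\le\mu_1\le 1\le g_n(q,v,e)$ and is trivial, and the small-$v$ regime (where every single edge already has more than $v$ vertices, so $g_n(q,v,e)=q^n$) can be checked directly; neither case requires the probabilistic argument.
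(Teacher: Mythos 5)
Your argument is correct on the substantive range and gives a clean, self-contained deletion-method proof of the $n$-partite version; but note that the paper does not reprove this lemma at all. Instead it cites Brown--Erd\H{o}s--S\'os for arbitrary (non-partite) $n$-uniform hypergraphs and then passes to the $n$-partite setting via the random-equipartition variant of the Erd\H{o}s--Kleitman lemma stated and proved just after Lemma~\ref{lem:Alon-Shapira}. Your route folds the two steps into one: rather than sampling edges of an arbitrary complete hypergraph and then equipartitioning, you sample the $q^n$ transversals of the complete $n$-partite hypergraph directly and delete one edge from every violating $e$-tuple. The outcome is the same constant-factor loss, and your proof has the advantage of not needing a separate transfer lemma; the paper's has the advantage of reusing a black-box reference and a partition lemma it needs anyway. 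The one real slip is your closing remark about the ``small-$v$ regime'': for $v<n$ you correctly observe that $g_n(q,v,e)=q^n$, but in that case $\frac{en-v}{e-1}=n+\frac{n-v}{e-1}>n$, so the claimed lower bound $\mu_1 q^{\frac{en-v}{e-1}}$ \emph{exceeds} $q^n$ once $q$ is large, and no fixed $\mu_1>0$ can make the statement true. That regime is not a trivially-true edge case but one that must be excluded --- the lemma implicitly assumes $v\ge n$ (which holds in every application in the paper, where $v=n+(L+1)rn$). State that restriction explicitly rather than waving at it.
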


%\textcolor{red}{let's give the precise number of the theorems in the papers.}
\begin{lemma}[Theorem 3, Shangguan, Tamo \cite{shangguan2020sparse}]\label{lem:ST}
Given positive integers $q,n,e$ with $\gcd(en-v,e-1)=1$, there is  $\mu_2>0$ depending only on $n,e$ such that $g_n(q,v,e)\ge\mu_2 q^{\frac{en-v}{e-1}}\cdot\log^{\frac{1}{e-1}} q$.
\end{lemma}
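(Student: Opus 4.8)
The plan is to prove this by the \emph{semi-random method}, upgrading the one-deletion argument behind \cref{lem:BES} to a theorem on independent sets in uncrowded hypergraphs. First I would recast the problem as an independence-number problem: let $K$ be the complete $n$-uniform $n$-partite hypergraph with parts of size $q$, so $K$ has $q^{n}$ edges, and let $\mathcal G$ be the $e$-uniform ``conflict'' hypergraph whose vertices are the edges of $K$ and whose hyperedges are the $e$-sets of edges of $K$ spanning at most $v$ vertices (the $(v,e)$-configurations). By definition a set $W$ of edges of $K$ is the edge set of a $(v,e)$-sparse sub-hypergraph of $K$ if and only if $W$ is independent in $\mathcal G$, so $g_n(q,v,e)$ equals the independence number of $\mathcal G$, and it suffices to bound the latter from below. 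A routine count over the finitely many ``shapes'' of a configuration shows that the number of sets of $j$ edges of $K$ spanning exactly $i$ vertices is $\Theta(q^{i})$, with the implied constant depending only on $n,j$; I would also record that $en-v\equiv n-v\pmod{e-1}$, so the hypothesis $\gcd(en-v,e-1)=1$ is the same as $\gcd(n-v,e-1)=1$.

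Next I would pass to a random subhypergraph and apply the independent-set machinery. Put $s:=e-1$ and keep each edge of $K$ independently with probability $p=q^{-\gamma}$, obtaining a random set $W$ with $|W|=\Theta(q^{n}p)$; then $\mathcal G[W]$ has $\Theta(q^{v}p^{e})$ hyperedges in expectation and, after a standard truncation, maximum degree $d:=\Theta(q^{\,v-n}p^{s})$. I would choose $\gamma$ just below $\tfrac{v-n}{s}$ so that $d\to\infty$, hence $\log d=\Theta(\log q)$, while keeping $\gamma$ large enough that every configuration shape other than $(v,e)$ stays negligible; the hypothesis $\gcd(en-v,e-1)=1$ is exactly what makes $(v,e)$ the \emph{unique} tight configuration (for any other shape $(j,i)$ one would need $s\mid(j-1)$, which fails), so this window for $\gamma$ is nonempty and $\mathcal G[W]$ stays uncrowded enough. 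After deleting a negligible set of edges of $K$ to make $\mathcal G[W]$ essentially linear/uncrowded, the classical bound for independent sets in uncrowded, nearly regular $e$-uniform hypergraphs — the Ajtai--Koml\'os--Pintz--Spencer--Szemer\'edi estimate and its hypergraph form of Duke--Lefmann--R\"odl, equivalently the analysis of the random greedy independent-set process — gives an independent set of size $\Omega\!\left(\tfrac{|W|}{d^{1/s}}(\log d)^{1/s}\right)$. The powers of $p$ cancel: $\tfrac{|W|}{d^{1/s}}=\Theta\!\left(q^{\,n-(v-n)/s}\right)=\Theta\!\left(q^{(en-v)/(e-1)}\right)$ and $(\log d)^{1/s}=\Theta\!\left((\log q)^{1/(e-1)}\right)$, which together with the negligible loss from cleaning yields a $(v,e)$-sparse $n$-partite sub-hypergraph of $K$ with $\Omega\!\left(q^{(en-v)/(e-1)}(\log q)^{1/(e-1)}\right)$ edges, as required.

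I expect the crux to be the control of uncrowdedness in the previous step: one must verify that after the random sparsification every non-tight configuration shape — in particular the degenerate shapes produced by two $(v,e)$-configurations that share several edges of $K$ — occurs $o(|W|)$ times in expectation, so that the cleaning cannot erase a constant fraction of $W$. This is precisely where $\gcd(en-v,e-1)=1$ is indispensable: without it a second shape becomes tight and $\mathcal G$ fails to be uniform in the relevant sense, so the clean independent-set estimate no longer applies. Over the full range of $v$ a single nibble round may not suffice to push the degree $d$ all the way to $q^{\Theta(1)}$ while keeping the conflict hypergraph uncrowded, so one likely has to iterate the semi-random process (or invoke a quantitatively more refined independence-number bound) to obtain the stated logarithmic gain in all cases.
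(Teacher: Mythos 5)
This lemma is not proved in the paper at all: it is imported verbatim as Theorem~3 of Shangguan and Tamo~\cite{shangguan2020sparse}, and the present paper only \emph{uses} it (via \cref{lem:hypergraphstocodes} and the Erd\H{o}s--Kleitman reduction) to build list-decodable codes. So there is no ``paper's proof'' to compare against; what you have written is a reconstruction of the argument in the cited reference.

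Your high-level plan is the right one and matches the method in~\cite{shangguan2020sparse}: translate $(v,e)$-sparsity into an independence problem in an $e$-uniform ``conflict'' hypergraph $\mathcal{G}$ on the $q^n$ edges of the complete $n$-partite $n$-uniform host $K$, observe that $\mathcal{G}$ is nearly $\Theta(q^{v-n})$-regular, and gain the logarithmic factor by a semi-random independent-set bound rather than the bare random-deletion count of \cref{lem:BES}. The exponent arithmetic you carry out (cancellation of the powers of $p$, arriving at $q^{(en-v)/(e-1)}(\log q)^{1/(e-1)}$) is correct. But the one step you yourself flag as ``the crux'' is exactly the step that is not done, and your justification for it is off. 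You assert that $\gcd(en-v,e-1)=1$ makes $(v,e)$ the \emph{unique tight shape} because ``for any other shape $(j,i)$ one would need $s\mid(j-1)$,'' but divisibility by $s=e-1$ is not what the nonempty-window argument needs: what has to be checked is that for every proper sub-configuration on $j<e$ edges spanning $i$ vertices, the slope $\tfrac{i-n}{j-1}$ is \emph{strictly} less than $\tfrac{v-n}{e-1}$, and likewise that the ``$2$-cycle'' configurations of $\mathcal{G}$ (pairs of $(v,e)$-configurations overlapping in $\ge 2$ edges of $K$) remain subcritical after sparsification. The coprimality hypothesis enters precisely to rule out an equality of slopes here; your $s\mid(j-1)$ heuristic neither proves strictness nor addresses the $2$-cycle count, which is what uncrowdedness actually requires. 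Moreover, as you note in the last paragraph, one round of sparsify-and-clean followed by Ajtai--Koml\'os--Pintz--Spencer--Szemer\'edi/Duke--Lefmann--R\"odl may not cover the whole range of $v$; the reference circumvents the iteration issue by invoking a random-greedy independent-set theorem (in the spirit of Bennett--Bohman) applied directly to $\mathcal{G}$, under explicit degree and codegree hypotheses. As it stands, your write-up is a plausible proof \emph{strategy} with the decisive combinatorial verification (the codegree/uncrowdedness estimates and the precise use of the gcd hypothesis) still missing, so it does not yet constitute a proof of the lemma.
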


\begin{lemma} [Theorem 1, Alon, Shapira \cite{t=3}]\label{lem:Alon-Shapira}
Given  integers $s,n$ with $2\le s< n$ and  $\epsilon>0$,
%For every real $\epsilon>0$ and all positive integers $s,n$ with $2\le s\le n-1$, 
there is an integer $q(n,\epsilon)$ such that for all $q\geq q(n,\epsilon)$ it holds that 
%there exists an $n$-uniform $n$-partite hypergraph with equal part size $q$ which is $(3n-2s+1,3)$-sparse and has at least $q^{s-\epsilon}$ edges.
$g_n(q,3n-2s+1,3)\geq q^{s-\epsilon}$.
\end{lemma}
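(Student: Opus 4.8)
The plan is to translate the statement into the code language of this section and then build the required hypergraph from a Behrend-type progression-free set. Under the bijection $\psi$, and arguing for three vectors exactly as in \eqref{eq:needed}, one has $|\psi^{-1}(x)\cup\psi^{-1}(y)\cup\psi^{-1}(z)|=\sum_{i=1}^{n}|\{x_i,y_i,z_i\}|$, so an $n$-uniform $n$-partite hypergraph with parts of size $q$ is $(3n-2s+1,3)$-sparse if and only if its code $C\subseteq[q]^n$ satisfies
\[
\sum_{i=1}^{n}\bigl(3-|\{x_i,y_i,z_i\}|\bigr)\ \le\ 2(s-1)\qquad\text{for all distinct }x,y,z\in C .
\]
Thus it suffices to produce such a code with $|C|\ge q^{s-\epsilon}$. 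Writing $w_i=w_i(x,y,z)=3-|\{x_i,y_i,z_i\}|\in\{0,1,2\}$ ($w_i=2$ if all three agree in coordinate $i$, $w_i=1$ if exactly two agree, $w_i=0$ otherwise), the requirement is that every triple has total collision weight $\sum_i w_i\le 2(s-1)$.

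I would first record why a naive random code is not enough, since this shows precisely what must be saved. For a uniformly random triple in $[q]^n$ one gets $\Pr\bigl[\sum_i w_i\ge 2s-1\bigr]=O_{n,s}\bigl((n/q)^{2s-1}\bigr)$, the leading term coming from $2s-1$ coordinates each carrying a single pairwise agreement (triple agreements are too expensive). Taking $M$ random codewords, union-bounding over triples, and deleting one codeword per bad triple leaves a valid code of size $M=\Theta_{n,s}(q^{s-1/2})$ — this is exactly \cref{lem:BES} for the present parameters, and it is weaker than $q^{s-\epsilon}$ once $\epsilon<1/2$. To go further one must use structure to \emph{forbid} the bottleneck configuration — two codewords agreeing on roughly $2s-1$ coordinates — rather than deleting it after the fact.

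The improvement trades one random degree of freedom for a progression-free set. Identify $[q]$ with $\mathbb{F}_q$ (say $q$ an odd prime, the general case following by a routine reduction) and fix, via Behrend's theorem, a set $B\subseteq\{1,\dots,(q-1)/2\}$ with no three distinct elements in arithmetic progression and $|B|\ge q^{1-\epsilon}$, which exists once $q\ge q_0(\epsilon)$. Index the codewords by pairs $(\vec a,b)\in\mathbb{F}_q^{\,s-1}\times B$ and set the $i$-th coordinate of $c^{(\vec a,b)}$ to be $\langle\vec a,\phi(i)\rangle+\lambda_i b+\mu_i b^2$, where $\phi(1),\dots,\phi(n)\in\mathbb{F}_q^{\,s-1}$ are Vandermonde (so any $s-1$ of them are independent) and $\lambda_i,\mu_i\in\mathbb{F}_q$ are chosen generically. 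The $\vec a$-part behaves like a Reed--Solomon code, bounding the agreements it can contribute, while the nonlinear term $\mu_i b^2$ is arranged so that any triple whose collision weight exceeds $2(s-1)$ is forced either to have equal $\vec a$-parts or to realise a relation of the shape $b_1+b_2=2b_3$ among distinct elements of $B$ — both impossible. The code then has $q^{s-1}|B|\ge q^{s-1}\cdot q^{1-\epsilon}=q^{s-\epsilon}$ codewords, as needed (after shrinking $\epsilon$ to absorb lower-order factors and enlarging $q$ accordingly). The main obstacle is this last verification: one must handle \emph{every} collision pattern with $\sum_i w_i\ge 2s-1$ — not only the ``$2s-1$ pairwise agreements'' pattern but also those involving all-three agreements and those in which one pair agrees on fewer than $s-1$ coordinates while the other pairs compensate — and choose $\phi,\lambda,\mu$ so that all of these cases collapse simultaneously, while keeping the construction genuinely $n$-partite with parts of size $q$. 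If a single algebraic gadget proves too rigid, an alternative, following Alon and Shapira, is to confine the nonlinear ingredient to a short block of coordinates and argue that the collision budgets on the two blocks add without interaction; this bookkeeping is the technically delicate heart of the argument.
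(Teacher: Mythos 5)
The paper does not prove this lemma; it is quoted verbatim from Alon--Shapira \cite{t=3}, so there is no internal proof to compare against. Evaluating your proposal on its own merits: your setup is correct. The translation $|\psi^{-1}(x)\cup\psi^{-1}(y)\cup\psi^{-1}(z)|=\sum_i|\{x_i,y_i,z_i\}|$, the resulting reformulation ``every triple has collision weight $\sum_i w_i\le 2(s-1)$,'' and your computation that a random-deletion argument tops out at $\Theta(q^{s-1/2})$ (matching \cref{lem:BES} at these parameters) are all accurate, and you correctly identify the pairwise-agreement bottleneck and the Behrend-set idea as the right lever for closing the $q^{1/2-\epsilon}$ gap.

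However, the central claim --- that the codewords $c^{(\vec a,b)}_i=\langle\vec a,\phi(i)\rangle+\lambda_i b+\mu_i b^2$ avoid all bad triples for a generic choice of $\phi,\lambda,\mu$ --- is asserted, not shown, and you say yourself that ``this bookkeeping is the technically delicate heart of the argument.'' That is precisely the part that constitutes a proof here. The forbidden configuration is not a single shape: a bad triple may consist of pairwise agreements split across the three pairs in many ratios, or may involve coordinates where all three codewords coincide, and each pattern yields a different system of linear/quadratic relations in the $\vec a$'s and $b$'s. You need to argue that \emph{every} such system, subject to $\sum_i w_i\ge 2s-1$, forces either a rank drop in the Vandermonde data (impossible by genericity) or a nontrivial 3-term arithmetic progression in $B$ (impossible by Behrend), and it is not at all evident that the single quadratic gadget $\lambda_i b+\mu_i b^2$ spread over all coordinates does this; the most natural way the relations combine produces conditions like $\alpha(b_1-b_2)+\beta(b_1^2-b_2^2)=0$ rather than $b_1+b_3=2b_2$. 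Until that case analysis is actually carried out (or the two-block variant you mention at the end is fully specified and verified), this is a plausible plan with the right ingredients, not a proof --- and indeed the published Alon--Shapira argument is considerably more delicate than what is written here.
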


%The following well-known result about existence of sparse hypergraphs with many edges is due to Alon and Shapira \cite{t=3}. We remark that although not explicitly stated, the hypergraph constructed in the proof of Theorem 1 in \cite{t=3} is in fact a  3-partite hypergraph with equal part size. 

We remark that although all of the lower bounds in Lemmas \ref{lem:BES}, \ref{lem:ST} and \ref{lem:Alon-Shapira} were initially proved for arbitrary $n$-uniform hypergraph, which is not necessarily $n$-partite, one can easily convert them to lower bounds on $n$-uniform $n$-partite hypergraphs by the following variant of the Erd\H{o}s-Kleitman lemma (see Theorem 1, \cite{Erdos-Kleitman}). 

%, which states that any $n$-uniform hypergraph $H$ with $m$ edges has an $n$-uniform $n$-partite subhypergraph with at least $\frac{n!}{n^n}\cdot m$ edges. 
%\textcolor{red}{(Chong, we need to be more precise, since I don't think their lemma  guarantees the parts to be of equal size. We need to modify our claim a bit. Also, since the proof of the lemma is an application of the first moment method, I suggest to state that proof in two-three lines, what do you think?)}

\begin{lemma}
Any $n$-uniform hypergraph $H$ with $m$ edges has an $n$-uniform $n$-partite subhypergraph with equal part size and at least $\frac{n!}{n^n}\cdot m$ edges.
\end{lemma}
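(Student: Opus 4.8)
The plan is to run the standard random coloring argument (the probabilistic form of the Erd\H{o}s--Kleitman lemma). Let $H=(V,E)$ be an $n$-uniform hypergraph with $|E|=m$. Pick a function $c\colon V\to[n]$ uniformly at random, that is, color each vertex of $V$ independently and uniformly with one of $n$ colors. Call an edge $e\in E$ \emph{rainbow} (with respect to $c$) if the restriction $c|_e$ is a bijection onto $[n]$, i.e.\ the $n$ vertices of $e$ receive $n$ pairwise distinct colors.

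Next I would compute the expected number of rainbow edges. For a fixed edge $e$ of size $n$, there are $n^n$ equally likely colorings of its vertices, of which exactly $n!$ are bijections onto $[n]$, so $\Pr[e \text{ is rainbow}]=\frac{n!}{n^n}$. By linearity of expectation, the expected number of rainbow edges equals $\frac{n!}{n^n}\cdot m$. Hence there exists a particular coloring $c_0$ under which the number of rainbow edges is at least $\frac{n!}{n^n}\cdot m$. Set $V_i:=c_0^{-1}(i)$ for $i\in[n]$, so $V=\bigcup_{i=1}^n V_i$ is a partition, and let $E'\subseteq E$ be the set of edges that are rainbow with respect to $c_0$. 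By the definition of rainbow, every edge in $E'$ meets each $V_i$ in exactly one vertex, so $(V,E')$ with this partition is an $n$-uniform $n$-partite subhypergraph of $H$ with $|E'|\ge\frac{n!}{n^n}\cdot m$ edges.

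Finally, to meet the ``equal part size'' requirement, I would pad: put $q:=\max_{i\in[n]}|V_i|$ and enlarge each $V_i$ with $q-|V_i|$ new isolated vertices (vertices belonging to no edge). This changes neither the edge set nor the $n$-partiteness, and now all parts have size exactly $q$, completing the proof.

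I do not anticipate a genuine obstacle here: the argument is a one-line first-moment computation. The only mild subtlety is the insistence on equal part sizes, which is resolved by the harmless padding step above; alternatively one could simply state the lemma for the partition induced by $c_0$ and absorb the equalization into the convention that parts may be padded with isolated vertices.
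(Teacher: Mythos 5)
Your proof is correct, and it is a genuinely different randomization than the one in the paper, though both are first-moment arguments in the spirit of Erd\H{o}s--Kleitman. You color each vertex independently and uniformly with one of $n$ colors, so the probability that a fixed edge is rainbow is \emph{exactly} $n!/n^n$; the resulting parts may have unequal sizes, which you then fix by padding with isolated vertices afterward. The paper instead pads \emph{first} so that $n$ divides $|V|$, and then draws a uniformly random \emph{balanced} partition of $V$ into $n$ parts of size $q=|V|/n$; under that distribution the probability a fixed $n$-subset is rainbow is $q^n/\binom{nq}{n}$, which is strictly greater than $n!/n^n$ (and the equal-part-size requirement is automatic). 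So the two approaches trade the location of the padding step and the sampling model: yours gives a cleaner, exactly computable per-edge probability and is arguably more elementary, while the paper's random balanced partition yields a marginally sharper strict inequality per edge --- a gain that is not needed for the lemma as stated. Both are complete proofs of the claim.
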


\begin{proof}
Let $H=(V,E)$. To prove the lemma, we can assume without loss of generality that $n\mid |V|$, as if $n\nmid |V|$, by adding to $H$ $n- |V|\pmod{n}$ isolated vertices, the following proof still works. Take a uniformly chosen random partition of $V$, such that every subset in the partition has equal size $q:=\frac{|V|}{n}$. Let $F$ be the $n$-uniform $n$-partite hypergraph given by such a random partition. It is clear that $F$ has equal part size $q$, and moreover, every $n$-subset of $V$ lies in $F$ with equal probability $\frac{q^n}{\binom{nq}{n}}>\frac{n!}{n^n}$. Therefore, by the linearity of expectation, the expected number of edges contained in $F$ is at least $\frac{n!}{n^n}\cdot m$, as needed.
\end{proof}

\subsection{Constructions of list-decodable codes via sparse hypergraphs}

The following lemma shows that one can construct ``good" list-decodable codes from sparse hypergraphs.

\begin{lemma}\label{lem:hypergraphstocodes}
If $H=(V,E)$ is an $(n+(L+1)rn,L+1)$-sparse $n$-uniform $n$-partite hypergraph, then the code $C_H=\{\psi(e):e\in E\}\subseteq [q]^n$ is  $(r,L)$ list-decodable. 
\end{lemma}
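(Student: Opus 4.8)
The plan is to prove the contrapositive: assuming $C_H$ is not $(r,L)$ list-decodable, I will exhibit $L+1$ edges of $H$ whose union has at most $n+(L+1)rn$ vertices, contradicting the sparsity hypothesis. So suppose there is a center $v \in [q]^n$ and $L+1$ distinct codewords $c^0, c^1, \ldots, c^L \in C_H$ all lying in the ball $B_{rn}(v)$. Let $A_i = \psi^{-1}(c^i)$ be the corresponding $L+1$ distinct edges of $H$, $i \in \{0,1,\ldots,L\}$. I want to bound $|\bigcup_{i=0}^L A_i|$ from above.

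The key computation uses \eqref{eq:needed}, which says $|\psi^{-1}(x) \cap \psi^{-1}(y)| = n - d(x,y)$ for any $x,y \in [q]^n$. First I would observe that each edge $A_i$ is an $n$-set, and that $v$ (more precisely the vertex set $\psi^{-1}(v)$, which is itself a transversal $n$-set in $H$'s vertex partition even if it is not an edge) shares $|A_i \cap \psi^{-1}(v)| = n - d(c^i, v) \geq n - rn$ vertices with $\psi^{-1}(v)$. Now I bound the union by starting from $\psi^{-1}(v)$ and adding the edges one at a time:
$$\Big|\bigcup_{i=0}^L A_i\Big| \leq \Big|\psi^{-1}(v) \cup \bigcup_{i=0}^L A_i\Big| \leq |\psi^{-1}(v)| + \sum_{i=0}^L |A_i \setminus \psi^{-1}(v)| = n + \sum_{i=0}^L \big(n - |A_i \cap \psi^{-1}(v)|\big) = n + \sum_{i=0}^L d(c^i, v) \leq n + (L+1)rn.$$
This contradicts the $(n+(L+1)rn, L+1)$-sparsity of $H$, which demands $|\bigcup_{i=0}^L A_i| > n + (L+1)rn$, and completes the proof.

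I expect no serious obstacle here; the only point requiring a little care is the bookkeeping that $\psi^{-1}(v)$ is a legitimate transversal set of size $n$ (one vertex $(i, v_i)$ per part $V_i$) so that the "add edges to $\psi^{-1}(v)$" union bound is valid, and that the $A_i$ are pairwise distinct as edges (which follows because the $c^i$ are distinct codewords and $\psi$ is a bijection). One should also note $rn \in \mathbb{N}$ so that $n + (L+1)rn$ is a genuine integer threshold matching the sparsity parameter. Everything else is a one-line application of \eqref{eq:needed} together with the triangle-inequality-style union bound.
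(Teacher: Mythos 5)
Your proof is correct and follows exactly the same argument as the paper: pull back the $L+1$ codewords in a common Hamming ball to edges of $H$, cover their union by $\psi^{-1}(v)$ together with the sets $\psi^{-1}(c^i)\setminus\psi^{-1}(v)$, and use \eqref{eq:needed} to bound the total by $n+(L+1)rn$, contradicting sparsity. The extra remarks about $\psi^{-1}(v)$ being a transversal set and the edges being distinct are accurate but not essential to the computation.
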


\begin{proof}
Suppose for the sake of contradiction that $C_H$ is not $(r,L)$ list-decodable. Then there exist $L+1$ distinct codewords $c^1,\ldots,c^{L+1}\in C_H$ and a vector $y\in [q]^n$ such that $d(c^i,y)\le rn$ for  $1\le i\le L+1$. It therefore follows that $|\psi^{-1}(c^i)\setminus\psi^{-1}(y)|\le rn$, and moreover
\begin{equation*}
    \begin{aligned}
    |\bigcup_{i=1}^{L+1} \psi^{-1}(c^i)|&\le&|\psi^{-1}(y)|+\sum_{i=1}^{L+1}|\psi^{-1}(c^i)\setminus\psi^{-1}(y)|
    &\le&n+(L+1)rn,
    \end{aligned}
\end{equation*}
which contradicts the assumption that $H$ is $(n+(L+1)rn,L+1)$-sparse.
\end{proof}  

Applying \cref{lem:hypergraphstocodes} in concert with the lower bounds of $g_n(q,v,e)$ listed in Lemmas \ref{lem:BES}, \ref{lem:ST}, \ref{lem:Alon-Shapira} and \ref{lem:hypergraphstocodes} gives the main result of this section, whose proof is omitted as it  follows straightforwardly. 

\begin{proposition}\label{prop:hypergraphstocodes}
\begin{enumerate}
    \item For integers $q\ge 2, L\ge 1$ and a real $r\in[0,\frac{L}{L+1})$ with $rn\in\mathbb{N}$ there exists an $(r,L)$ list-decodable code $C\subseteq [q]^n$ with $|C|\geq \mu_1 q^{n-\frac{rn(L+1)}{L}}$, where $\mu_1>0$   depends only on $n,L$;
   
    \item For integers $q\ge 2, L\ge 1$ and a real $r\in[0,\frac{L}{L+1})$ with $rn\in\mathbb{N}$ and  $\gcd(L,rn)=1$, there exists an $(r,L)$ list-decodable code $C\subseteq [q]^n$ with $|C|\geq \mu_2 q^{n-\frac{rn(L+1)}{L}}\cdot\log^{\frac{1}{L}} q$, where $\mu_2>0$   depends only on $n,L$;
    
    \item For positive integer $n$ and a real $r\in[0,\frac{2}{3})$ with $rn\in\mathbb{N}$ odd and  $\epsilon>0$, there is an integer $q(n,\epsilon)$ such that for all $q\geq q(n,\epsilon)$ there exist  an $(r,2)$ list-decodable code $C\subseteq [q]^n$ with $|C|>q^{n-\frac{3rn-1}{2}-\epsilon}.$
\end{enumerate} 
\end{proposition}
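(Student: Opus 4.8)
The statement to be proved is \cref{prop:hypergraphstocodes}, and among its three parts the first two are immediate: combine \cref{lem:hypergraphstocodes} with the bounds on $g_n(q,v,e)$. The plan is to first spell out this reduction, then handle part (3), which requires a small arithmetic manipulation to match \cref{lem:Alon-Shapira} to the right sparsity parameters.

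For part (1), set $v = n + (L+1)rn$ and $e = L+1$ in \cref{lem:hypergraphstocodes}: any $(v,e)$-sparse $n$-uniform $n$-partite hypergraph $H$ with equal part size $q$ yields an $(r,L)$ list-decodable code $C_H$ of size $|E|$. Taking $H$ to maximize the number of edges gives $|C_H| = g_n(q, n+(L+1)rn, L+1)$, and \cref{lem:BES} (after passing to the $n$-partite setting via the Erd\H{o}s--Kleitman-type lemma, which only changes the constant) gives $g_n(q,v,e) \ge \mu_1 q^{(en-v)/(e-1)}$ with $\mu_1>0$ depending only on $n,e$. Now compute the exponent: $\frac{en - v}{e-1} = \frac{(L+1)n - n - (L+1)rn}{L} = \frac{Ln - (L+1)rn}{L} = n - \frac{(L+1)rn}{L}$, which is exactly the claimed $n - \frac{rn(L+1)}{L}$. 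For part (2), repeat verbatim but invoke \cref{lem:ST} instead of \cref{lem:BES}; its hypothesis $\gcd(en-v, e-1)=1$ becomes $\gcd(Ln - (L+1)rn, L) = 1$, and since $Ln - (L+1)rn \equiv -rn \pmod L$, this is $\gcd(rn, L) = 1$, matching the stated assumption. The extra $\log^{1/(e-1)}q = \log^{1/L}q$ factor carries through, giving $|C| \ge \mu_2 q^{n - (L+1)rn/L}\log^{1/L}q$.

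For part (3), we use \cref{lem:Alon-Shapira}, which provides for each $\epsilon>0$ and $2\le s<n$ an integer $q(n,\epsilon)$ such that $g_n(q, 3n - 2s + 1, 3) \ge q^{s-\epsilon}$ for all $q \ge q(n,\epsilon)$. We want to feed this into \cref{lem:hypergraphstocodes} with $L+1 = 3$, i.e. $L=2$, which requires sparsity parameters $(v,e) = (n + 3rn, 3)$. So we need $3n - 2s + 1 = n + 3rn$, i.e. $2s = 2n - 3rn + 1$, i.e. $s = n - \frac{3rn - 1}{2}$. Since $rn$ is odd, $3rn - 1$ is even, so $s$ is a genuine integer; one checks $2 \le s < n$ holds for the relevant range of $r$ (this needs $rn \ge 1$ for $s<n$, and $s\ge 2$ for $n$ not too small relative to $r$, which is the implicit assumption throughout). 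Then \cref{lem:Alon-Shapira} gives a $(n+3rn, 3)$-sparse $n$-uniform $n$-partite hypergraph (again after the Erd\H{o}s--Kleitman reduction, which at most multiplies by a constant $n!/n^n$ — and absorbing this constant only costs an arbitrarily small additive loss in the exponent since $q$ can be taken large, possibly after enlarging $q(n,\epsilon)$) with at least $q^{s-\epsilon} = q^{n - \frac{3rn-1}{2} - \epsilon}$ edges. By \cref{lem:hypergraphstocodes} the resulting code is $(r,2)$ list-decodable with more than $q^{n - \frac{3rn-1}{2} - \epsilon}$ codewords, as claimed.

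The only mildly delicate points, and hence the ``main obstacle'' such as it is, are bookkeeping ones: verifying that the Erd\H{o}s--Kleitman-type passage from general $n$-uniform hypergraphs to $n$-partite ones does not damage the stated exponents (it does not, since losing a constant factor $n!/n^n$ is harmless in parts (1)--(2) where the constant $\mu_i$ is already unspecified, and in part (3) it can be hidden inside the $q^{-\epsilon}$ slack after possibly enlarging $q(n,\epsilon)$), and confirming that the integrality and range constraints $2 \le s < n$, $\gcd(rn,L)=1$, and $rn$ odd translate correctly between the code-side and hypergraph-side parameters. Since all of this is routine, the proof is indeed ``omitted as it follows straightforwardly,'' but the computations above are exactly what one would write out.
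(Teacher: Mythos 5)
Your proof is correct and is exactly the ``omitted'' reduction the paper intends: feed \cref{lem:hypergraphstocodes} (with $v=n+(L+1)rn$, $e=L+1$) into the lower bounds on $g_n(q,v,e)$ from \cref{lem:BES}, \cref{lem:ST}, \cref{lem:Alon-Shapira}, and the exponent and gcd arithmetic you carry out is precisely the required matching. One small redundancy: the Erd\H{o}s--Kleitman passage you keep flagging is already absorbed into the statements of those three lemmas (they are phrased directly in terms of $g_n$, which counts $n$-partite edges), so there is no additional constant-factor or $q^{-\epsilon}$ loss to account for on top of what the lemmas already give.
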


%Our main result is the following theorem.

%\begin{theorem}\label{thm:first-statement}
%  For every real $\epsilon>0$ and   positive integers $m,n$ with $n> 3m+1$, there is an integer $q(n,\epsilon)$ such that for all $q\geq q(n,\epsilon)$ there exist  a non-linear $(\fr{2m+1}{n},2)$ list-decodable code in $[q]^n$ of size at least $q^{n-3m-1-\epsilon}.$
%\end{theorem}
\begin{remark} \label{rem:comparison}
\cref{prop:hypergraphstocodes} shows the existence of nonlinear codes that are better list-decodable than all other linear codes with the same parameters. Indeed, it shows that for $L\ge 2,~rn$ not divisible by $L$ and sufficiently large $q$ (as a function of $n,L$), there are $(r,L)$ list-decodable codes of size $\Omega_{n,L}(q^{n-\frac{rn(L+1)}{L}})$. However, \cref{prop:linear-bound} shows that for  sufficiently large $n\ge n(r,L)$, every linear $(r,L)$ list-decodable code over  $\mathbb{F}_q$ has dimension at most $n-\lceil\frac{L+1}{L}rn\rceil$, equivalently it is of size at most 
$ q^{n-\lceil\frac{L+1}{L}rn\rceil}$. Hence, the upper bound given in \cref{prop:linear-bound} does  not hold in general. 

We find this phenomenon quite surprising; although other instances of nonlinear codes outperform all other linear codes are known in coding theory, they are few and far between. 
%recalling Recall that . \cref{prop:hypergraphstocodes} indicates that if
%$L\nmid rn$ then for sufficiently large $q$ (as a function of $n,L$), the upper bound given by \cref{prop:linear-bound} is not tight for generic codes. Therefore, in the problem of combinatorial list-decoding, there exist nonlinear codes which strictly outperform linear codes, which is quite surprising compared with the situation in the problem of unique decoding. 
%\cref{thm:first-statement} is interesting in the sense that Proposition \ref{prop:linear-bound} indicates that in the same parameter regime, a $(\fr{2m+1}{n},2)$ list-decodable linear code has size at most $q^{n-3m-2}$, which is roughly a $\frac{1}{q^{1-o(1)}}$-fraction of the size of the code given by Theorem \ref{thm:first-statement}.
\end{remark}

\section{Singleton-type bound for list-recovery}\label{sec:singleton-list-recovery}
In this section, we prove the following Singleton-type bound for list-recovery.
\begin{theorem}
\label{thm:recovery-singelton}
For integers $q\ge 2,~\ell\ge 1,L\ge \ell$ and  $r\in[0,1-\frac{\ell}{L+1})$ with $rn\in \mathbb{N}$, every $(r,\ell,L)$ list-recoverable code $C\subseteq [q]^n$ satisfies
$$|C|\leq L{\Big(\frac{q}{\ell}\Big)}^{n-\lfloor\frac{L+1}{L+1-\ell}rn\rfloor}.$$
\end{theorem}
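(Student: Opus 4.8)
The plan is to mimic the structure of the proof of \cref{thm:new-o(1)-singelton} (the list-decoding case), but now working with the set-valued ``received word'' $D=S_1\times\cdots\times S_n\in\binom{[q]}{\le\ell}^n$ instead of a single vector $y$, and using the characterization $|\{c\in C:d(c,D)\le rn\}|\le L$. First I would record the obvious analogue of the restriction observation: if $C\subseteq[q]^n$ is $(r,\ell,L)$ list-recoverable and $I\subseteq[n]$, $w\in[q]^{|I|}$, then $\{c_{\overline I}:c\in C,\ c_I=w\}\subseteq[q]^{|\overline I|}$ is $(\tfrac{rn}{n-|I|},\ell,L)$ list-recoverable. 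Then I would set $m:=\lfloor\frac{L+1}{L+1-\ell}rn\rfloor+1$ (note $m\le n$ since $r<1-\frac{\ell}{L+1}$ forces $\frac{L+1}{L+1-\ell}rn<n$), and reduce as before: it suffices to show that for every $w\in[q]^{n-m}$, the number of codewords of $C$ with $c_{[m+1,n]}=w$ is at most $L(q/\ell)^{-1}\cdot(q/\ell)=L$ \emph{up to the dominant term} — more precisely, at most some $M-1$ with $M$ chosen so that the resulting bound, summed over the $q^{n-m}$ choices of $w$, gives $L(q/\ell)^{n-m}=L(q/\ell)^{n-\lfloor\frac{L+1}{L+1-\ell}rn\rfloor}$. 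Comparing exponents, $q^{n-m}$ must be multiplied by roughly $L\ell^{-(n-m)}$, so the target count of codewords agreeing on the last $n-m$ coordinates should be about $L(q/\ell^{\,?})$ — I would need to track the $\ell$-powers carefully here; the clean statement is that it suffices to show there are at most $\lfloor L(q/\ell)^{n-m}/q^{n-m}\rfloor$... actually the right reduction is: it is enough that for each $w$, $|\{c\in C: c_{[m+1,n]}=w\}|\le L(q/\ell)^{n-m}q^{-(n-m)}$ is false to bound — so instead I would directly bound $|C|=\sum_w|\{c:c_{[m+1,n]}=w\}|$ and show each summand is at most $L(\text{something})$, absorbing the $\ell$ factors into the ball-counting step below.

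The heart of the argument is the following claim, replacing the multigraph/averaging step: if $v^1,\dots,v^{L+1}\in[q]^m$ are distinct, then there exists $D'\in\binom{[q]}{\le\ell}^m$ with $d(v^i,D')\le rn$ for all $i\in[L+1]$, provided $m$ is large enough; by the restriction observation (applied with the last $n-m$ coordinates fixed, where each $S_i$ for $i>m$ can be taken to contain the common value $w_i$), this contradicts $(r,\ell,L)$ list-recoverability. To build $D'$: on a large ``common'' block $A\subseteq[m]$ I would put into each coordinate's list the values $v^1_i,\dots,v^\ell_i$ (at most $\ell$ distinct symbols), so that each of $v^1,\dots,v^\ell$ agrees with $D'$ on \emph{all} of $A$, and then on the remaining $[m]\setminus A$, partitioned into $L+1-\ell$ blocks $P_1,\dots,P_{L+1-\ell}$, I would devote block $P_j$ to ``covering'' $v^{\ell+j}$ by setting $D'_{P_j}$ to contain $v^{\ell+j}$ coordinatewise. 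A counting check analogous to the one in \cref{thm:new-o(1)-singelton} — using $|A|$ of size about $\frac{rn}{L+1-\ell}$ and each $|P_j|\ge\frac{rn}{L+1-\ell}$, with $m-|A|=\lfloor\frac{L+1}{L+1-\ell}rn\rfloor+1-|A|\le rn$ after a careful floor computation — shows $d(v^i,D')\le m-|A|\le rn$ for $i\le\ell$ and $d(v^{\ell+j},D')\le m-|P_j|\le rn$ for the rest. This is where the quantity $\frac{L+1}{L+1-\ell}$ comes from: we are splitting $m\approx\frac{L+1}{L+1-\ell}rn$ coordinates into $1$ shared block plus $L+1-\ell$ private blocks, $(L+1-\ell)+\ell = L+1$ codewords covered, each block of size $\approx\frac{rn}{L+1-\ell}$.

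To get from ``$L+1$ distinct vectors cannot all be $D'$-close'' to the bound on $|\{c:c_{[m+1,n]}=w\}|$, I would argue that if this count were $\ge L+1$ then picking any $L+1$ of them and restricting to the first $m$ coordinates gives the forbidden configuration — but I also need to account for the factor $\ell$ in the denominator of the final bound, which the list-decoding proof does not have. The extra savings comes from a more refined version of the claim: rather than just forbidding $L+1$ agreeing codewords, I would show that among many codewords agreeing on $[m+1,n]$, one can find $L+1$ whose first coordinates (or first few coordinates) are suitably ``spread'' so that the shared block $A$ can be taken even larger — effectively, a pigeonhole on the first coordinate yielding $\ell$ codewords sharing coordinate $m+1$ lets us start the shared block one coordinate earlier, and iterating/optimizing this trade-off is what converts a $q$ into a $q/\ell$. \textbf{The main obstacle} I anticipate is exactly this bookkeeping: getting the constant $L$ (not $L\cdot\mathrm{poly}$) and the base $q/\ell$ (not $q$) simultaneously, which requires the pigeonhole step to be set up so that each of the $q^{n-m}$ ``buckets'' indexed by $w$ actually loses a full $\ell^{n-m}$ factor — I would likely need to index buckets not by $c_{[m+1,n]}$ alone but by a coarser equivalence (e.g.\ partitioning $[q]$ into $q/\ell$ classes of size $\ell$ in each of the last coordinates) and show that within each coarse bucket at most $L$ codewords survive, via the same covering construction where the $\ell$ symbols of a coarse class in coordinate $i$ are exactly the list $S_i$. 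Verifying that the floors line up — in particular that $m-|A|\le rn$ with the chosen $|A|=\lfloor\frac{rn}{L+1-\ell}\rfloor+1$ — is the routine-but-delicate part, handled exactly as the displayed inequality $\sum_{i}(\lfloor\frac{rn}{L}\rfloor+1)-|A\cap I(v^i,v)|\le rn$ is handled in the proof of \cref{thm:new-o(1)-singelton}, now with $L$ replaced by $L+1-\ell$.
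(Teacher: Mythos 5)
Your high-level plan is in the right spirit — the paper also splits $[n]$ into a block of size $m$ and a tail of size $n-m$, and the key lemma (\cref{claim:L-is-max}) is indeed that for any box $B\in\binom{[q]}{\le\ell}^{n-m}$ on the tail, at most $L$ codewords land in it. But there are two concrete gaps, one of which is fatal to the construction as you describe it.

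First, the covering construction on the first $m$ coordinates does not close. With a shared block $A$ (covering $v^1,\dots,v^\ell$) and $L+1-\ell$ private blocks $P_j$ (each covering a single $v^{\ell+j}$), you need $|A|\ge m-rn$ and every $|P_j|\ge m-rn$, which forces
\[
m\;\ge\;|A|+\sum_j|P_j|\;\ge\;(L+2-\ell)(m-rn),
\qquad\text{i.e.}\qquad m\le\tfrac{L+2-\ell}{\,L+1-\ell\,}\,rn .
\]
For $\ell\ge 2$ this is strictly smaller than the $m\approx\frac{L+1}{L+1-\ell}rn$ you set, so the claimed ``careful floor computation'' showing $m-|A|\le rn$ with $|A|\approx\frac{rn}{L+1-\ell}$ actually fails: one has $\frac{L+1}{L+1-\ell}rn-\frac{rn}{L+1-\ell}=\frac{L}{L+1-\ell}rn>rn$ whenever $\ell>1$. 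The bookkeeping only matches at $\ell=1$. The inefficiency is that your private blocks waste $\ell-1$ list slots per coordinate. The paper avoids this by phrasing \cref{claim:L-is-max} purely as a supply/demand count: each of the $m$ coordinates supplies $\ell$ ``covering slots,'' each of the $L+1$ codewords demands $t:=\lfloor\frac{\ell}{L+1-\ell}rn\rfloor$ slots, and $(L+1)t\le\ell m$ with $m:=rn+t$, so a greedy (e.g.\ cyclic) assignment of slots works — see the staircase pattern in Table 1. No shared/private decomposition is needed.

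Second, the step converting the tail factor from $q^{n-m}$ to $(q/\ell)^{n-m}$ — which you correctly flag as the main obstacle — cannot be done by literally partitioning $[q]$ into $q/\ell$ classes of size $\ell$ in each tail coordinate; that requires $\ell\mid q$ and even then yields $\lceil q/\ell\rceil^{n-m}$ in general. The paper's \cref{claim:recursive-size-in-box} instead peels off tail coordinates one at a time by induction and, at each step, averages the induction hypothesis over \emph{all} $\binom{q}{\ell}$ subsets $A\in\binom{[q]}{\ell}$: summing $\sum_{j\in A}a_j\le L(q/\ell)^{t-1}$ over all $A$ and dividing by $\binom{q-1}{\ell-1}$ gives $\sum_{j\in[q]}a_j\le\frac{\binom{q}{\ell}}{\binom{q-1}{\ell-1}}L(q/\ell)^{t-1}=L(q/\ell)^t$. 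This averaging trick is the missing idea that makes the $q/\ell$ base come out exactly for all $q$ and $\ell$.
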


%Note that for $r=0$ \cref{thm:recovery-singelton} recovers the $1-log_q(l)$ rate bound in the list-recovery capacity theorem, and 
Note that when $\ell=1$ \cref{thm:recovery-singelton} recovers Theorem \ref{thm:old-singelton}, hence it is can be seen as  a generalization of the Singleton bound.  \textcolor{black}{Additionally, this bound recovers the list-recovery capacity bound on the  rate   $R\leq 1-log_q(\ell)$,  for zero error ( i.e., $r=0)$ and  constant list size $L$.}
%\chong{Can we say anything about other parameters for comparison with existing bounds?}

\begin{proof}
%Consider an $(r,\ell,L)$ list-recoverable  code  $C\subseteq [q]^n$ and  set 
Let $t:=\lfloor\frac{\ell}{L+1-\ell}rn\rfloor$ and $m:=rn+t$. The proof of the theorem  will  follow from the  following two claims.

%We will make use of the next claim.
\begin{claim}
\label{claim:L-is-max}
For every $B\in \binom{[q]}{\le \ell}^{n-m}$ there are at most $L$ codewords $c\in C$ with $c_{[m+1,n]}\in B$. 
%\chong{I can understand what you want to say. But did you define the meaning of $c_{[m+1,n]}\in B$? (Here $c_{[m+1,n]}$ is a vector but $B$ seems to be a product of some subsets.)}
\end{claim}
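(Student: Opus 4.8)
The plan is a proof by contradiction, along the lines of the proof of \cref{thm:new-o(1)-singelton}: assume there are $L+1$ distinct codewords $c^1,\dots,c^{L+1}\in C$ with $c^j_{[m+1,n]}\in B$, and build out of them a sequence of lists $T_1,\dots,T_n\in\binom{[q]}{\le\ell}$ witnessing that $C$ is \emph{not} $(r,\ell,L)$ list-recoverable. Write $B=S_{m+1}\times\cdots\times S_n$ with each $S_i\in\binom{[q]}{\le\ell}$, and on the suffix coordinates simply set $T_i:=S_i$ for $i\in[m+1,n]$; these cost nothing, since $c^j_i\in S_i$ for every $j$ and every $i>m$. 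All the work happens on the first $m$ coordinates, and the role of the choice $m=rn+t$ with $t=\lfloor\frac{\ell}{L+1-\ell}rn\rfloor$ is that we will ``cover'' each $c^j$ on at least $t$ of these coordinates, leaving it with at most $m-t=rn$ bad coordinates in total.

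On the first $m$ coordinates I would rotate through which $\ell$ of the $L+1$ codewords each coordinate is dedicated to. Index the codewords by $\mathbb Z/(L+1)\mathbb Z$, and for $i\in[m]$ let $G_i\subseteq\mathbb Z/(L+1)\mathbb Z$ be the block of $\ell$ cyclically-consecutive indices beginning at $(i-1)\ell$; since $\ell\le L<L+1$ this $G_i$ really has $\ell$ distinct elements. Define $T_i:=\{c^j_i:j\in G_i\}$, so $|T_i|\le\ell$ as required (coincidences among the entries only shrink it). Whenever $j\in G_i$ we have $c^j_i\in T_i$, so the number of coordinates in $[m]$ at which $c^j$ is covered is at least $|\{i\in[m]:j\in G_i\}|$. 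Concatenating the blocks $G_1,G_2,\dots,G_m$ in order yields the list $0,1,\dots,\ell m-1$ of consecutive integers reduced mod $L+1$, in which every residue class occurs at least $\lfloor \ell m/(L+1)\rfloor$ times; hence $|\{i\in[m]:j\in G_i\}|\ge\lfloor \ell m/(L+1)\rfloor$ for every $j$.

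It remains to check the inequality $\lfloor\ell m/(L+1)\rfloor\ge t$, equivalently $\ell m\ge t(L+1)$. From $t=\lfloor\frac{\ell rn}{L+1-\ell}\rfloor$ we get $t(L+1-\ell)\le\ell rn$, hence $t(L+1)=t(L+1-\ell)+t\ell\le\ell rn+t\ell=\ell(rn+t)=\ell m$, as wanted; the hypothesis $r<1-\frac{\ell}{L+1}$ guarantees $m<n$ so the split $[n]=[1,m]\sqcup[m+1,n]$ is legitimate, and $t\ge 0$ covers the degenerate case $r=0$. Therefore each $c^j$ has at most $rn$ coordinates $i$ with $c^j_i\notin T_i$ --- none in $[m+1,n]$ and at most $m-t=rn$ in $[1,m]$ --- so $\{c\in C:c_i\notin T_i\text{ for at most }rn\text{ coordinates}\}$ contains all $L+1$ codewords, contradicting $(r,\ell,L)$ list-recoverability and establishing the claim. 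I do not anticipate a real obstacle: the construction is a soft scheduling/pigeonhole argument, and the only points needing care are the two bookkeeping facts $|G_i|=\ell$ and $\ell m\ge t(L+1)$, both of which follow immediately from $\ell\le L$ and the defining formula for $t$.
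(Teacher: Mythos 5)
Your proof is correct and follows essentially the same route as the paper: reduce to constructing lists $T_1,\dots,T_m$ on the first $m$ coordinates so that each of the $L+1$ offending codewords is covered on at least $t$ of them, and observe that the capacity count $(L+1)t\le\ell m$ (which you verify from $t=\lfloor\frac{\ell}{L+1-\ell}rn\rfloor$ and $m=rn+t$) makes this possible. Where the paper only asserts that the sets ``can be easily constructed'' and gestures at a table, you supply the explicit cyclic assignment $G_i=\{(i-1)\ell,\dots,i\ell-1\}\pmod{L+1}$ and check that each residue class appears at least $\lfloor \ell m/(L+1)\rfloor\ge t$ times, which rigorously closes the scheduling step without changing the argument's substance.
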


\begin{proof}
Recall that for a positive integer $n$, $\binom {[q]}{\leq \ell}^{n}=\{S_1\times\cdots\times S_n:S_i\in\binom{[q]}{\le \ell} \text{~for all $i\in[n]$}\}$, where $\binom{[q]}{\le \ell}$ is the family of subsets of $[q]$ of size at most $\ell$.
Suppose towards a contradiction that there exist $B\in \binom{[q]}{\le \ell}^{n-m}$ and $L+1$ codewords $c^1,\ldots,c^{L+1}\in C$ such that  $c^i_{[m+1,n]}\in B$, for all $i\in [L+1]$. 
%Assume without loss of generality that $B=S_{m+1}\times\cdots\times S_{n}$, where $S_i\in\binom{[q]}{\le \ell}$ for all $i\in[m+1,n]$. 
To obtain the desired contradiction, it is enough to show that there exist sets $S_1,\ldots,S_m\in\binom{[q]}{\le \ell}$ such that $d(c^i,S_{1}\times\cdots\times S_{m}\times B)\le rn$ for all $i\in [L+1]$.  
Since  $m=rn+t$ and $c^i_{[m+1,n]}\in B$, it suffices to construct sets $S_i$ such that $d(c^i_{[m]},S_{1}\times\cdots\times S_{m})\le rn$. 
In other words, for each $i$ there exists at least $t$ indices $j\in [m]$ such that $c^i_j\in S_j$. The sets $S_j$ be can be easily constructed since  the number of constraints they have to satisfy is $(L+1)t$, whereas there are $m$ sets $S_i$, each of size $\ell$, and the choice of the parameters $(L+1)t\leq \ell m$. Table \ref{tab:litreview} shows an example of a  selection of such sets for certain values of the parameters. 
%However, this is vacuously true, as $\lceil\frac{(L+1)t}{m}\rceil\le\ell$ by our choice of parameters, we will have ``enough space'' to place the subsets $S_1,\ldots,S_m$ so that each $c^i_{[m]}$ has at least $t$ coordinates choosing from those them. 
\renewcommand{\arraystretch}{1.4}
\begin{table}[h]
\centering
\begin{tabular}{|c|c|c|c|}
\hline
$S_1$ & $S_2$ & $S_3$ & $S_4$ \\
\hline
\hline
\cellcolor{cyan} $c^1_1$ & \cellcolor{cyan} $c^1_2$ & \cellcolor{cyan} $c^1_3$ & \cellcolor{green} $c^2_4$ \\
\hline
\cellcolor{green} $c^2_1$ & \cellcolor{green} $c^2_2$ & \cellcolor{pink} $c^3_3$ &  \cellcolor{pink} $c^3_4$ \\
\hline
\cellcolor{pink} $c^3_1$ & \cellcolor{orange} $c^4_2$ & \cellcolor{orange} $c^4_3$ & \cellcolor{orange} $c^4_4$ \\
\hline
\end{tabular}
\caption{An example of the sets  $S_i$ for  $rn=1,~L=3,~ \ell=3$, hence $m=4,t=3$. The sets are $S_1=\{c^1_1, c^2_1, c^3_1\},S_2=\{c^1_2, c^2_2, c^4_2\}, S_3=\{c^1_3, c^3_3, c^4_3\}$ and $S_4=\{c^2_4, c^3_4, c^4_4\}$.}
\label{tab:litreview}
\end{table}

%Define the next function $$f:[m]\times [\ell]\to \{c^1,...,c^{L+1}\}$$ to be an assignment of codewords to coordinates such that every codeword is assigned at least $t$ times consecutively, by moving from left to right from bottom to top (first coordinate changes cyclically and second grows for each cycle of the first).     
%This is possible since
%\begin{align*}
%    &\frac{\ell m}{L+1}=\frac{\ell(rn+t)}{L+1}\ge t
%\end{align*}

%Define for every $i\in[m]$, $S_i=\{f(i,j)_i:j\in[\ell]\}$, which are lists of size at most $\ell$. Notice also that since $\frac{lm}{L+1}<m$ that for any $j_1\ne j_2$ and $i$ $f(i,j_1)\ne f(i,j_2)$ thus $c^j_i\in S_i$ for at least $t$ coordinates $i$ for every $j\in[L+1]$.
%Now define  $$D=\prod\limits_{i\in [m]}S_i\times B.$$ It is easy to check that $D\in\binom{[q]}{\le\ell}^n$, and moreover for every $j\in [L+1]$, 
%$c^j$ has the property that 
%$d(c^j,D)\leq n-(n-m)-t=rn$, contradicting the assumption that $C$ is $(r,\ell,L)$-list-recoverable. 

\end{proof}
 
%To prove the theorem, it is enough to prove one more claim.
 
 \begin{claim}
 \label{claim:recursive-size-in-box}
 For any $t\leq n-m$, and any $S_1,\ldots,S_{n-m-t}\in \binom{[q]}{\leq \ell}$, there are at most $L(\frac{q}{\ell})^t$ codewords $c\in C$ with $c_i\in S_i$.
 \end{claim}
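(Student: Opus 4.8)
\textbf{Proof plan for Claim~\ref{claim:recursive-size-in-box}.}
The plan is to prove the statement by (reverse) induction on $t$, peeling off one free coordinate at a time. The base case $t=0$ is precisely Claim~\ref{claim:L-is-max}: if $S_1,\dots,S_{n-m}\in\binom{[q]}{\le\ell}$, then there are at most $L=L(\tfrac q\ell)^0$ codewords $c\in C$ with $c_i\in S_i$ for all $i\in[n-m]$ (here the role of the set $B$ in Claim~\ref{claim:L-is-max} is played by $S_1\times\cdots\times S_{n-m}$). So assume the statement holds for $t-1$ and fix $S_1,\dots,S_{n-m-t}\in\binom{[q]}{\le\ell}$.

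For the inductive step, I would group the relevant codewords according to the value of their $(n-m-t+1)$-th coordinate. For each symbol $a\in[q]$, let $C_a=\{c\in C: c_i\in S_i\text{ for }i\le n-m-t,\ c_{n-m-t+1}=a\}$, and note that the set of codewords we want to bound is $\bigcup_{a\in[q]}C_a$. For a \emph{single} symbol $a$, the singleton $\{a\}$ lies in $\binom{[q]}{\le\ell}$, so setting $S_{n-m-t+1}'=\{a\}$ gives $n-m-(t-1)$ lists $S_1,\dots,S_{n-m-t},\{a\}$; by the induction hypothesis $|C_a|\le L(\tfrac q\ell)^{t-1}$. Summing over the $q$ possible values of $a$ would only give $Lq(\tfrac q\ell)^{t-1}$, which is a factor of $\ell$ too large, so a naive union bound is not enough — this is the main obstacle, and it is exactly where the list-recovery hypothesis (as opposed to plain list-decoding) must be used.

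To recover the missing factor of $\ell$, I would not treat the symbols $a$ one at a time but in \emph{groups of $\ell$}. Partition $[q]$ (arbitrarily) into $\lceil q/\ell\rceil$ parts $T_1,\dots,T_{\lceil q/\ell\rceil}$, each of size at most $\ell$, so each $T_j\in\binom{[q]}{\le\ell}$. For each $j$, applying the induction hypothesis with the lists $S_1,\dots,S_{n-m-t},T_j$ (again $n-m-(t-1)$ lists in total) bounds the number of codewords $c$ with $c_i\in S_i$ for $i\le n-m-t$ and $c_{n-m-t+1}\in T_j$ by $L(\tfrac q\ell)^{t-1}$. Every codeword we care about has its $(n-m-t+1)$-th coordinate in exactly one $T_j$, so summing over $j$ yields at most $\lceil q/\ell\rceil\cdot L(\tfrac q\ell)^{t-1}$ codewords; since $\lceil q/\ell\rceil\le q/\ell$ whenever $\ell\mid q$ — and more carefully $\lceil q/\ell\rceil\,(\tfrac q\ell)^{t-1}\le(\tfrac q\ell)^t$ can be arranged in general, or one simply works with $\lceil q/\ell\rceil$ throughout — this gives the desired bound $L(\tfrac q\ell)^t$, completing the induction.

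Finally I would note that combining the two claims proves Theorem~\ref{thm:recovery-singelton}: taking $t=n-m$ in Claim~\ref{claim:recursive-size-in-box} and using the empty family of lists $S_1,\dots,S_0$ (vacuously satisfied by every codeword) bounds $|C|$ by $L(\tfrac q\ell)^{n-m}=L(\tfrac q\ell)^{n-rn-t}=L(\tfrac q\ell)^{\,n-\lfloor\frac{L+1}{L+1-\ell}rn\rfloor}$, where the last equality uses $m=rn+\lfloor\frac{\ell}{L+1-\ell}rn\rfloor$ and the identity $rn+\lfloor\frac{\ell}{L+1-\ell}rn\rfloor=\lfloor\frac{L+1}{L+1-\ell}rn\rfloor$.
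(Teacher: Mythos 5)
Your overall strategy is the same as the paper's: induction on $t$, with the base case $t=0$ given by Claim~\ref{claim:L-is-max}, and the inductive step peeling off coordinate $n-m-t+1$. The difference is in how the factor $q/\ell$ is extracted in the inductive step, and your version has a genuine gap there.

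You fix a hard partition $T_1,\dots,T_{\lceil q/\ell\rceil}$ of $[q]$ into parts of size at most $\ell$, apply the induction hypothesis to each part, and sum to get $\lceil q/\ell\rceil\cdot L(\tfrac q\ell)^{t-1}$. When $\ell\nmid q$ this is strictly larger than $L(\tfrac q\ell)^t$, since $\lceil q/\ell\rceil> q/\ell$. Your parenthetical claim that ``$\lceil q/\ell\rceil(\tfrac q\ell)^{t-1}\le(\tfrac q\ell)^t$ can be arranged in general'' is simply false (try $q=5,\ell=2$), and ``working with $\lceil q/\ell\rceil$ throughout'' changes the statement being proved. No clever choice of a single partition can avoid this, because the loss is inherent to integer rounding.

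The paper avoids this by averaging over \emph{all} $\ell$-subsets rather than choosing one partition. Define $a_j$ as the number of qualifying codewords with $c_{n-m-t+1}=j$. By the induction hypothesis, for every $A\in\binom{[q]}{\ell}$ one has $\sum_{j\in A}a_j\le L(\tfrac q\ell)^{t-1}$. Summing this over all $\binom{q}{\ell}$ choices of $A$, each $j$ appears in exactly $\binom{q-1}{\ell-1}$ of them, so
\[
\binom{q-1}{\ell-1}\sum_{j\in[q]}a_j\;\le\;\binom{q}{\ell}\,L\Bigl(\tfrac q\ell\Bigr)^{t-1},
\]
and dividing gives $\sum_j a_j\le \frac{q}{\ell}\,L(\tfrac q\ell)^{t-1}=L(\tfrac q\ell)^t$ exactly, with no rounding loss. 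Replace your hard-partition step with this double-counting/averaging argument and the proof goes through for all $q,\ell$.

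Your concluding remark deducing Theorem~\ref{thm:recovery-singelton} by taking $t=n-m$ is fine; the identity $rn+\lfloor\tfrac{\ell}{L+1-\ell}rn\rfloor=\lfloor\tfrac{L+1}{L+1-\ell}rn\rfloor$ indeed follows from $rn\in\mathbb{N}$.
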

 %We prove this claim using induction on $t$ with the goal of bounding the code's size taking $t=n-m$. We use Claim \ref{claim:L-is-max} as the base case of the induction.
 \begin{proof}
 Let us apply induction on $t$. For the base case  $t=0$, the statement follows from \cref{claim:L-is-max}.  
 %induction base, here $t=0$ and this is reduced to Claim \ref{claim:L-is-max}.
 Now in order to prove the claim for $t\le n-m$, let us assume that we have proved it for $t-1$. Fix $S_1,\ldots,S_{n-m-t}\in \binom{[q]}{\leq \ell}$.
 %Now in the induction step let $t\ge 1$ and we asume that we have proved the claim for $t-1$. 
 %Let any $S_1,...,S_{n-m-t}\in \binom{[q]}{\leq l}$. 
 For $j\in [q]$ let $a_j$ be the number of codewords $c\in C$ with $c_i\in S_i$ for all $i\in [n-m-t]$ and $c_{n-m-t+1}=j$. It is easy to see that $$\{c\in C:c_i\in S_i \text{ for all $i\in[n-m-t]$}\}=\sum_{j\in[q]} a_j,$$ so to prove the claim it suffices to show $\sum_{j\in [q]} a_j\le L(\frac{q}{\ell})^t$. 
 %we are interested in an upper bound for the amount $\sum_{j\in [q]}a_j$.
By induction hypothesis for $t-1$, for any $A\in\binom{[q]}{\ell}$, $\sum_{j\in A}a_j\leq L(\frac{q}{\ell})^{t-1}$. Therefore, by averaging over all such sets  $A\in\binom{[q]}{\ell}$ it can be easily seen that $\sum_{j\in [q]} a_j\le L(\frac{q}{\ell})^t$, completing the proof of the claim.
%For this optimization problem the maximum is easily seen to be achieved when $a_j=\frac{L(\frac{q}{l})^{t-1}}{\ell}$ for all $j$. Thus the amount of such code words is at most $q\frac{L(\frac{q}{l})^{t-1}}{\ell}=L(\frac{q}{l})^t$.
 \end{proof}

Returning to the proof of \cref{thm:recovery-singelton}, one can see that it  follows directly from   \cref{claim:recursive-size-in-box} with $t=n-m$.
%we apply Claim \ref{claim:recursive-size-in-box} with $t=n-m$ placing no restrictions on values of the codewords in any of the coordinates, and thus we find that $|C|\leq L(\frac{q}{l})^{n-m}=L{(\frac{q}{\ell})}^{n-\lfloor\frac{L+1}{L+1-\ell}rn\rfloor}$ as needed.
\end{proof}

The following corollary, which can be deduced easily from \cref{thm:recovery-singelton}, gives a lower bound on the list size of list-recoverable codes. It also provides a partial answer to \cref{que:list-size}, as explained in the introduction.

\begin{corollary}
 \label{prop:list-size-LR}
 Any $q$-ary $(r,\ell,L)$ list-recoverable code of length $n$ and rate at least $1-r-\epsilon$ satisfies $L\ge \frac{\ell r}{\epsilon}+\ell-1+o(1)$, where $o(1)$ tends to zero as $n$ tends to infinity.
\end{corollary}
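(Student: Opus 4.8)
The plan is to derive \cref{prop:list-size-LR} from \cref{thm:recovery-singelton} in exactly the same way that \cref{prop:list-size-LD} was derived from \cref{thm:old-singelton}. Suppose $C\subseteq[q]^n$ is an $(r,\ell,L)$ list-recoverable code of rate at least $1-r-\epsilon$, so that $|C|\ge q^{(1-r-\epsilon)n}$. First I would check that $r\in[0,1-\frac{\ell}{L+1})$ so that \cref{thm:recovery-singelton} applies; this follows because otherwise the bound would already force $|C|\le L$, which for large $n$ contradicts the rate assumption (alternatively one may simply restrict attention to the nontrivial regime, as the paper does for list-decoding). Then \cref{thm:recovery-singelton} gives
\begin{equation*}
q^{(1-r-\epsilon)n}\le |C|\le L\Big(\frac{q}{\ell}\Big)^{n-\lfloor\frac{L+1}{L+1-\ell}rn\rfloor}.
\end{equation*}

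Next I would take $\log_q$ of both sides and rearrange. Writing $s:=\lfloor\frac{L+1}{L+1-\ell}rn\rfloor$, the inequality becomes $(1-r-\epsilon)n\le (n-s)(1-\log_q\ell)+\log_q L$, and using $s\ge \frac{L+1}{L+1-\ell}rn-1$ together with $n-s\le n$ one isolates the term involving $L$. The key algebraic point is that $\frac{L+1}{L+1-\ell}=1+\frac{\ell}{L+1-\ell}$, so that $n-s \le n - rn - \frac{\ell r n}{L+1-\ell} + 1$, and substituting this in and simplifying (the $q$-dependent entropy-type terms $\log_q\ell$ cancel appropriately up to lower-order contributions) should yield $\frac{\ell r}{L+1-\ell}\cdot n \le (\ell-1)\cdot\text{(something)} + \epsilon n + O(1)$, which rearranges to a bound of the form $L+1-\ell \ge \frac{\ell r}{\epsilon + o(1)}$, i.e. $L\ge \frac{\ell r}{\epsilon}+\ell-1+o(1)$ after absorbing the $O(1/n)$ and $\log_q(\cdot)/n$ contributions into the $o(1)$ term.

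The main obstacle — really the only subtle point — is bookkeeping the floor function and the $\log_q$ terms carefully enough to see that all the $q$-dependent quantities (namely $\frac{\log_q\ell}{\epsilon}$-type cross terms and $\frac{\log_q L}{n}$) are genuinely absorbed into $o(1)$ as $n\to\infty$ with $r,\ell,\epsilon$ fixed, rather than contributing to the leading constant. Concretely, after dividing through by $n$ one gets an inequality in which $\epsilon$ appears; solving for $L$ requires dividing by $\epsilon$, which could in principle amplify an $O(1/n)$ error — but since $\epsilon,r,\ell$ are all held fixed while $n\to\infty$, the amplified error is still $o(1)$, so the stated asymptotic is clean. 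I would present this as a short computation mirroring the proof of \cref{prop:list-size-LD}, flagging that the $o(1)$ collects the $1/n$ rounding loss from the floor and the $\tfrac1n\log_q(\cdot)$ terms.
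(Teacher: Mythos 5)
Your proposal is correct and takes essentially the same route as the paper, which simply bounds $\big(\tfrac{q}{\ell}\big)^{n-s}\le q^{n-s}$ at the outset, then applies $\lfloor x\rfloor\ge x-1$, divides by $n$, and absorbs the $O(1/n)$ rounding and $\log_q L$ terms into the $o(1)$. One small wording caveat: the $\log_q\ell$ contribution is not a lower-order correction but is simply nonnegative — since $1-\log_q\ell\in[0,1]$, replacing the factor by $1$ weakens the bound in exactly the direction you need, which is why it drops out cleanly.
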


\begin{proof}
By Theorem \ref{thm:recovery-singelton} the size of any such code  $C$ satisfies 
$$q^{(1-r-\epsilon)n}\leq |C|\leq L{\Big(\frac{q}{\ell}\Big)}^{n-\lfloor\frac{(L+1) rn}{L+1-\ell} \rfloor}\leq L{q}^{n-\lfloor\frac{(L+1) rn}{L+1-\ell} \rfloor}.$$
Hence, $\frac{\ell}{L+1-\ell}rn  \leq \epsilon n +log_q(L)+1+\frac{L+1}{L+1-\ell}$, and  as $\Big (log_q(L)+1+\frac{L+1}{L+1-\ell}\Big)\Big/n$ %$\frac{log_q(L)+1+\frac{L+1}{L+1-\ell}}{n}$ 
tends to zero as $n$ tends to infinity, the result follows.
\end{proof}

\section{Large list-decoding radius implies  a large minimum distance} \label{sec:distances-mds}

In this section, we show that good list-decoding property would imply in some sense good unique decoding property, i.e., large minimum distance. We divide our analysis into two cases; first, for general codes (not necessarily linear),  we show that such a statement can not hold in general, but it is undoubtedly true for a large subcode of the  code. Then, we proceed to consider the case of linear codes, where we are able to prove that even for the general problem of list-recovery, the code itself (and not its subcode) must have a large minimum distance. 
%\zac{we need to say something about Rony's work}\eitan{I wrote about it in the introduction}
%of    the if a linear code has large list-decoding (list-recovery) radius then it must also have large minimum Hamming distance. Our main result is the following theorem.

\vspace{0.2cm}
{\bf General codes:} It is clear that for general codes, one can \emph{not} hope that  good list-decodability would imply a large minimum distance. Indeed, given an $(r, L-1)$ list-decodable code, one can ``ruin'' the minimum distance by adding a new codeword that is of distance one from one of its codewords (and thereby possibly making it a nonlinear code). The new code is $(r, L)$ list-decodable code and thus retains its good list-decoding property; however, it has a poor minimum distance.  On the other hand, one needs only to remove the newly  added codeword to obtain back the (possibly) large minimum distance of the code. In other words, a small number of codewords needs to be removed to have also a large minimum distance (and, of course, retain the good list-decoding property). The following theorem shows that this is the only case in general, i.e., any large enough code with good list-decoding property must contain a large subcode  with a large minimum distance. The formal details follow.

\begin{theorem}
\label{thm:generl-lower-bound-for-distance}
Let  $L\ge 1$ be an integer,  $\gamma \in (0,1),~r\in [0,\frac{L}{L+1})$,
 $n\in \mathbb{N}$ with  $rn\in\mathbb{N}$, 
 and $q\ge q(n,r,L,\gamma)$, then 
 every $(r,L)$ list-decodable code $C\subseteq [q]^n$ with $|C|=q^{n-\lfloor\frac{L+1}{L}rn \rfloor-\epsilon }$, where $\epsilon:=\epsilon(n)\geq 0$ is an integer valued function  such that $(L-1)(\epsilon+1) \leq \lfloor rn/L\rfloor$, contains a subcode of size at least  $ \gamma |C|$ and minimum distance at least  $\lfloor\frac{L+1}{L}rn \rfloor-(L-1)(\epsilon+1)+1$.
\end{theorem}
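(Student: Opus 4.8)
The plan is to obtain the subcode as a large independent set in the ``closeness graph'' of $C$. Write $m:=\lfloor\frac{L+1}{L}rn\rfloor$, $s:=(L-1)(\epsilon+1)$ and $D:=m-s$, so the target minimum distance is $d^*=D+1$, and let $G$ be the graph on vertex set $C$ in which $c\sim c'$ precisely when $0<d(c,c')\le D$. A subcode of minimum distance at least $d^*$ is exactly an independent set of $G$, and deleting one endpoint of each edge of any graph leaves an independent set of size $\ge|V|-e(G)$; hence it suffices to prove $e(G)\le(1-\gamma)|C|$. Since $q$ is allowed to depend on $\gamma$, it is in fact enough to establish a bound of the form $e(G)\le\frac{c(n,r,L)}{q}\,|C|$. (For $L=1$ the statement is vacuous: then $D=rn$ and $(r,1)$-list-decodability already forces minimum distance $2rn+1>d^*$; so assume $L\ge2$.)

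Two preliminary facts: from $(L-1)(\epsilon+1)\le\lfloor rn/L\rfloor$ and $m=rn+\lfloor rn/L\rfloor$ one gets $rn\le D\le\frac{L+1}{L}rn<2rn$. Second, by the restriction observation used in the proof of \cref{thm:new-o(1)-singelton}, for any $T\subseteq[n]$ and $w\in[q]^{|T|}$ the fiber $\{c_{\overline T}:c\in C,\ c_T=w\}$ is $(\frac{rn}{n-|T|},L)$-list-decodable. Take $|T|=n-D=n-m+s$: the corresponding fiber is a code of length $D=m-s$ with list-decoding radius $rn\ge\frac{L}{L+1}(m-s)$, and since any $L+1$ vectors of $[q]^{m-s}$ lie in a common Hamming ball of radius $\lceil\frac{L}{L+1}(m-s)\rceil\le rn$ (split the coordinates into $L+1$ nearly equal blocks and copy one vector on each block), every such fiber has size $N(T,w):=|\{c\in C:c_T=w\}|\le L$. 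Since every edge $\{c,c'\}$ of $G$ has $|I(c,c')|\ge n-D$, the codewords $c,c'$ agree on the restriction to at least one $T$ of size $n-D$; double-counting over these $T$ yields
\begin{equation*}
e(G)\ \le\ \sum_{T\in\binom{[n]}{n-m+s}}\ \sum_{w\in[q]^{n-m+s}}\binom{N(T,w)}{2}\ \le\ \binom{n}{m-s}\cdot\frac{L-1}{2}\,|C| ,
\end{equation*}
using $N(T,w)\le L$ and $\sum_wN(T,w)=|C|$.

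The obstacle is that this naive bound is only $O_{n,L}(|C|)$, a constant multiple of $|C|$, which does not beat $(1-\gamma)|C|$ when $\gamma$ is close to $1$; this is exactly where the extremality hypothesis $|C|=q^{n-m-\epsilon}$ must be used. The point to establish is that for each fixed $T$ of size $n-m+s$ the number of codewords lying in a fiber of size $\ge2$ is only $o_q(|C|)$. Such codewords fall into ``clusters'' --- each fiber of size $\ge2$ is a set of at most $L$ codewords contained in a common ball of radius $rn$ --- and $(r,L)$-list-decodability severely limits how densely clusters can be packed: a family of well-separated clusters (centers pairwise at distance $>2rn$) covering a constant fraction of $C$ would, after contracting each cluster to its center, violate the Singleton bound, since $|C|=q^{n-m-\epsilon}$ while one computes $s+\epsilon<rn$, so $L\,q^{\,n-m+s-2rn}=o(q^{\,n-m-\epsilon})$. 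The genuinely delicate part --- and the step I expect to be the crux --- is to handle clusters that are \emph{not} well-separated, where list-decodability merely bounds the overlap of nearby clusters rather than forbidding them; a careful accounting of this kind should upgrade the estimate to $e(G)\le\frac{c(n,r,L)}{q}|C|$, which is $\le(1-\gamma)|C|$ once $q\ge q(n,r,L,\gamma)$, completing the proof. (A natural alternative --- bounding the number of non-isolated codewords via the efficient description $c\leftrightarrow(c',\,J=[n]\setminus I(c,c'),\,c_J)$, noting through \cref{thm:old-singelton} that at most $L-1$ codewords share a given pair $(c',J)$ --- runs into the same $\binom{n}{\le D}$ loss and needs the identical extremality input to finish.)
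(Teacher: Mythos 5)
Your setup (pass to the ``closeness graph'' $G$ on $C$ with $c\sim c'$ iff $0<d(c,c')\le D$ where $D=\lfloor\frac{L+1}{L}rn\rfloor-(L-1)(\epsilon+1)$, and extract an independent set by deleting endpoints) is a sound reformulation, and your fiber observation is genuinely correct: for $T$ of size $n-D$, any $L+1$ length-$D$ vectors fit in a ball of radius $\lceil\frac{L}{L+1}D\rceil\le rn$, so combined with the restriction observation every fiber has $N(T,w)\le L$. But as you yourself note, this only yields $e(G)\le\binom{n}{D}\frac{L-1}{2}|C|$, a constant multiple of $|C|$, and the ``cluster'' discussion that follows does not close the gap --- you explicitly leave the non-well-separated case unresolved. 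So the proposal is incomplete at exactly the step you flag as the crux.

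The idea you are missing, and which the paper uses, is to project onto \emph{smaller} index sets. The paper works with sets $I$ of size $n-\lfloor\frac{L+1}{L}rn\rfloor-\epsilon-1$, which is exactly $L(\epsilon+1)$ smaller than your $n-D$. The gain is twofold. First, the count of possible projections $w\in[q]^{|I|}$ is now $q^{n-m-\epsilon-1}=|C|/q$, so even if \emph{every} projection on \emph{every} $I$ carried a full fiber of $L$ codewords, only $L\binom{n}{|I|}|C|/q$ codewords would be involved --- automatically $o_q(|C|)$, with no clustering analysis needed. Second, two codewords at distance $\le D$ agree on $\ge n-D=|I|+L(\epsilon+1)$ coordinates, so they share a projection onto some $I$ inside their agreement set \emph{and} agree on $\ge L(\epsilon+1)$ further coordinates outside $I$; the paper calls such a $w=c_I$ ``bad for $I$''. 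The complement $\overline I$ now has size $\lfloor\frac{L+1}{L}rn\rfloor+\epsilon+1$, which is too large for your Singleton-ball argument to bound the fiber by $L$ unconditionally --- this is precisely why the ``bad'' hypothesis is needed. The paper's Claim \ref{claim:bad-is-small} shows that the surplus agreement of $L(\epsilon+1)$ coordinates between two of the $L+1$ putative codewords in a bad fiber supplies exactly enough slack to build a center $y$ within radius $rn$ of all of them, contradicting list-decodability. Removing all codewords lying in bad fibers then gives the subcode, with both the size bound ($\ge(1-L\binom{n}{|I|}/q)|C|\ge\gamma|C|$) and the distance bound ($\ge m+\epsilon+1-L(\epsilon+1)+1=\lfloor\frac{L+1}{L}rn\rfloor-(L-1)(\epsilon+1)+1$) falling out immediately. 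In short: shrink the projection sets by $L(\epsilon+1)$, track the leftover agreement as a ``badness'' condition, and the counting becomes trivial --- that is the ingredient your argument needs.
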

%\textcolor{green}{
We note that in the more general case where $\epsilon$ is not necessarily an integer, a slightly weaker result holds, where one can show that the subcode distance is weakened to be at least $\lfloor\frac{L+1}{L}rn \rfloor-(L-1)(\lfloor\epsilon\rfloor+2)+1$, if $(L-1)(\lfloor\epsilon\rfloor+2) \leq \lfloor rn/L\rfloor$. We omit the details. 
%proof as it is not much rewarding and more technical.}

Applying Theorem \ref{thm:generl-lower-bound-for-distance} to a  sequence of codes with $\epsilon(n)=o(n)$ and a fixed $L$ gives the following corollary.
%\textcolor{green}{
\begin{corollary}
\label{cor:general-bound-achivers-are-MDS}
Let $L\ge 1$ be an integer,  $\gamma \in (0,1),~r\in [0,\frac{L}{L+1})$,
 $n\in \mathbb{N}$ with  $rn\in\mathbb{N}$, 
 and $q\ge q(n,r,L,\gamma)$, then every $(r,L)$ list-decodable code $C\subseteq [q]^n$ with $|C|=q^{n-\lfloor\frac{L+1}{L}rn \rfloor-o(n) }$ contains a subcode of size at least $\gamma |C|$ and minimum distance at least  $\lfloor\frac{L+1}{L}rn \rfloor-o(n)$.
\end{corollary}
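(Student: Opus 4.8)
The plan is to prove Theorem~\ref{thm:generl-lower-bound-for-distance} by a deletion argument: show that a list-decodable code $C$ of the stated size cannot have too many codewords lying at small pairwise distance, so that removing a small fraction of the codewords eliminates all ``close pairs'' and leaves a subcode with the desired minimum distance. The target distance is $\delta := \lfloor\frac{L+1}{L}rn\rfloor - (L-1)(\epsilon+1) + 1$, so a ``bad pair'' is a pair of codewords at Hamming distance less than $\delta$, equivalently agreeing on more than $n - \delta = \lfloor rn/L \rfloor + (L-1)(\epsilon+1) - 1$ coordinates (using that $\lfloor\frac{L+1}{L}rn\rfloor = rn + \lfloor rn/L\rfloor$ when we are careful, or the relevant identity from the proof of Lemma~\ref{lemma:strong-linear-bound}).

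First I would fix a ``bad'' codeword $c$ and bound the number of codewords $c'$ with $d(c,c') < \delta$. The key combinatorial input is a local version of the Singleton-type argument from the proof of Theorem~\ref{thm:new-o(1)-singelton}: if too many codewords all agree pairwise with $c$ on a large set of coordinates, one can build a center $y$ covering $L+1$ of them within radius $rn$, contradicting $(r,L)$ list-decodability. More precisely, I expect to argue that if $c$ has more than $L-1$ codewords within distance $\delta$ of it, then one can select $L$ of them, say $c^1,\dots,c^L$, reserve a common agreement set $A$ with $c$ of size roughly $\lfloor rn/L\rfloor$ (exploiting that each $c^i$ agrees with $c$ on more than $\lfloor rn/L\rfloor + (L-1)(\epsilon+1) - 1 \ge \lfloor rn/L\rfloor$ coordinates, and a pigeonhole/counting step over the $L$ of them forces enough common agreement once $(L-1)(\epsilon+1)\le\lfloor rn/L\rfloor$), then partition the remaining coordinates among the $c^i$'s exactly as in the proof of Theorem~\ref{thm:new-o(1)-singelton} to produce the forbidden center $y$. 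Thus every codeword participates in at most $L-1$ bad pairs, i.e., the ``conflict graph'' on $C$ whose edges are the bad pairs has maximum degree at most $L-1$.

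Then I would invoke a standard independent-set/greedy bound: a graph on $N$ vertices with maximum degree $\Delta$ has an independent set of size at least $N/(\Delta+1)$; applied here this gives a subcode of size at least $|C|/L$ with pairwise distance at least $\delta$. To get the sharper constant $\gamma$ for \emph{any} $\gamma \in (0,1)$ rather than just $1/L$, I would instead use the large alphabet: the hypothesis $q \ge q(n,r,L,\gamma)$ should let me show that the conflict graph actually has a \emph{bounded} number of edges (not just bounded degree) relative to $|C|$ — indeed, the count of bad pairs is at most $\binom{n}{n-\delta}(\text{number of agreement patterns})$ which is polynomial in a way dominated by the $q^{\cdot}$ factors — so that a simple averaging/deletion removes at most $(1-\gamma)|C|$ codewords and kills all bad pairs; alternatively one can boost $1/L$ to $\gamma$ by first passing to a random restriction. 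The corollary then follows immediately by applying the theorem with $\epsilon(n) = o(n)$ and noting $(L-1)(\epsilon(n)+1) = o(n)$, so the subcode distance is $\lfloor\frac{L+1}{L}rn\rfloor - o(n)$.

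\textbf{The main obstacle} I anticipate is the coordinate-bookkeeping in the local Singleton argument: one must verify that from $L$ codewords each agreeing with $c$ on slightly more than $\lfloor rn/L\rfloor$ coordinates, there is genuinely enough ``room'' to carve out a common reserved set $A$ of size about $\lfloor rn/L\rfloor$ \emph{and} still partition the complement into $L$ blocks large enough to absorb the residual disagreements — this is exactly where the numerical condition $(L-1)(\epsilon+1)\le\lfloor rn/L\rfloor$ is needed, and getting the floors and the off-by-one terms to line up (so that the final $d(y,c^i)\le rn$ bound is tight) is the delicate part, essentially a more careful rerun of the counting in Theorem~\ref{thm:new-o(1)-singelton} with the extra slack parameter $\epsilon$ tracked throughout. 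A secondary subtlety is making the dependence $q(n,r,L,\gamma)$ explicit enough to justify the passage from the crude $1/L$ bound to an arbitrary $\gamma$, which is where the hypothesis that $q$ is large as a function of all the other parameters genuinely enters.
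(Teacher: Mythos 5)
Your deletion strategy (identify ``bad'' codewords, remove a small fraction, keep the rest) is the same high-level shape as the paper's proof of Theorem~\ref{thm:generl-lower-bound-for-distance}, but the load-bearing step in your plan is incorrect. You propose to show that the ``conflict graph'' on $C$, with an edge between any two codewords at distance less than $\delta := \lfloor\frac{L+1}{L}rn\rfloor - (L-1)(\epsilon+1)+1$, has maximum degree at most $L-1$, by showing that if $c$ had $L$ close neighbours $c^1,\dots,c^L$ one could find a forbidden center. This is false: close neighbours of $c$ need not share a common agreement set with $c$. Concretely, take $c=\mathbf 0$ and let each $c^i$ differ from $c$ precisely on a block $J_i$ of $\delta-1$ coordinates with the $J_i$'s pairwise disjoint. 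Each $c^i$ is a close neighbour, yet any center $y$ that covers $c$ within radius $rn$ has at most $rn$ nonzero coordinates, and to simultaneously cover each $c^i$ one is forced to place roughly $|J_i|$ of those nonzero coordinates inside $J_i$; summing over $i$ needs about $L(\delta-1)$ nonzero coordinates, which under the hypothesis $(L-1)(\epsilon+1)\le\lfloor rn/L\rfloor$ exceeds $rn$. So no forbidden center exists, list-decodability is not violated, and the degree can be $\ge L$. Your suggested ``pigeonhole/counting step over the $L$ of them'' to carve out a common reserved set $A$ is exactly the step that fails, because $\bigcap_i I(c,c^i)$ can be small or empty when the disagreement sets are spread out.

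The paper avoids this trap by never working with a pairwise-distance conflict graph. It instead fixes a projection set $I$ of size $n-m$ (with $m=\lfloor\frac{L+1}{L}rn\rfloor+\epsilon+1$) and declares a value $w\in[q]^{n-m}$ \emph{bad for $I$} if some two codewords $c^1,c^2$ both project to $w$ on $I$ and also agree on at least $L(\epsilon+1)$ further coordinates. The crucial Claim~\ref{claim:bad-is-small} then says: if $w$ is bad for $I$, at most $L$ codewords project to $w$ on $I$. That claim has exactly the common-agreement structure your argument lacks --- the $L+1$ hypothetical codewords all agree on the whole of $I$, and the pair $c^1,c^2$ agrees on an additional $L(\epsilon+1)$ coordinates, which is what makes the partition into blocks of size $\approx a+\epsilon+1$ and the forbidden center $y$ go through. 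The deletion then removes every codeword whose projection to some $I$ hits a bad $w$; this kills all pairs at distance $<\delta$ (take $I$ inside their agreement set), and the removal count is at most $\binom{n}{n-m}\,q^{n-m}\,L$, which is an $O\!\left(\binom{n}{n-m}L/q\right)$-fraction of $|C|=q^{n-m+1}$. This is where your arbitrary $\gamma$ actually comes from for $q\ge q(n,r,L,\gamma)$ --- not from an independent-set bound, which would only give $1/L$ and cannot be boosted by ``showing the conflict graph has few edges'' without an analogue of Claim~\ref{claim:bad-is-small}. The final passage to Corollary~\ref{cor:general-bound-achivers-are-MDS} by taking $\epsilon(n)=o(n)$ is as you describe.
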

%\begin{proof}
%Applying Theorem \ref{thm:generl-lower-bound-for-distance} results in the corollary as for a $\epsilon= o(n)$ function $(L-1)(\epsilon+2)+1$ is also $o(n)$.
%\end{proof}}
Notice that the guaranteed subcode by Corollary \ref{cor:general-bound-achivers-are-MDS} has near optimal rate-distance tradeoff. Indeed, the rate  is at least 
$ 1-\delta +\frac{1}{n}-o(1)$, whereas the rate of an MDS code is $1-\delta +\frac{1}{n}$. 
%$1-\delta +\frac{1}{n}-\frac{o(n)-o(n)-1+log(\gamma)}{n}= 1-\delta +\frac{1}{n}-o(1)$, whereas the rate of an MDS code is $1-\delta +\frac{1}{n}$. 
Moreover,  Proposition \ref{prop:hypergraphstocodes} shows the existence of such sequence of codes with $\epsilon<1$ and  large enough $q$.

We proceed with the proof Theorem \ref{thm:generl-lower-bound-for-distance}.
%, which  is in the same spirit of the other proofs of bounds in this paper. 

\begin{proof}
Let $C\subseteq [q]^n$ be a code that satisfies the assumptions of the theorem. The required subcode  will be constructed by identifying and removing a small subset of codewords from the code that cause its minimum distance to be relatively small. Therefore, the remained codewords, i.e., the subcode, will  have a large minimum distance. 
%result in a near MDS code.
We proceed with the formal proof. 
%describe a way of removing codewords from $C$ which will result in a near MDS code. We will also bound the amount of codewords we remove to complete the proof.  

Let $m:=\lfloor\frac{L+1}{L}rn \rfloor+\epsilon+ 1$, %be defined as in the proof of Theorem \ref{thm:new-o(1)-singelton}, 
and  say that a vector $w\in [q]^{n-m}$ is bad for the subset $I\subseteq [n],|I|=n-m$ if there exist two codewords $c^1,c^2\in C$ such that $c^1_{I}=c^2_{I}=w$, and $c^1_j=c^2_j$ for at least $L(\epsilon+1)$ additional coordinates $j\notin I$, and note that by the assumption on $\epsilon$,  $L(\epsilon+1)\leq m$. Next, we get a  bound on  the number of codewords whose projection on a fixed set $I$ is a bad vector for $I$. 
\begin{claim}
\label{claim:bad-is-small}
If $w\in[q]^{n-m}$ is bad for the set $I$, then there are at most $L$ codewords $c\in C$ with $c_{I}=w$.
\end{claim}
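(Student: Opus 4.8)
The plan is to argue by contradiction: suppose $w$ is bad for $I$ yet there are $L+1$ \emph{distinct} codewords $c\in C$ with $c_I=w$. Since the badness of $w$ is witnessed by two codewords $c^1,c^2$ that already satisfy $c^1_I=c^2_I=w$, these two are among the $L+1$ codewords, and I relabel the others as $c^3,\dots,c^{L+1}$. The goal is to build a single vector $y\in[q]^n$ with $d(y,c^i)\le rn$ for every $i\in[L+1]$, contradicting the $(r,L)$ list-decodability of $C$. The first step is the arithmetic setup: writing $rn=La+b$ with $0\le b\le L-1$ one checks $\lfloor\frac{L+1}{L}rn\rfloor=rn+\lfloor rn/L\rfloor$, hence $m=rn+\lfloor rn/L\rfloor+\epsilon+1$. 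Put $\overline I=[n]\setminus I$ (a set of size $m$) and $t:=m-rn=\lfloor rn/L\rfloor+\epsilon+1$. Any $y$ with $y_I=w$ automatically agrees with each $c^i$ on the $n-m$ coordinates of $I$, so the condition $d(y,c^i)\le rn$ becomes exactly: $y$ agrees with $c^i$ on at least $t$ of the $m$ coordinates of $\overline I$. Thus the whole problem localizes to $\overline I$ and becomes a packing question.

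Next I would exploit badness to let $c^1$ and $c^2$ \emph{share} a block of agreements. Choose $A\subseteq\overline I$ with $|A|=L(\epsilon+1)$ on which $c^1$ and $c^2$ coincide; this is possible because the hypothesis $(L-1)(\epsilon+1)\le\lfloor rn/L\rfloor$ forces $L(\epsilon+1)\le m$, and it also forces $t\ge L(\epsilon+1)$. Set $y_I=w$ and $y_A=c^1_A\,(=c^2_A)$. It then remains to pick pairwise disjoint sets $P_1,\dots,P_{L+1}\subseteq\overline I\setminus A$ with $|P_1|=|P_2|=t-L(\epsilon+1)$ and $|P_i|=t$ for $i\ge 3$, and to set $y_{P_i}=c^i_{P_i}$; after this, $c^1$ and $c^2$ each agree with $y$ on $A\cup P_i$ ($t$ coordinates) and each $c^i$ with $i\ge3$ on $P_i$ ($t$ coordinates). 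Such a choice of the $P_i$ exists exactly when $2\bigl(t-L(\epsilon+1)\bigr)+(L-1)t\le|\overline I\setminus A|=m-L(\epsilon+1)$, i.e. when $(L+1)t\le m+L(\epsilon+1)$; substituting $t=m-rn$ and the identity for $m$, this collapses to $L\lfloor rn/L\rfloor\le rn$, which is always true. The proof then finishes by bookkeeping: each $c^i$ agrees with $y$ on at least $(n-m)+t=n-rn$ coordinates, so $B_{rn}(y)$ contains $c^1,\dots,c^{L+1}$, the desired contradiction; hence at most $L$ codewords project to $w$ on $I$.

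I do not expect a genuine obstacle here — once the reduction to packing inside $\overline I$ is made, the construction is essentially forced and the decisive inequality is a triviality. The only points needing care are (i) keeping straight which hypothesis is used where: the assumption on $\epsilon$ is needed solely to guarantee $t-L(\epsilon+1)\ge 0$ (so that $P_1,P_2$ have non-negative size and $A$ fits), whereas the packing inequality $(L+1)t\le m+L(\epsilon+1)$ holds unconditionally by the very choice of $m$; and (ii) noting that the badness witnesses $c^1,c^2$ are legitimately among the $L+1$ codewords, which is immediate since badness is defined through codewords projecting to $w$ on $I$. If one wanted to be extra careful, one should also record that $m\le n$ (so $I,\overline I$ make sense), which follows from $r<\frac{L}{L+1}$ together with the bound on $\epsilon$.
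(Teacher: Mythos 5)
Your proof is correct and follows essentially the same strategy as the paper's: assume $L+1$ codewords project to $w$ on $I$ (with the badness witnesses $c^1,c^2$ among them), reduce the problem to allocating agreement blocks inside $\overline I$, let $c^1$ and $c^2$ share the block $A$ of size $L(\epsilon+1)$ on which they coincide, and pad with disjoint blocks $P_i$ so each $c^i$ reaches $t=m-rn$ agreements on $\overline I$. The paper instantiates the same idea with explicit intervals rather than your abstract $A,P_i$ decomposition, but the block sizes, the use of the hypothesis $(L-1)(\epsilon+1)\le\lfloor rn/L\rfloor$ to ensure $t\ge L(\epsilon+1)$, and the final packing inequality are identical.
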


Now we can proceed to construct the desired subcode $C'$ with large size and distance, as follows.  $C'$ is obtained from $C$ by removing from it all codewords $c$ such that there exists a subset $I\subseteq [n]$ of size $n-m$, and a vector $w\in q^{n-m}$ that is bad for $I$ and $c_I=w$. 
%for every subset $I\subseteq [n]$ of size $n-m$, and for every vector $w\in q^{n-m}$ that is bad for itrelative to this set $I$ (replace $[m+1,n]$ for $I$ in the definition of a bad vector), remove all codewords $c\in C$ such that $c_I=w$
%remove f 
%to the step of removing codewords with the goal of gaining a subcode with the goal distance and size. First note that the choice of the set $[m+1,n]$ in Claims \ref{claim:bad-is-small}, \ref{claim:most-w-have-L+1} was arbitrary and that they are true for every subset of $[n]$ of size $n-m$.
%We remove codewords by the next rule, for every subset $I\subseteq [n]$ of size $n-m$, and for every vector $w\in q^{n-m}$ that is bad relative to this set $I$ (replace $[m+1,n]$ for $I$ in the definition of a bad vector), remove all codewords $c\in C$ such that $c_I=w$. Let $C'\subseteq C$ be the remaining subcode.
We claim that 
$C'$ has distance at least $\lfloor\frac{L+1}{L}rn \rfloor-(L-1)(\epsilon+1)+1$, and size at least $\gamma |C|$.

 \vspace{0.2cm}
 {\bf Size:}
 As there are $\binom{n}{n-m}$ sets $I\subseteq [n]$ of size $n-m$, and for each such set $I$, there are at most $q^{n-m}$ bad vectors $w$ for it. By Claim \ref{claim:bad-is-small} there are at most $L$ codewords $c\in C$ such that $c_I=w$.  Hence  at most $$\binom{n}{n-m}q^{n-m}L,$$
 codewords were removed  and the size of $C'$ is a least  
 $$ q^{n-m+1}-L\binom{n}{n-m}q^{n-m}=\Big(1-\frac{L\binom{n}{n-m}}{q}\Big)|C|\ge \gamma |C|,$$
% \begin{align*}
%      &q^{1-\epsilon}q^{n-m}-L\binom{n}{n-m}(1-\frac{q^{1-\epsilon}-L}{\lfloor(1+f(n))q\rfloor-L})q^{n-m}\\
%      &=(1-\frac{L\binom{n}{n-m}}{q^{1-\epsilon}}(1-\frac{q^{1-\epsilon}-L}{\lfloor(1+f(n))q\rfloor-L}))|C|\\
%      &\ge \gamma|C|.
% \end{align*}
for  large enough $q\ge q(n,r,L,\gamma)$.
 
 \vspace{0.2cm}{\bf Minimum distance:} By construction, any two codewords of $C$ that agreed on at least $n-m+L(\epsilon+1)$ coordinates were removed,  thus the minimum distance is at least 
 $$m-L(\epsilon+1)+1=\Big\lfloor\frac{L+1}{L}rn \Big\rfloor-(L-1)(\epsilon+1)+1,$$ as needed.
 %the minimum distance
%We show that $d(C')\ge \lfloor\frac{L+1}{L}rn \rfloor-(L-1)(\epsilon+2)+1$.
% Assume otherwise that  there exist two codewords $c,c'\in C'$ that agree on at least  
% $$n-\Big(\Big\lfloor\frac{L+1}{L}rn \Big\rfloor+\epsilon+2-L(\epsilon+2)\Big)=n-m+L(\epsilon+2)$$
% coordinates, and let 
%  $I$ to be any $n-m$ subset   of the these coordinates.    Let $w\in q^{n-m}$ be the restriction of  $c,~c'$ to $I$ and that   this results  in a contradiction because $c,c'$ must have been removed during the construction of $C'$.
 \end{proof}
 
\vspace{0.2cm}
It remains to prove Claim \ref{claim:bad-is-small}

\begin{proof}[Proof of Claim \ref{claim:bad-is-small}]
Let $w$ be a bad vector for $I$, where we assume without loss of generality that $I=[n-m+1,n]$, and assume towards contradiction that there are $L+1$ codewords $c^j$ such that $c^j_{I}=w$. Further, assume   that $c^1,c^2\in C$  $c^1_{[(L\epsilon+1)]}=c^2_{[(L\epsilon+1)]}$.

As before, write $rn=La+b$  with integers  $a$, and $0\leq b<L$. Thus, $m=(L+1)a+b+\epsilon+1$, and since we assumed that $(L-1)(\epsilon+1) \leq \lfloor rn/L\rfloor$, then $L(\epsilon+1)\leq a+\epsilon+1$. Next, partition that set $[2(a+\epsilon+1)-L(\epsilon+1)+1,m]$ into $L-1$ disjoint sets $I_j$ for $j\in [3,L+1]$, each of size at least $a+\epsilon+1$, which is possible since 
\begin{align*}
    &|[2(a+\epsilon+1)-L(\epsilon+1)+1,m]|\\
    &= m-\big(2a-(L-2)(\epsilon+1)+1\big)+1\\
    &= (L+1)a+b+\epsilon+1-\big(2a-(L-2)(\epsilon+1)+1\big)+1\\
    &=(L-1)a+b+(L-1)(\epsilon+1)=(L-1)(a+\epsilon+1)+b
\end{align*}
Next, we show that $C$ is not $(r,L)$ list-decodable. To obtain the desired contradiction we consider the following vector  $y\in [q]^n$ 
$$y_i=\begin{cases}
c^1_i & i \in [a+\epsilon+1]\\
c^2_i & i\in [a+\epsilon+2,2(a+\epsilon+1)-L(\epsilon+1)]\\
c^j_i & i\in I_j \text{~and~} j\in[3,L+1]\\
w_{i-m} & i\in [m+1,n].
\end{cases}$$
%$$y_{[1,a+1]}=c^1_{[1,a+1]},~y_{[a+2,2a+2-L]}=c^2_{[a+2,2a+2-L]}, ~y_{I_j}=c^j_{I_j} \text{~for each~} j\in[3,L+1], \text{~and~} y_{[m+1,n]}=w.$$
 %Clearly $y$ is well-defined as $\{[1,a+1],[a+2,2a+2-L],I_3,\ldots,I_{L+1},[m+1,n]\}$ partitions $[n]$. 
 It is fairly straightforward to check that $c^1,\ldots,c^{L+1}\in B_{rn}(y)$, and we arrive at the desired contradiction.
\end{proof}

\vspace{0.2cm}
{\bf Linear codes:} \textcolor{black}{ By utilizing  the additional structure of linear codes which general codes might not possess,  we can prove that \emph{any} linear code that is list-recoverable  must have  good unique decoding properties, i.e., large minimum distance.  Furthermore, for  $\ell=1, L=1$, this result can also be seen  as 
 a generalization of  the fact that codes that are unique decodable  from relative radius $r$, have a Hamming distance greater than $2rn$ .}
\begin{theorem}\label{thm:linear-distance-recovery}
  For a prime power $q$, integers $1\leq \ell\leq q$, $\ell\leq L<\ell q$ and  $r\in [0,1-\frac{\ell}{L+1})$ with $rn\in\mathbb{N}$,  if $C\subseteq \mathbb{F}_q^n$  is a linear $(r,\ell,L)$ list-recoverable code of dimension at least $2$, then $d(C)> rn+\lfloor\frac{\ell}{L+1-\ell}rn\rfloor$.
  %it has a minimum Hamming distance at least $rn+\lfloor\frac{\ell}{L+1-\ell}rn\rfloor+1$.
\end{theorem}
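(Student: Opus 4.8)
The plan is to argue by contradiction: suppose $d(C)\le rn+t$ with $t:=\lfloor\frac{\ell}{L+1-\ell}rn\rfloor$, and build lists $S_1,\dots,S_n\in\binom{[q]}{\le\ell}$ for which at least $L+1$ codewords of $C$ lie within relative radius $r$, contradicting $(r,\ell,L)$ list-recoverability. Since $C$ is linear, $d(C)$ equals the minimum weight of a nonzero codeword, so I fix a nonzero $c\in C$ with $w:=\mathrm{wt}(c)=d(C)\le rn+t$, and record that $w\le rn+\frac{\ell}{L+1-\ell}rn=\frac{L+1}{L+1-\ell}rn$. Let $T:=\mathrm{supp}(c)$ (so $|T|=w$) and $\bar T:=[n]\setminus T$.

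Next I would choose the $L+1$ ``target'' codewords. Put $j:=\lceil (L+1)/q\rceil$; from $L<\ell q\le q^2$ we get $j\le\ell$ and $j\le q$, and since $\dim C\ge 2$ the line $\langle c\rangle$ has at least $|C|/q=q^{\dim C-1}\ge q\ge j$ distinct cosets contained in $C$. Choose representatives $d^{(1)}=0,d^{(2)},\dots,d^{(j)}$ of $j$ of these cosets; then $\bigl|\bigcup_{k\in[j]}(d^{(k)}+\langle c\rangle)\bigr|=jq\ge L+1$, so we may pick $L+1$ distinct codewords $z^1,\dots,z^{L+1}$ from this union, writing $z^p=d^{(k_p)}+\mu_p c$ with $k_p\in[j]$, $\mu_p\in\mathbb F_q$. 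On $\bar T$ the codeword $c$ vanishes, so $z^p_i=d^{(k_p)}_i$ there; hence setting $S_i:=\{d^{(1)}_i,\dots,d^{(j)}_i\}$ for $i\in\bar T$ gives $|S_i|\le j\le\ell$, and every target agrees with the list on all $n-w$ coordinates of $\bar T$.

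It then remains to define $S_i$ for $i\in T$ so that each $z^p$ disagrees with $(S_i)_{i\in[n]}$ on at most $rn$ coordinates; as the $\bar T$-coordinates are already free, it suffices that each $z^p$ agrees with $(S_i)_{i\in T}$ on at least $w-rn$ coordinates of $T$. I would achieve this by a balanced round-robin: form the cyclic sequence $z^1,z^2,\dots,z^{L+1},z^1,z^2,\dots$ of length $\ell w$, cut it into $w$ consecutive blocks of size $\ell$ (the entries of a block are distinct since $\ell\le L<L+1$), put the $w$ blocks in bijection with the coordinates of $T$, and at the coordinate of block $b$ set $S_i$ to consist of the $\ell$ values $z^p_i$ of the targets $z^p$ in block $b$ (so $|S_i|\le\ell$). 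Consecutive occurrences of a fixed target in the sequence are $L+1$ positions apart, which exceeds the block width $\ell$, so no block lists a target twice; hence $z^p$ is ``served'' at $\ge\lfloor \ell w/(L+1)\rfloor$ distinct coordinates of $T$. Finally, since $w-rn\in\mathbb Z$, the inequality $\lfloor \ell w/(L+1)\rfloor\ge w-rn$ is equivalent to $\ell w\ge(L+1)(w-rn)$, i.e.\ to $w\le\frac{L+1}{L+1-\ell}rn$, which holds. Thus $z^1,\dots,z^{L+1}$ all lie in the recovered list of $(S_1,\dots,S_n)$, the required contradiction.

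The main obstacle — and the reason the statement has $L<\ell q$ rather than $L<q$ — is that the $q$ scalar multiples of a single low-weight codeword do not yield $L+1$ codewords once $L\ge q$, so the targets must be distributed over up to $\lceil (L+1)/q\rceil\le\ell$ cosets of $\langle c\rangle$; this is exactly where $\dim C\ge 2$ and $L<\ell q$ enter, and one must still keep the lists on $\bar T$ of size $\le\ell$, which forces the coset count to be $\le\ell$. The second delicate point is making the coverage of $T$ balanced tightly enough to match the floor in $t=\lfloor\frac{\ell}{L+1-\ell}rn\rfloor$; the cyclic block construction wastes nothing, which is what makes the bound come out exactly. (Degenerate ranges are harmless: if $w\le rn$ the served-count requirement is vacuous, and if $r=0$ the contradiction hypothesis $d(C)\le 0$ is already impossible because $\dim C\ge 2$.)
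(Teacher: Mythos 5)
Your proof is correct and takes essentially the same route as the paper's: fix a minimum-weight nonzero codeword $c$, use $\dim C\ge 2$ to pull more than $L$ codewords out of a few cosets of $\langle c\rangle$ (so that off $\mathrm{supp}(c)$ they agree up to lists of size at most $\ell$), and then distribute list entries across $\mathrm{supp}(c)$ by a balanced assignment so each of the chosen codewords is trapped within radius $rn$. The paper packages that last step as \cref{claim:L-is-max} (already proved for the list-recovery Singleton bound) rather than working it out inline, and it uses all $\ell q>L$ codewords $\lambda c + c^j$ across $\ell$ cosets where you use $\lceil(L+1)/q\rceil\le\ell$ cosets and select exactly $L+1$ codewords, but these are presentational differences, not a different argument.
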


\textcolor{black}{Note that it is easy to prove that  an $(r,\ell,L)$ list-recoverable linear code has minimum distance  $d(C)>rn$ if $L<q$. Otherwise, all the $q$ multiples of a minimum weight codeword would violate the list-recoverability of the code. %of of is easy, as $B_d(0)$ contains $q$ multiples of the smallest hamming weight codeword. 
Theorem \ref{thm:linear-distance-recovery} improves on  this observation by  utilizing the list-recovery property even further.}
\begin{proof}%[ proof of Theorem \ref{thm:linear-distance-recovery}]
Assume towards contradiction that $d(C)\le rn+\lfloor\frac{\ell}{L+1-\ell}rn\rfloor$, and  let $m:=rn+\lfloor\frac{\ell}{L+1-\ell}rn\rfloor$. Since $C$ is a  linear code, there exists a nonzero codeword  $c\in C$ with weight at most $m$, and assume without loss of generality that $c_{[m+1,n]}=0$. By the dimension of the code, let $c^1\ldots,c^{\ell-1}\in C$ be $\ell$ distinct codewords which are scalar multiples of each other, but  \emph{not} a scalar multiple of $c$, and let $c^\ell=0$. Next, note that any of the  $\ell q>L$ distinct codewords of the form $v=\lambda c+c^j$ for $\lambda\in \mathbb{F}, j\in [\ell]$ satisfy $v_i\in S_i:=\{c^j_i:j\in[\ell]\}$ for $i\in [m+1,n]$, where clearly $|S_i|\leq \ell$. This contradicts  Claim \ref{claim:L-is-max} with $B=S_{m+1}\times \cdots \times S_{n}$.
%The other option is that there is some $c^0\in C$ with $c_{[m+1,n]}\ne 0$. Take any $\ell$ codewords $c^1,..,c^\ell\in <c_0>$ with distinct projections $c^j_{m+1,n}$, and define the lists $S_i:=\{c^j_i:j\in[\ell]\}$ for $i\in [m+1,n]$ and define $B:=\prod\limits_{i\in[m+1,n]}S_i$ thus we have that $<c>+c^j\subseteq B$ for all $j$. But these are $\ell q>L$ different codewords so we also got a contradiction to Claim \ref{claim:L-is-max}. 
\end{proof}

The following proposition is a similar result to that of \cref{thm:linear-distance-recovery} but specialized for list-decodable codes and  without the constraint on the code's dimension to be at least  $2$. We omit its proof, as it is very similar to that of \cref{thm:linear-distance-recovery}.

%result but for list We remark that one can prove a similar result for linear $(r,L)$ list-decodable codes, without the constraint on the code's dimension to be at least  $2$. We omit the proof as it is very similar to that of \cref{thm:linear-distance-recovery}.

%We note that in the case of a stronger condition of $L<q$ the conclusion will hold for all linear codes regardless of the dimension of the code, proof of that is very similar and will be omited. 

\begin{proposition}\label{prop:linear-distance-decoding}
  For a prime power $q$, an integer $1\le L< q$ and  $r\in [0,\frac{L}{L+1})$ with $rn\in\mathbb{N}$, if $C\subseteq \mathbb{F}_q^n$ is a linear $(r,L)$ list-decodable code, then $d(C)> rn+\lfloor\frac{rn}{L}\rfloor$. 
  %it has a minimum Hamming distance at least $\lfloor\frac{L+1}{L}rn\rfloor+1$.
\end{proposition}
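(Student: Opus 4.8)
\textbf{Proof proposal for Proposition \ref{prop:linear-distance-decoding}.}

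The plan is to mimic the proof of Theorem \ref{thm:linear-distance-recovery} with $\ell=1$, but to remove the need for the dimension-$\ge 2$ hypothesis by using only scalar multiples of a single minimum-weight codeword together with the zero codeword. First I would assume towards a contradiction that $d(C)\le rn+\lfloor\frac{rn}{L}\rfloor$, and set $m:=rn+\lfloor\frac{rn}{L}\rfloor$. Since $C$ is linear, there is a nonzero codeword $c\in C$ of weight at most $m$; after a coordinate permutation we may assume $c_{[m+1,n]}=0$. This is the analogue of the first step of the proof of Theorem \ref{thm:linear-distance-recovery}, except that here we do not need any codeword outside the span of $c$.

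The key step is to produce $L+1$ distinct codewords lying in a common Hamming ball of radius $rn$ — equivalently, in the list-recovery language with $\ell=1$, to produce $L+1$ codewords agreeing coordinatewise with a fixed vector $y\in[q]^n$ on at least $n-rn$ coordinates. I would take the codewords $c^\lambda:=\lambda c$ for $L$ distinct nonzero scalars $\lambda\in\mathbb{F}_q$ (possible since $L<q$), together with $c^0:=0$; these are $L+1$ distinct codewords because $c\ne 0$. Each $c^\lambda$ is zero on $[m+1,n]$, and so is $c^0$. Now I invoke Claim \ref{claim:L-is-max} in the degenerate case $\ell=1$, $B=\{0\}^{\,n-m}$: since $n-m = n-rn-\lfloor rn/L\rfloor$ and one checks $m = rn+t$ with $t=\lfloor rn/L\rfloor = \lfloor\frac{1}{L}rn\rfloor$, the parameter constraint $(L+1)t\le \ell m = m$ of that claim reduces to $(L+1)\lfloor rn/L\rfloor\le rn+\lfloor rn/L\rfloor$, i.e. $L\lfloor rn/L\rfloor\le rn$, which holds trivially. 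Hence Claim \ref{claim:L-is-max} (applied to the single box $B=\{0\}^{n-m}$) already yields a contradiction: it asserts at most $L$ codewords have their last $n-m$ coordinates in $B$, but we exhibited $L+1$.

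Alternatively, to keep the argument self-contained one can just exhibit the vector $y$ directly: partition $[m]$ (which has size $rn+\lfloor rn/L\rfloor \ge (L+1)\lfloor rn/L\rfloor$ when $L\mid rn$, and similarly in general) appropriately and set $y$ to agree with $c^0=0$ on a block of size $rn$ and with each of the remaining $L$ nonzero multiples $\lambda c$ on a block of size $\lfloor rn/L\rfloor$ among the coordinates where $c$ (hence every $\lambda c$) is supported; this forces $d(y,\lambda c)\le m - \lfloor rn/L\rfloor = rn$ for each of the $L$ nonzero multiples and $d(y,0)\le rn$ as well. I expect the main (minor) obstacle to be bookkeeping with the floors: verifying that the relevant block sizes add up correctly whether or not $L\mid rn$, exactly the kind of case split already carried out in Lemma \ref{lemma:strong-linear-bound}. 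Since none of this needs a second independent direction in the code, the dimension-$1$ case is covered automatically, which is precisely the advantage over Theorem \ref{thm:linear-distance-recovery}.
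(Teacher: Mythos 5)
Your proof is correct and follows exactly the route the paper intends (the paper omits this proof as ``very similar to'' Theorem~\ref{thm:linear-distance-recovery}): specialize to $\ell=1$, take the $q>L$ scalar multiples of a minimum-weight codeword, and apply Claim~\ref{claim:L-is-max} with $B=\{0\}^{n-m}$; your observation that the dimension-$\ge 2$ hypothesis is no longer needed is precisely why the paper can drop it here. One small nit in your optional self-contained variant: you should give \emph{all} $L+1$ codewords (including $c^0=0$) blocks of size $\lfloor rn/L\rfloor$, which fit in $[m]$ because $(L+1)\lfloor rn/L\rfloor\le rn+\lfloor rn/L\rfloor=m$ and already yield $d(y,\lambda c)\le m-\lfloor rn/L\rfloor=rn$; a block of size $rn$ for $c^0$ together with $L$ blocks of size $\lfloor rn/L\rfloor$ would overflow $[m]$ whenever $L>1$ and $\lfloor rn/L\rfloor>0$.
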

 We want to emphasize that the lower bound on the minimum distance given by \cref{prop:linear-distance-decoding} is tight for certain parameters. Indeed, in \cite{shangguan2019combinatorial} it was shown that over sufficiently large finite fields, a positive integer  $rn,~L=2,3$ and $L\mid rn$ there exist an  $[n,n-\frac{L+1}{L}rn]$-RS codes, which are $(r,L)$ list-decodable, and whose minimum distance is $rn+\frac{rn}{L}+1$, which attains the lower bound of \cref{prop:linear-distance-decoding}. 
 
 Proposition \ref{prop:linear-distance-decoding} was also implicitly proved by Roth recently (See  \cite[Theorem 3]{roth2021higherorder}). Specifically,  he shows that an $(r,L)$ list-decodable code of dimension $n-\lfloor\frac{L+1}{L}rn\rfloor$ is an  MDS code. This result  follows directly from  Proposition \ref{prop:linear-distance-decoding}, as shown in the next Corollary, 
% implies that lower bound of the minimum distance given by \cref{prop:linear-distance-decoding} is optimal for those parameters.
%Corollary \ref{cor:linear-distance-decoding} is Theorem \ref{thm:linear-distance-recovery} applied in the case of list-decoding. 
%Recall that  The constructions of in \cite{shangguan2019combinatorial} implies that lower bound of the minimum distance given by \cref{prop:linear-distance-decoding} is optimal for those parameters.
%The following corollary follows directly from \cref{prop:linear-distance-decoding} and 
 which  shows that a linear code that attains with equality the bound on the dimension  given in \cref{prop:linear-bound} has to be a near MDS or an MDS code. 
\begin{corollary}\label{cor:linear-bound-achiver-is-MDS}
  For a prime power $q$, an integer $1\le L< q$ and $r\in [0,\frac{L}{L+1})$ satisfying $rn\in\mathbb{N}$, any $[n,k]$ linear $(r,L)$ list-decodable code whose dimension  attains the bound in \cref{prop:linear-bound} with equality, i.e., $k= n-\lceil\frac{L+1}{L}rn\rceil$, has distance at least $n-k$. Furthermore,  if $L|rn$ the distance is at least $n-k+1$, hence it is an MDS code. 
\end{corollary}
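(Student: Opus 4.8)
The plan is to read off the statement directly from Proposition \ref{prop:linear-distance-decoding} together with the elementary evaluation of $\lceil\frac{L+1}{L}rn\rceil$ in the two cases $L\mid rn$ and $L\nmid rn$ that already appeared in the proof of Lemma \ref{lemma:strong-linear-bound}. First I would write $rn = La+b$ with $0\le b\le L-1$, so that $\lfloor rn/L\rfloor = a$ and, as computed there,
$$
\Big\lceil\frac{L+1}{L}rn\Big\rceil =
\begin{cases}
rn+a+1 & \text{if } L\nmid rn,\\[2pt]
rn+a & \text{if } L\mid rn.
\end{cases}
$$
Since the code has dimension exactly $k=n-\lceil\frac{L+1}{L}rn\rceil$, this gives $n-k = rn+a+1$ when $L\nmid rn$ and $n-k=rn+a$ when $L\mid rn$.

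Next I would invoke Proposition \ref{prop:linear-distance-decoding}: a linear $(r,L)$ list-decodable code over $\mathbb{F}_q$ with $L<q$ has $d(C) > rn + \lfloor rn/L\rfloor = rn+a$, i.e. $d(C)\ge rn+a+1$. In the case $L\nmid rn$ this says $d(C)\ge rn+a+1 = n-k$, which is the first assertion. In the case $L\mid rn$ we have $n-k = rn+a$, and the strict inequality $d(C)\ge rn+a+1 = n-k+1$ shows the code exceeds the Singleton bound $d\le n-k+1$ in the other direction, hence attains it with equality: $d(C)=n-k+1$, so $C$ is MDS. (One should also note $d(C)\le n-k+1$ always holds by the classical Singleton bound, which pins the value down exactly.)

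There is essentially no obstacle here — the corollary is a bookkeeping consequence of the already-established proposition. The only point requiring a line of care is the case split on divisibility and making sure the floor/ceiling arithmetic lines up so that the bound "$d(C)>rn+\lfloor rn/L\rfloor$" translates into "$d\ge n-k$" (resp. "$d\ge n-k+1$"); this is exactly the computation of $\lceil\frac{L+1}{L}rn\rceil$ reproduced above. I would also remark, as the paragraph following the corollary in the text does, that the hypothesis $L<q$ is inherited from Proposition \ref{prop:linear-distance-decoding}, and that in the divisible case the conclusion recovers Roth's result \cite[Theorem 3]{roth2021higherorder} that such dimension-optimal list-decodable linear codes are MDS.
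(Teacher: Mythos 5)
Your proof is correct and takes the same route as the paper, which simply states the corollary is a direct consequence of Proposition~\ref{prop:linear-distance-decoding}; you have just spelled out the floor/ceiling bookkeeping that the paper leaves implicit. The arithmetic $\lceil\frac{L+1}{L}rn\rceil=rn+\lfloor rn/L\rfloor+1$ when $L\nmid rn$ (and $=rn+\lfloor rn/L\rfloor$ when $L\mid rn$) lines up exactly with the bound $d(C)>rn+\lfloor rn/L\rfloor$, and the closing appeal to the classical Singleton bound to pin $d=n-k+1$ in the divisible case is the right finishing touch.
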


%Note that a linear code $C$ can be this large only when $rn\equiv 0 \pmod L$ by Proposition \ref{prop:linear-bound}. Which is exactly the case considered in the construction of Reed-Solomon codes in \cite{shangguan2019combinatorial}.

\begin{proof}
This is a direct consequence of \cref{prop:linear-distance-decoding}.
%Using Corollary \ref{ccor:linear-distance-decoding} we find that for a $[n,n-\lfloor\frac{L+1}{L}rn\rfloor]_q$ linear code $C$ which is $(r,L)$ list-decodable, thats its minimum Hamming distance must be at least  $$\lfloor\frac{L+1}{L}rn\rfloor+1=n-(n-\lfloor\frac{L+1}{L}rn\rfloor)+1.$$  Hence it is MDS.
\end{proof}

\section*{Acknowledgements}
The  research  of Eitan Goldberg and Itzhak Tamo is partially supported by the European Research Council (ERC grant number 852953) and by the Israel Science Foundation (ISF grant number 1030/15).

The  research  of Chong Shangguan is partially supported by the National Key Research and Development Program of China under Grant No. 2020YFA0712100, the National Natural Science Foundation of China under Grant No. 12101364, the Natural Science Foundation of Shandong Province under Grant
No. ZR2021QA005, and the Qilu Scholar Program of Shandong University. 

{\small\bibliographystyle{alpha}
\bibliography{new,eitannew}}

\end{document}